\theoremstyle{definition}
\newtheorem{definition}{Definition}[section]
\newtheorem{theorem}{Theorem}[section]
\newtheorem{corollary}{Corollary}[section]
\newtheorem{proposition}{Proposition}[section]
\newtheorem{example}{Example}[section]
\DeclareMathSymbol{\shortminus}{\mathbin}{AMSa}{"39}
\title{\boldmath Superconformal Quantum Mechanics on K\"ahler Cones}
\author{Nick Dorey,}
\author{Daniel Zhang}
\affiliation{Department of Applied Mathematics and Theoretical Physics, University of Cambridge\\Cambridge, CB3 0WA, UK}
\emailAdd{n.dorey@damtp.cam.ac.uk}
\emailAdd{d.zhang@damtp.cam.ac.uk}
\abstract{We consider supersymmetric quantum mechanics on a K\"{a}hler cone, regulated via a suitable resolution of the conical singularity. The unresolved space has a $\mathfrak{u}(1,1|2)$ superconformal symmetry and we propose the existence of an associated quantum mechanical theory with a discrete spectrum consisting of unitary, lowest weight representations of this algebra. We define a corresponding superconformal index and compute it for a wide range of examples. }
\begin{document} 
\maketitle
\flushbottom

\section{Introduction}
Quantum mechanical models with an $SO(2,1)$ conformal symmetry \cite{deAlfaro}, and their superconformal extensions \cite{fubini}, are potentially of great interest due to their possible relation to $AdS_{2}$ holography. Concretely, many families of supersymmetric compactifications of string/M-theory to $AdS_{2}$ are known, see for example \cite{GKgeometry}, and it is natural to search for dual superconformal theories in one dimension. In this paper we will discuss the more basic question of formulating such theories together with the problem of defining and calculating suitable observables.\\

One of the simplest ways to achieve $SO(2,1)$ conformal invariance is to consider the motion of a particle on a manifold with a homothetic Killing vector or \textit{homothety} \cite{Michelson}. Here we will consider the case of a K\"{a}hler manifold with a holomorphic homothety or, equivalently, a {\em K\"{a}hler cone}. This builds on an earlier study of superconformal quantum mechanics on hyperK\"{a}hler cones by one of the authors \cite{doreybarns-graham}. Of course, any non-trivial conical geometry is singular and to define a sensible model it is necessary to work instead with a smooth resolution of the singular space. A conformal quantum mechanics might then be obtained by taking a suitable limit of the resolved space. In fact, there is at least one case where there are good reasons to believe that such a theory should exist. In \cite{ABS}, a conformally invariant quantum mechanics on the moduli space of Yang-Mills instantons (which is a hyperK\"{a}hler cone) was formulated and it was argued to provide a DLCQ formulation of the six-dimensional $(2,0)$-theory. In this context, the resolution of the conical singularity associated with small instantons was indeed interpreted as a UV regulator for the theory. In this paper we will adapt this viewpoint to a much wider class of models. Our main results are described in the rest of this introductory section. \\

Our starting point is a general K\"{a}hler cone $X$ with complex structure $I$, and holomorphic homothety\footnote{Equivalently, we have a cone over a \textit{Sasakian} space.} $D$. Any such space also has a canonical holomorphic isometry $D^{I}=I(D)$ associated with the \textit{Reeb vector field}. $X$ may also have additional holomorphic isometries with commuting generators $\{\mathcal{J}_{i}\}$. If so then, for a given complex structure $I$ on $X$, there could be infinitely many conical K\"{a}hler metrics corresponding to different choices of Reeb vector, which must lie in the $\textit{Reeb cone}$ as we shall describe later. This setting is familiar from the work of Martelli, Sparks and Yau (e.g. in \cite{sparksmartelliyauvolume}), who focused in particular on the case where $X$ obeys the Calabi-Yau condition and admits a unique Ricci-flat metric.\\

We begin by constructing an action of the superconformal algebra $\mathfrak{u}(1,1|2)$ on the space of differential forms, which is identified with the Hilbert space of supersymmetric quantum mechanics on $X$ in the usual way\footnote{Note however, as reviewed in section \ref{sectiongeometry} below, the inner product appropriate for superconformal quantum mechanics differs from the one for standard supersymmetric quantum mechanics. In particular, this difference is responsible for the existence of a discrete spectrum for the dilatation operator.} (this will require regularisation). The bosonic subalgebra is: 
\begin{equation}\label{Introbosonicsubalgebra}
\mathfrak{g}_{B} = \mathfrak{su}(1,1) \oplus \mathfrak{su}(2) \oplus \mathfrak{u}(1)_{R^I} \oplus \mathfrak{u}(1)_{D^I} 
\end{equation} 
where $\mathfrak{su}(1,1)\simeq\mathfrak{so}(2,1)$ is the conformal algebra, with Cartan generator $D$ realised geometrically as the Lie derivative with respect to the homothety. The $\mathfrak{su}(2)$ subalgebra is a nonabelian R-symmetry, with Cartan generator $J_3$ corresponding to the usual Lefschetz action on forms on a K\"{a}hler manifold. The factor $\mathfrak{u}(1)_{D^I}$ lies in the centre of the algebra and the generator $D^{I}$ corresponds to the Lie derivative with respect to the Reeb vector field. Importantly, as it is central in the superconformal algebra, $\mathfrak{u}(1)_{D^{I}}$ can mix with global symmetries. There is also an additional $\mathfrak{u}(1)_{R^I}$ factor with generator $R^{I}$ which is related to the difference $\frac{1}{2}(p-q)$ for forms of bidegree $(p,q)$. In the following the eigenvalues of the Cartan generators $\{D,J,D^{I},R^{I}\}$ of $\mathfrak{g}_{B}$ are denoted $\{\Delta,j,d,r\}$. Finally the algebra is completed by four supercharges $Q$ of positive dimension and four supercharges $S$ of negative dimension. \\%$(2\otimes2)_{+} \oplus(2\otimes2)_{-} $ of $\mathfrak{g}_{B}$.

Our main hypothesis is that, for each K\"{a}hler cone $X$, there is an associated $\mathfrak{u}(1,1|2)$ superconformal quantum mechanics with a discrete spectrum consisting of unitary representations of this algebra.
As for superconformal algebras in higher dimensions, unitary representations of $\mathfrak{u}(1,1|2)$ can be classified according to their lowest weight $(\Delta,j,d,r)$. Lowest weights saturating the BPS bound $\Delta\geq 2j+d$ lead to "short" representations with (superconformal) primary states annhilated by some of the supercharges\footnote{Note that the supercharges $S$ annihilate all primary states.} Q. As the parameters of the theory vary, short representations can combine together to form long representations whose dimension can then be lifted above the bound. Following \cite{Romelsberger:2005eg, 4dscindex}, we can define a superconformal index which remains invariant (in a way to be made precise) under SUSY-preserving deformations of the theory. Picking conjugate supercharges $q$ and $s$ we define a Hamiltonian $\mathcal{H}$ as:
 \begin{equation}
\{q,s\} = \mathcal{H} = \frac{1}{2}\left(\mathbb{L}_0 + 2J_3-D^{I}\right) 
\end{equation} 
$\mathcal{H}$ has eigenvalues $E = \frac{1}{2}(\Delta - 2j - d)$. The superconformal index is then defined as:
\begin{equation}\label{Introscindex}
\mathcal{Z}_{X}(\tau,\tilde{y},Z) = \mbox{Tr} \left[(-1)^{F} e^{-\beta \mathcal{H}} \tau^{D^{I}} \tilde{y\textit{}}^{J_3+R^I} \prod_{i}z_{i}^{\mathcal{J}_i}\right]
\end{equation} 
The index is graded by the Cartan generators $\{D^{I},J_{3}+R^I\}$ of the $\mathfrak{u}(1|1)$ subalgebra of $\mathfrak{u}(1,1|2)$ which commutes with $q$ and $s$. The index is therefore a function of the corresponding fugacities $\tau$ and $\tilde{y}$ as well as the fugacities $\{z_{i}\}$ for the global symmetries $\{\mathcal{J}_{i}\}$. By construction, long representations of $ \mathfrak{u}(1,1|2)$ have vanishing contribution to the index while each short representation $R$ contributes a specific character of $\mathfrak{u}(1|1)$.\\

In order to properly define quantum mechanics on the K\"{a}hler cone $X$, we need to regulate the theory by considering a suitable resolution $\pi : \tilde{X}\rightarrow X$ of the singularity. If we want to define a regulated version of the index on the resulting smooth space then we need an equivariant resolution where the holomorphic isometries of $X$ are preserved on $\tilde{X}$. For \textit{regular} K\"{a}hler cones such a resolution has been shown to always exist by Martelli, Sparks and Yau \cite{sparksmartelliyauvolume}. There are also many other examples of cones (including those which are \textit{irregular} and \textit{quasi-regular}) of physical interest, such as Nakajima quiver varieties and toric varieties, for which equivariant resolutions have also been shown to exist.  On the resolved space one may then consider the trace corresponding to (\ref{Introscindex}), but now evaluated on forms on $\tilde{X}$. The resulting formula for the index is:
\begin{equation}\label{Introregularised index}
\mathcal{Z}_{\tilde{X}}(\tau,\tilde{y},Z) = \sum^{d_{\mathbb{C}}}_{p,q = 0} (-)^{p+q-d_{\mathbb{C}}}\, \tilde{y}^{p-d_{\mathbb{C}}/2}\, \mbox{Tr}_{H^{q}(\tilde{X}; A^{p}(\tilde{X}))}\left(\tau^{D^{I}}\prod_{i}z_i^{\mathcal{J}_i}\right) 
\end{equation} 
which coincides with the equivariant Hirzebruch $\chi_{s}$ genus (with $s=-\tilde{y}$) computed in the equivariant sheaf cohomology of $\tilde{X}$. As we discuss below, this invariant encodes both holomorphic and topological data of $\tilde{X}$. In particular, different limits of the index reduce to the Hilbert series counting holomorphic functions on the K\"{a}hler cone $X$, a series counting holomorphic sections of the canonical sheaf, and to the Poincar\'e polynomial of the preimage of the singularity.\\

Provided that the torus action $T$ associated with the holomorphic isometries $\{D^{I}, \mathcal{J}_{i}\}$ has a finite set $\tilde{X}^{T}$ of isolated fixed points, the index can then be computed by standard localisation theorems as: 
\begin{equation}\label{Introscindexlocalisationformula} 
\mathcal{Z}_{\tilde{X}}(\tau,\tilde{y},Z) =  \sum^{d_{\mathbb{C}}}_{p = 0} (-)^{p-d_{\mathbb{C}}}\, \tilde{y}^{p-d_{\mathbb{C}}/2}\sum_{x \in \tilde{X}^T} \mbox{ch}_T\left(\Lambda^{p}T^{*}_{x}, \tau, Z\right) \mbox{PE}\left[\mbox{ch}_T\left(T^{*}_{x}, \tau, Z\right)\right] 
\end{equation} 
where PE denotes a plethystic exponential and ch$_{T}$ denotes a character of the torus action evaluated on the tangent space to each fixed point.\\

Comparing the fixed-point formula (\ref{Introscindexlocalisationformula}) to the evaluation of the index (\ref{Introscindex}) on a generic spectrum of $\mathfrak{u}(1,1|2)$ representations yields detailed (but incomplete) information about the multiplicities of short and (semi-)short representations in each representation of the global symmetry. In particular, as in higher dimensional SCFTs, we find that there are certain protected multiplets which cannot be lifted. Comparison of the index evaluated on the spectrum with (\ref{Introregularised index}) shows that these are in one to one correspondence with holomorphic sections of the canonical bundle on the resolved space $\tilde{X}$. The geometric formula for the index also predicts the presence of "ground-state" representations whose primary states have $\Delta=2j$ and thus $d=0$. Interestingly these include special one-dimensional representations with $\Delta=j=d=0$ (but non-zero $r$), which are in particular singlets under the $SO(2,1)$ conformal group. The presence of such singlet representations is potentially important for $AdS_{2}$ holography. A singlet ground state is required in which to evaluate correlation functions analogous to those of higher dimensional CFTs corresponding to boundary insertions in Poincar\'e coordinates on $AdS$. \\

The index in the ground-state sector may be computed by taking the $\tau \rightarrow 0$ limit of the index. We show that the resulting ground-state index is equal (as a polynomial in $\tilde{y}$) to the Poincar\'e polynomial of the "core" of the resolved space. By this we mean the preimage of the singular point $\{o\}$ of $X$ under the resolution map; $\pi: \tilde{X}\rightarrow X$, i.e. $=\pi ^{-1}(\{o\})$. As this is a polynomial with positive coefficients, it provides a lower bound on the degeneracy of states saturating the bound $\Delta=2j$. Note however, that in general knowledge of the index alone is not sufficient to disentangle the singlets from the other ground-state representations. \\

Comparison between the algebraic and geometric formulae for the index, (\ref{Introscindex}) and (\ref{Introscindexlocalisationformula}), also yields some non-trivial consistency checks on our construction. First, for any K\"{a}hler cone, the superconformal index is locally independent of the resolution parameters. This is because it coincides with a holomorphic index on $\tilde{X}$. In the special case of toric Calabi-Yau 3-folds, we can also show that it is also invariant under wall crossing. For other cases, we show various limits of the index are invariant under choice of resolution. A particularly nice set of examples are Ricci-flat Calabi-Yau cones and their equivariant crepant resolutions. The superconformal index of these have an additional symmetry associated to the existence of a nowhere vanishing $(d_{\mathbb{C}},0)$ form, where $d_{\mathbb{C}}$ is the complex dimension of $X$. These results are consistent with our proposal of a $\mathfrak{u}(1,1|2)$ invariant theory associated to the underlying singular cone $X$.\\

Our study of the index also reveals interesting links to the work of Martelli, Sparks and Yau \cite{sparksmartelliyauvolume}. Variations of the Reeb vector correspond in superconformal quantum mechanics to mixing of the $\mathfrak{u}(1)_{D^I}$ R-symmetry with the global symmetries. This is implemented in the index precisely by a suitable rescaling of the fugacities. For the space of K\"ahler cones they consider, the $\tilde{y}\rightarrow 0$ of the index coincides with the Hilbert series of the singular K\"{a}hler cone $X$. Its resulting asymptotic behaviour in the limit of large charges is controlled by the volume of the corresponding Sasaki manifold. In the Calabi-Yau case, the unique Ricci flat metric is known to correspond to the stationary point of the volume. Thus the asymptotic growth of the index is maximised in the Calabi-Yau case.\\

In the body of the paper, we construct lowest weight, unitary, irreducible representations of $\mathfrak{u}(1,1|2)$, define the index and calculate it using equation (\ref{Introscindexlocalisationformula}) for a wide variety of examples in which the fixed point data is available. The case of toric cones is particularly tractable as the index can be expressed in terms of the toric data. We provide explicit calculations for low-dimensional examples such as the conifold and the $Y^{p,q}$ geometries. Special cases of particular interest include cones which satisfy the Calabi-Yau condition and admit a Ricci-flat metric. Another broad class of examples is provided by Nakajima quiver varieties and their $\mathbb{C}^{*}$ fixed subvarieties, such as the handsaw quiver variety of recent interest. We derive the Reeb cone explicitly for A-type quiver varieties.\\

Finally we note that models of the type we consider here arise in different physical contexts. First K\"{a}hler cones are ubiquitous as the Higgs branches of gauge theories with four supercharges and vanishing mass and FI parameters. The quantum mechanical $\sigma$-models of the type described above can thus arise via supersymmetry-preserving compactifications of higher-dimensional gauge theories. Indeed it is natural to conjecture that the $\mathfrak{u}(1,1|2)$ superconformal quantum mechanics described here is the endpoint of a corresponding RG flow across dimensions. A second, but related, context for these models is as the moduli-space quantum mechanics of solitons in scale-invariant gauge theories. In fact, in a future work with Samuel Crew \cite{crewdoreyzhang} we show that the vortex partition function of the 3d $\mathcal{N}=4$  $T_{\rho}(SU(N))$ gauge theory is generated by superconformal indices of quantum mechanics on handsaw quiver varieties, which are its vortex moduli spaces. Via these relations to SUSY gauge theory, the models considered here also have natural embeddings in string theory as the world-volume theories of D-branes. These are, in turn, a promising starting point for investigating possible holographic duals of superconformal quantum mechanics. We note however, that most of the supersymmetric $AdS_{2}$ geometries discussed in the recent literature correspond to quantum mechanical systems with $\mathcal{N}=(0,2)$ supersymmetry rather than the $\mathcal{N}=(2,2)$ case studied here. For this reason, it would be very interesting to extend our approach to investigate superconformal extensions of one-dimensional $\sigma$-models with $\mathcal{N}=(0,2)$ supersymmetry. We hope to return to these questions in a future publication.

\section{Superconformal Quantum Mechanics on K\"ahler Cones}
Following \cite{singletonexterioralgebra, doreysingleton, doreybarns-graham} we study the standard supersymmetric $\sigma$-model quantum mechanics on a Riemannian manifold $(X,g)$, with action given by:
\begin{equation}\label{nonlinearsigmamodel}
	S = \int dt \,\, \frac{1}{2}g_{\mu\nu}(X)\dot{X}^{\mu}\dot{X}^{\nu} + i g_{\mu\nu}(X)\psi^{\dagger\mu}\frac{D}{Dt}\psi^{\nu}-\frac{1}{4}R_{\mu\nu\rho\sigma}(X)\psi^{\dagger\mu}\psi^{\dagger\nu}\psi^{\rho}\psi^{\sigma}
\end{equation}
Here $\psi^\mu$ are sections of the odd cotangent bundle, and $\frac{D}{Dt} \psi^\mu= \nabla_{\dot{X}}\psi^\mu = \dot{\psi}^{\mu}+ \dot{X}^{\nu}\Gamma^{\mu}_{\nu\rho}\psi^{\rho}$ the induced covariant derivative. It is shown in \cite{singletonexterioralgebra} that this is invariant under $\mathcal{N} = (1,1)$ supersymmetry transformations, enlarged to $\mathcal{N} = (2,2)$ when $X$ is K\"ahler. We now specialise to this case. \\

Suppose that in addition $(X,g)$ admits a holomorphic closed \textit{homothety} $D$, i.e. a holomorphic vector field $D$ satisfying:
\begin{equation}\label{homothety}
	\mathcal{L}_{D}g=2g, \qquad \mathcal{L}_{D}K=2K, \qquad D_{\mu} = \partial_{\mu}K
\end{equation}
where K is the K\"ahler potential, one obtains $\mathfrak{so}(2,1)$ conformal algebra (generated by $\mathbb{D}$ and $\mathbb{K}$ associated to the vector field $D$ and scalar function $K$ respectively, and the Hamiltonian $\mathbb{H}$). Due to the holomorphy of $D$ and the fact that $K$ is the K\"ahler potential, this combines with the $\mathcal{N}=(2,2)$ algebra to give a $\mathfrak{u}(1,1|2)$ \textit{superconformal} algebra. For details, including a list of generators and the canonical quantisation with Hilbert space $\Omega^*(X,\mathbb{C})$ (the exterior algebra on $X$), see appendix \ref{appendixA}. For their full derivation and the transformations of the fundamental fields see \cite{andrewthesis}. \\

The conditions (\ref{homothety}) imply that $D^{\mu}_{\,\,;\nu} = \delta^{\mu}_{\,\,\nu}$. It was shown in \cite{gibbonsrychenkova} that this implies that $X$ is a cone\footnote{Whether or not we include the singularity $r=0$, denoted $\{o\}$, when discussing $X$ will be made clear from context in this work.} $C(Y)$ over a base manifold $Y$ (which is then by definition \textit{Sasaki}), i.e. that we can write:
\begin{equation}\label{kahlercone}
	g_{\mu\nu} dx^{\mu}dx^{\nu} = dr^2 + r^2 h_{ij}(\{x\})dx^{i}dx^{j} \qquad r \in \mathbb{R}_{+}
\end{equation}
where $\mu,\nu = 1,...,\mbox{dim}_{\mathbb{R}}(X)$ and $i,j = 1,...,\mbox{dim}_{\mathbb{R}}(X)-1$. In these coordinates:
\begin{equation}
	D = r\frac{\partial}{\partial r} \,\,,\quad K = \frac{1}{2} \,r^2
\end{equation}
In fact it is easy to see that a K\"ahler cone given by (\ref{kahlercone}), obeys all the conditions of (\ref{homothety}) with K\"ahler potential  $K = \frac{1}{2} \,r^2$ and hence defines an $\mathcal{N} = (2,2)$ superconformal quantum mechanics. It is necessary and sufficient then to look at K\"ahler cones. Such manifolds have a canonically defined holomorphic isometry generated by the \textit{Reeb vector}:
\begin{equation}
	D^{I}  = I(D)
\end{equation}
where $I$ is the complex structure on $X$.\\

 In the case when $Y$ is smooth and compact, it is known that $X=C(Y)$ is an affine variety \cite{orneaverbitsky}. The singularity $\{o\}$ in this case is isolated. In \cite{collinsthesis}, it is then shown that a K\"ahler cone $X$ over such a Sasakian base can be described as an affine scheme defined by ideals homogeneous under the action\footnote{This is a complex torus. The action of the complex torus is specified by that of the real torus action via multiplying induced vector fields by the complex structure $I$. Their fixed points coincide. Hereinafter we always refer to the action of the complexified torus, i.e. whenever there is a $U(1)$ action we always consider its complexification $\mathbb{C}^*$.} of a torus $T \equiv \mathbb{T}^s \equiv ({\mathbb{C}^*})^s$, such that the Reeb vector $D^{I} \in \mathfrak{\mathfrak{t}_s}$ (the Lie algebra of $T$) acts with positive weights on the non-constant holomorphic functions on $X$, and weight $0$ on the constant functions. The set of elements of $\mathfrak{t}_s$ which satisfy this define the $\textit{Reeb cone}$, which is a convex rational polyhedral cone as described in \cite{collinsthesis, sparksmartelliyauvolume}. We extend this definition of the Reeb cone to the case when $Y$ is potentially singular, but $X$ is still an affine scheme with a holomorphic torus action. In this case, there may be singular subspaces which intersect the tip $\{o\}$ of $X$. That is, we define the Reeb cone in this case to still be the subset of $\mathfrak{t}_s$ (the Lie algebra of a torus of isometries) under which the non-constant holomorphic functions are graded positively. Since the ring of global sections of the structure sheaf is finitely generated, the Reeb cone is also a convex rational polyhedral cone in this case \cite{collinsthesis}. In particular, this ensures that, choosing a Reeb vector in the Reeb cone, under the dilatation (which is the complexification of the Reeb vector) all points in $X$ contract to the tip of the cone.  The cases we consider where the Sasakian link $Y$ could be singular are quiver varieties. As we shall see, for these K\"ahler cones, the Reeb vector corresponding to the canonical metrics on them lie in the Reeb cone. \\

Our main hypothesis is that to each K\"ahler cone there is an associated $\mathfrak{u}(1,1|2)$ superconformal quantum mechanics. The predominant issue is that such spaces are singular (except when Y is the sphere). The space of forms arising in the canonical quantisation of the $\sigma$-model cannot be defined at singularities, and so such theories require regularisation. We will propose here that there is a quantity, the superconformal index, to which a regulated definition can be associated after resolving the singular cone by an equivariant resolution. The resolution preserves the algebra corresponding to the stabiliser of the BPS/unitary bound of the full superconformal quantum mechanics. We show that in many cases the index is independent of resolution and contains information about the spectrum of $\mathfrak{u}(1,1|2)$ multiplets on the cone.

\section{The Superconformal Index}
\subsection{Representation theory of $\mathfrak{u}(1,1|2)$}
The spectrum of the quantum mechanics should consist of a set of positive-energy, irreducible, unitary representations of the superconformal algebra $\mathfrak{u}(1,1|2)$, which we classify in this section in the usual way for Lie superalgebras. The bosonic subalgebra of the superconformal algebra is (again see appendix \ref{appendixA} for full details):
\begin{equation}\label{bosonicsubalgebra}
	\mathfrak{g}_{B} = \mathfrak{su}(1,1) \oplus \mathfrak{su}(2) \oplus \mathfrak{u}(1)_{R^I} \oplus \mathfrak{u}(1)_{D^I}
\end{equation}
with Cartan generators $\mathbb{D}$, $J_3$, $R^{I}$ and $D^{I}$. Note that the same letters are used to denote $D$ and $D^{I}$ as the vector fields that generate their action. For convenience we perform a change of basis (see \cite{andrewthesis}) of the $\mathfrak{su}(1,1) \cong \mathfrak{sl}(2,\mathbb{R})$ via:
\begin{equation}
	Z \mapsto e^{-\mu \mathbb{K}}e^{\frac{1}{2}\mu^{-1}\mathbb{H}} Z e^{-\frac{1}{2}\mu^{-1}\mathbb{H}}  e^{\mu \mathbb{K}} \qquad \quad \mu \in (0,\infty)
\end{equation}
For ease of notation we set $\mu =1$ but all of the following analysis can be performed for general $\mu$. We obtain:
\begin{eqnarray}
	\begin{split}
		i\mathbb{D} &\mapsto \mathbb{L}_{0} = \mu^{-1}\left(\mathbb{H}+\mu^2 \mathbb{K}\right)\\
		\mathbb{H} &\mapsto 2\mu\mathbb{L}_{-} = \mu\left(\mu^{-1}\mathbb{H}-\mu\mathbb{K}-i\mathbb{D}\right)\\
		\mathbb{K} &\mapsto -\frac{1}{2\mu} \mathbb{L}_{+} = -\frac{1}{4\mu} \left(\mu^{-1}\mathbb{H}-\mu \mathbb{K} + i\mathbb{D}\right)
	\end{split}
\end{eqnarray}
such that:
\begin{equation}
	\mathbb{L}_{0}^{\dagger} = \mathbb{L}_{0} \qquad \mathbb{L}_{+}^{\dagger} = \mathbb{L}_{-} \qquad [ \mathbb{L}_{0},  \mathbb{L}_{\pm}] =  \pm 2  \mathbb{L}_{\pm} \qquad  [ \mathbb{L}_{+},  \mathbb{L}_{-}] =  -   \mathbb{L}_{0} 
\end{equation}
The same change of basis is performed on the fermionic generators, and then linear combinations are taken so that the generators are eigenvalues of the adjoint action of the Cartans of the bosonic subalgebra. 
\begin{equation}
	\begin{split}
		q^{1+} &= \frac{1}{2}\left(Q^{\dagger}+iS^{\dagger}+i{Q^{I}}^{\dagger}-{S^{I}}^{\dagger}\right) \\
		q^{1-} &= \frac{1}{2}\left(Q^{\dagger}+iS^{\dagger}-i{Q^{I}}^{\dagger}+{S^{I}}^{\dagger}\right)  \\
		q^{2+} &= \frac{1}{2}\left(Q^{I}+iS^{I}-iQ+S\right) \\
		q^{2-} &= \frac{1}{2}\left(Q^{I}+iS^{I}+iQ-S\right) 
	\end{split}
	\qquad
	\begin{split}
		s^{1+} = {\left(q^{1+}\right)}^{\dagger} &= \frac{1}{2}\left(Q-iS-iQ^{I}-S^{I}\right) \\
		s^{1-} = {\left(q^{1-}\right)}^{\dagger} &= \frac{1}{2}\left(Q-iS+iQ^{I}+S^{I}\right) \\
		s^{2+} = {\left(q^{1-}\right)}^{\dagger} &= \frac{1}{2}\left({Q^{I}}^{\dagger} - i{S^{I}}^{\dagger}+iQ^{\dagger}+S^{\dagger}\right) \\
		s^{2-} = {\left(q^{1-}\right)}^{\dagger} &= \frac{1}{2}\left({Q^{I}}^{\dagger} - i{S^{I}}^{\dagger}-iQ^{\dagger}-S^{\dagger}\right)
	\end{split}
\end{equation}

These generators transform in the $(2\otimes2)_{+} \oplus(2\otimes2)_{-} $ of $\mathfrak{g}_{B}$ where the $\mathfrak{sl}(2,\mathbb{R})$ doublets are (with charges $(+1,-1)$ under $\mathbb{L}_{0}$ respectively):
\begin{equation}
	(q^{1+}, s^{2-})\qquad(q^{1-}, s^{2+})\qquad(q^{2+}, s^{1-})\qquad(q^{2-}, s^{1+})
\end{equation}
The $\mathfrak{su}(2)$ doublets are (with charges $(+1/2,-1/2)$ under $J_3$ respectively):
\begin{equation}
	(q^{2+}, q^{1+})\qquad(q^{2-}, q^{1-})\qquad(s^{1+}, s^{2+})\qquad(s^{1-}, s^{2-})
\end{equation}
Also $q^{a\pm}$ has $U(1)_{R^I}$ charge $\pm\frac{1}2$ and $s^{a\pm}$ charge $\mp\frac{1}2$. They obey commutation relations:
\begin{equation} 
	\begin{split}
		\left\{ q^{ai}, s^{bj}\right\} &=  \delta^{ij}\delta^{ab}\mathbb{L}_{0}+2\delta^{ij}\mathcal{J}^{ab}-\delta^{ab}{\sigma_{3}}^{ij}D^{I}\\
		\left\{ q^{ai}, q^{bj}\right\} &= -2\epsilon^{ij}{\sigma_2}^{ab}\mathbb{L}_{+}\\
		\left\{ s^{ai}, s^{bj}\right\} &= +2\epsilon^{ij}{\sigma_2}^{ab}\mathbb{L}_{-}
	\end{split}
\end{equation}
Where $a,b \in \{1,2\}$,  $i, j \in \{+,-\}$,
$ \mathcal{J}^{ab} = \left( \begin{array}{cc}
J_3 & J_- \\
J_+ & \shortminus J_3 
\end{array} \right)$ and $ \epsilon^{ij} = \left( \begin{array}{cc}
0 & 1 \\
\shortminus 1 & 0 
\end{array} \right)$.\\

Unitary irreducible representations of $\mathfrak{g}_{B}$ are labelled by eigenvalues of the Cartans on the lowest weight state. The presence of the abelian summands does not affect this analysis since \textit{unitary} irreducible representations of these act by constant multiplication - unitary representations of $\mathfrak{u}(1)$ are represented by hermitian or anti-hermitian operators depending on convention, which are always diagonalisable. Since all other generators in (\ref{bosonicsubalgebra}) commute with a given $\mathfrak{u}(1)$ generator, irreducible representations of $\mathfrak{g}_B$ are labelled by a single eigenvalue for each $\mathfrak{u}(1)$.\\

Unitary irreducible representations of the full superconformal algebra $\mathfrak{u}(1,1|2)$ are in particular unitary representations of $\mathfrak{g}_{B}$ and therefore direct sums of the irreducible representations of $\mathfrak{g}_{B}$ mentioned above. In particular we can restrict ourselves to considering lowest weight irreducible representations of $\mathfrak{u}(1,1|2)$ corresponding to lowest weights of the bosonic subalgebra, because the lowering operators $s^{a\pm}$ lower the $\mathbb{L}_0$ eigenvalue. Such lowest weight unitary representations must contain the states formed from the action of $\mathfrak{g}$ inherited from the action on the full Hilbert space. Following \cite{andrewthesis} one can show that if the Verma module specified by a lowest weight state (with unit norm under a candidate inner product) contains no negative-norm states, then the maximal proper submodule consists of zero-norm states and can be quotiented out to give an unitary irreducible representation.\\

We therefore work with lowest weight representations, with superconformal primary state $v_{\lambda} \equiv \ket{\Delta, j, r, d}$ such that:
\begin{equation}
	\begin{split}
		\mathbb{L}_{0}\ket{\Delta, j, r, d}&= \Delta\ket{\Delta, j, r, d} \\
		J_{3}\ket{\Delta, j, r, d} &= -j \ket{\Delta, j, r, d}
	\end{split}
	\qquad\qquad
	\begin{split}
		R^I\ket{\Delta, j, r, d}&= r\ket{\Delta, j, r, d}  \\
		D^{I}\ket{\Delta, j, r, d} &= d \ket{\Delta, j, r, d}
	\end{split}
\end{equation} 
By definition $\ket{\Delta, j, r, d}$ is annihilated by all lowering operators of the algebra $\mathfrak{g}_{-}$ which we choose to be spanned by $\{s^{a\pm}, \mathbb{L_{-}}, J_{-}\}$:
\begin{equation}
	\mathfrak{g}_{-} \ket{\Delta, j, r, d} = 0
\end{equation}
Such a state exists since $\{s^{a\pm}\}$ lower the eigenvalue of $\mathbb{L}_{0}$, which is bounded below for all states in the theory. We choose $\Delta \geq 0$, $2j \in \mathbb{N}$ to ensure the module generated by the action of the bosonic subalgebra is unitary and irreducible, and we restrict to the case where $r \in \mathbb{Z}/2$ due to the fact that the generator $R^I$ descends from a full $U(1)_{R^I}$ group action, and its normalisation in the algebra.\\

In general however $d$ can be any real number, as it need not descend from a full $U(1)$ group action. $D^{I}$ corresponds to flow along the Reeb vector, whose orbits can either close, or not. If they all close then the Reeb vector induces a full $U(1)$ action on the K\"ahler cone $X$ (in fact the flow is solely along the Sasakian link Y), which is either locally free or free. Sasaki metrics corresponding to these cases are referred to as \textit{quasi-regular} and \textit{regular} respectively, and when $X$ is the metric cone over such a Sasaki metric the eigenvalues of $D^I$ will be integer-valued when its action is appropriately normalised, or more generally integer multiples of a constant. When the orbits of $D^{I}$ do not close, the Sasaki manifold is said to be \textit{irregular}, and the eigenvalues $d$ are unconstrained (although in all three cases the eigenvalues of $d$ are non-negative as we shall see later). We will assume that the spectrum of the quantum mechanics is discrete and therefore that the index we later define to still be well-defined in this case.\\

Note that when $X$ is a hyperK\"ahler cone the holomorphic isometry $D^{I}$ is given by a linear combination of Cartan generators lying in non-abelian subalgebras of the superconformal algebra \cite{doreybarns-graham}, and therefore its eigenvalues must be quantised. This corresponds to the fact that all hyperK\"ahler cones, whose link are by definition \textit{3-Sasakian}, are regular or quasi-regular when considered as K\"ahler cones due to the non-abelian $\mathfrak{su}(2)$ generated by the triplet of Reeb vector fields with respect to each complex structure. \\ 

One can show \cite{andrewthesis,dobrevpetkova} that necessary and sufficient conditions for unitarity of a given lowest weight representation are:
\begin{equation}
	|| \,({q^{1+}})^{n_1}({q^{1-}})^{n_2}({q^{2+}})^{n_3}({q^{2-}})^{n_4} \ket{\Delta, j, r, d} \,||^{2} \geq 0 \qquad\quad \forall  \, n_{i} \in \{0,1\}
\end{equation}
By explicit computation, the most stringent bounds occur at level 1, and are:
\begin{equation}\label{unitarybounds}
	\begin{split}
		|| \,{q^{1+}}\ket{\Delta, j, r, d}\,||^{2} &= \Delta - 2j - d\geq 0 \\
		|| \,{q^{1-}}\ket{\Delta, j, r, d} \,||^{2} &= \Delta - 2j + d\geq 0 
	\end{split}
	\qquad\qquad
	\begin{split}
		|| \,{q^{2+}}\ket{\Delta, j, r, d} \,||^{2} &= \Delta + 2j - d\geq 0   \\
		|| \,{q^{2-}}\ket{\Delta, j, r, d} \,||^{2} &= \Delta + 2j + d\geq 0 
	\end{split}
\end{equation} 
If there are no negative norm states then a unitary irreducible representation is obtained by quotienting out zero norm states. We thus obtain a relationship between conditions for representation shortening and the existence of BPS states: lowest weight states annihilated by 1 or more supercharges. We now restrict to the case $d \geq 0$, noting that an analogous analysis can be made in the $d \leq 0$ case. Our index will only receive contributions from states with $d \geq 0$. We will see that the $d \geq 0$ index corresponds to the $\bar{\partial}$-cohomology on $X$ (strictly $\tilde{X}$, see later). Performing the analogous computations for $d \leq0$ and defining the corresponding index can then easily be seen to correspond to the $\partial$-cohomology. 
\begin{proposition}\label{reptypes}
	The lowest weight unitary irreducible representations of $\mathfrak{u}(1,1|2)$ are of the following type:
	\begin{itemize}
		\item Long representations $L(\Delta, j, d, r)$: $\Delta > 2j + d$.
		\item $\frac{1}{4}$-BPS short representations $S_{1/4}(j,d,r)$: $j \neq 0$, $d\neq0$ and $\Delta = 2j+d$. Here $q^{1+}v_\lambda$ has zero norm and is quotiented out, or equivalently annihilated. 
		\item $\frac{1}{2}$-BPS short representations $S_{1/2}(j,r)$: $j\neq 0$, $d=0$ and $\Delta = 2j$. Here both $q^{1+}$ and $q^{1-}$ annihilate the ground state. 
		\item Special $\frac{1}{2}$-BPS short representations $S'_{1/2}(d,r)$: $j=0$, $d\neq0$ and $\Delta = d$. Here both $q^{1+}v_{\lambda}$ and $q^{2+}v_{\lambda}$ have zero norm but are not independent in the Verma module, since $J_{+}q^{1+}v_{\lambda} = -q^{2+}v_{\lambda}$ (as $J_+ v_{\lambda} = 0$ when $j=0$).
		\item Special maximally BPS short representations $S'_{1}(r)$: $j=d=0$, $r\neq0$. These are not the vacuum representation since $r\neq0$. Here all supercharges annihilate the lowest weight state, and so do all bosonic raising operators. Therefore we just obtain a rep of the $\mathfrak{u}(1)_{R^I}$.
	\end{itemize}
\end{proposition}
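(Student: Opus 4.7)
The plan is to enumerate the lowest weight unitary irreducible representations by cataloguing the allowed saturation patterns of the level-1 norm inequalities (\ref{unitarybounds}), and for each such pattern identifying the maximal proper submodule that must be quotiented out of the Verma module generated by the raising operators $\{q^{a\pm}, \mathbb{L}_+, J_+\}$ acting on $v_\lambda$. Throughout, the analysis restricts to $d \geq 0$, under which $\Delta - 2j + d \geq 0$ and $\Delta + 2j + d \geq 0$ follow from the remaining two bounds together with $j \geq 0$; hence only $\Delta \geq 2j + d$ and $\Delta \geq d - 2j$ can saturate nontrivially.

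First, if $\Delta > 2j + d$ then no level-1 descendant is null and the Verma module is itself the unitary irreducible long representation $L(\Delta, j, d, r)$. When $\Delta = 2j + d$ with $j, d > 0$ only $q^{1+} v_\lambda$ is null, and the submodule it generates under $\mathfrak{g}$ is proper — no other level-1 descendant shares its weights — so the quotient yields $S_{1/4}(j, d, r)$. Specialising to $d = 0$ with $j > 0$, both $\Delta - 2j \pm d$ vanish, so $q^{1+} v_\lambda$ and $q^{1-} v_\lambda$ are both null and are independent (opposite $R^I$ weight), giving $S_{1/2}(j, r)$.

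The $j = 0$ cases require more care. If $j = 0$, $d > 0$, $\Delta = d$, both $q^{1+} v_\lambda$ and $q^{2+} v_\lambda$ are null. The key observation is that they are not independent in the Verma module: from the $\mathfrak{su}(2)$ doublet assignment $(q^{2+}, q^{1+})$ one reads $[J_+, q^{1+}] \propto q^{2+}$, so using $J_+ v_\lambda = 0$ (valid because $j = 0$) one gets $J_+ q^{1+} v_\lambda \propto q^{2+} v_\lambda$. Hence a single null descendant generates the submodule and the quotient is $S'_{1/2}(d, r)$. Finally, when $j = d = 0$ all four supercharge level-1 bounds collapse; additionally $\Delta = 0$ forces $||\mathbb{L}_+ v_\lambda||^2 = \Delta = 0$ via the $\mathfrak{su}(1,1)$ commutator $[\mathbb{L}_+, \mathbb{L}_-] = -\mathbb{L}_0$, while $J_\pm v_\lambda = 0$ holds literally. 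Every raising operator then produces a null or zero descendant, so the irreducible quotient collapses to the one-dimensional $\mathfrak{u}(1)_{R^I}$-module $S'_1(r)$, genuinely non-trivial precisely when $r \neq 0$.

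The main technical obstacle is verifying that the level-1 bounds really do control unitarity at all higher levels, i.e.\ that no new zero- or negative-norm states arise beyond the $\mathfrak{g}$-orbits of the identified level-1 nulls. This is handled by using the $\{q, s\}$, $\{q, q\}$, $\{s, s\}$ anticommutators to move lowering operators past strings of raising operators, reducing arbitrary Shapovalov norms to polynomials in $(\Delta, j, d)$ whose positivity is then inherited from the level-1 bounds — a standard but lengthy argument along the lines of \cite{dobrevpetkova, andrewthesis}. A secondary subtlety is the identification of linear dependencies among null descendants at higher levels in the $j = 0$ cases, but this is controlled by the commutation relations recorded in the preceding subsection.
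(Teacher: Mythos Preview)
Your proposal is correct and follows essentially the same approach as the paper: the paper's argument consists of the paragraph preceding the proposition (reducing unitarity to the level-1 norm bounds, with the higher-level sufficiency outsourced to \cite{andrewthesis,dobrevpetkova}) together with the case-by-case reasoning embedded in the bullet points themselves, and you have reproduced exactly this structure with slightly more detail. Your explicit identification of the $J_+$-dependence $J_+ q^{1+}v_\lambda \propto q^{2+}v_\lambda$ in the $j=0$ case and your acknowledgement that the higher-level Shapovalov analysis is deferred to the cited references both match the paper precisely.
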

Note that the $S'_{1}$ representations are singlets, and in particular are invariant under the $\mathfrak{so}(2,1)$ conformal algebra. They are therefore candidate ground states in the $CFT_1$ of an $AdS_2/CFT_1$ duality. \\

We construct an index: a count of short representations which is (in a way to be made precise) invariant under certain deformations of the theory. In order for such an index to be a count of short representations it must be invariant under the situation in which, for a long representation, the quantity $\epsilon = \Delta -2j -d$ (assuming $d\geq 0$) continuously lowers to $0$ and the unitary bound is reached. The long representation splits into a direct sum of short representations containing fewer states. Note that at most $\Delta$ and $d$ may vary continuously since $j$ and $r$ are quantised. Of course this process can also happen in reverse, where two short representations pair into a long representation which then moves away from the unitary bound. Any index which counts short representations must be invariant under these processes.\footnote{Note that when $d$ is quantised, necessarily the case for regular and quasi-regular cones, the short representations are protected under continuous deformations of the theory preserving the regularity/quasi-regularity property i.e. the fact that $D^I$ generates a full $U(1)$ action. This is because their dimension is related to their (quantised) R-charges. } There are 4 ways in which this can happen (when we have not specified that $j,d,r$ is 0 below, we mean that it is non-zero):
\begin{equation}\label{repsplitting}
	\begin{split}
		L|_{\epsilon = 0}(\Delta=2j+d, j, d, r) \,\,\,\,\,\quad &= S_{1/4}(j,d,r) \oplus S_{1/4}(j+1/2,d,r+1/2)\\
		L|_{\epsilon = 0}(\Delta=2j, j, d = 0, r)\,\,\,\,\, \quad&=  \,S_{1/2}(j,r) \oplus S_{1/2}(j+1/2,r+1/2) \\ &\qquad\oplus S_{1/2}(j+1/2,r-1/2) \oplus S_{1/2}(j+1,r)\\
		L|_{\epsilon = 0}(\Delta= d, j = 0, d, r) \qquad \,&= S_{1/2}'(d,r) \oplus S_{1/4}(1/2,d,r+1/2)\\
	    L|_{\epsilon = 0}(\Delta=0, j=0, d = 0, r) \,\,&=  \,S'_{1}(r) \oplus S_{1/2}(1/2,r+1/2) \\ &\qquad\oplus S_{1/2}(1/2,r-1/2) \oplus S_{1/2}(1,r)
	\end{split}
\end{equation}
Justifying this case by case:
\begin{itemize}
	\item $\mathbf{j \neq 0, d\neq 0}$: $q^{1+} v_{\lambda}$ becomes null and splits off into a $\frac{1}{4}$-BPS representation with lowest weight $(\Delta+1, j+\frac{1}{2}, d, r+\frac{1}{2}) $. 
	\item $\mathbf{j \neq 0, d=0}$: $q^{1+}v_{\lambda}$ and $q^{1-}v_{\lambda}$ become null, and also saturate the $1/2$ BPS bound. They are independent (one cannot be obtained under the action of the algebra from the other). If we assume that they become lowest weight states of irreducible null representations, then we might worry that they also contain null states. $q^{1+}$ and $q^{1-}$ are both nilpotent, but note that both lowest weight irreps would contain the state $q^{1+}q^{1-}v_{\lambda} = -q^{1-}q^{1+}v_{\lambda}$ which is null. This forms the lowest weight state of another null representation. 
	\item $\mathbf{j=0, d\neq0}$: $q^{1+}v_{\lambda}$ and $q^{2+}v_{\lambda}$ become null but are not independent, hence we obtain a null $1/4$ BPS representation (by checking the quantum numbers) with lowest weight vector $q^{1+}v_{\lambda}$. 
	\item $\mathbf{j=d=0}$: $q^{1\pm}v_{\lambda}$ and $q^{2\pm}v_{\lambda}$ become null but note the latter are obtained from the former via the action of $J_{+}$ and are hence not independent. We have a similar situation to $j\neq0$, $d=0$, with the null representation content being irreducible representations with lowest weight vectors $q^{1\pm}v_{\lambda}$ and $q^{1+}q^{1-}v_{\lambda}$. 
\end{itemize}
Note that this is conjectural, as we do not have a proof that the representations which become null are themselves irreducible. \\

A count of short representations invariant under the representation splitting is of the form:
\begin{equation}
	I = \sum_{R\in \mathcal{R}} \alpha(R)N(R)
\end{equation}
where $N(R)$ is the number of representations of type R present in the spectrum of the theory, $\mathcal{R}$ the set of possible short representations and $\alpha(R)$ a set of coefficients, which by (\ref{repsplitting}) must satisfy:
\begin{equation}\label{indexconditions}
\openup 1\jot
	\begin{split}
		0 &= \alpha \left(S_{1/4}(j,d,r)\right)+ \alpha\left(S_{1/4}\left(j+1/2,d,r+1/2\right)\right) \qquad\quad\,\,\,\, j\neq0, d\neq0 \\
		0 &= \alpha\left(S_{1/2}(j,r)\right) + \alpha\left(S_{1/2}(j+1/2,r+1/2)\right) \\ &\qquad+ \alpha\left(S_{1/2}(j+1/2,r-1/2)\right) + \alpha\left(S_{1/2}(j+1,r)\right) \qquad j\neq0, d=0\\
		0 &= \alpha \left(S'_{1/2}(d,r)\right)+ \alpha\left(S_{1/4}\left(1/2,d,r+1/2\right)\right) \qquad\qquad\quad\,\,\,\,\, j=0, d\neq0\\
		0 &= \alpha\left(S'_{1}(r)\right) + \alpha\left(S_{1/2}(1/2,r+1/2)\right) \\ &\qquad\qquad+ \alpha\left(S_{1/2}(1/2,r-1/2)\right) + \alpha\left(S_{1/2}(1,r)\right) \qquad\quad\,\, j=0, d=0
	\end{split}
\end{equation} 
Solving the constraints gives the following basis of indices:
\begin{equation}\label{indexbasis}
	I^{d,r}= \sum^{\infty}_{j\in \mathbb{N}_{0}/2} (-1)^{2j} N(S(j,d,r+j))
\end{equation}
Where $N(S(j,d,r))$ is the number of representations present of type:
\begin{equation}
	S(j,d,r) =
	\begin{cases} 
		S_{1/4}(j,d,r) &\mbox{if } j\neq0, d\neq 0  \\ 
		S_{1/2}(j,r) &\mbox{if } j\neq0, d= 0 \\
		S'_{1/2}(d,r) &\mbox{if } j=0, d\neq0 \\
		S'_{1}(r) &\mbox{if } j=0, d=0
	\end{cases}
\end{equation}

\subsection{The Superconformal Index}
We now seek to define the superconformal index, originally introduced for 4d field theory in \cite{Romelsberger:2005eg, 4dscindex}, which receives contributions solely from the short representations which it counts up to the splitting (\ref{repsplitting}). We choose a supercharge $q$ with hermitian conjugate $s=q^{\dagger}$ such that:
\begin{equation}
	\{q,s\} = \mathcal{H} = \frac{1}{2}\left(\mathbb{L}_0 + 2J_3-D^I\right)
\end{equation}
$\mathcal{H}$ has eigenvalues $E = \frac{1}{2}(\Delta - 2j - d)$ which coincides with the unitary bound. It is clear that the correct choice is $q=q^{1+}$. Each short multiplet contains states which are annihilated by $\mathcal{H}$, and long multiplets contain none. These states are in bijection with the cohomology classes of $s$ (or $q$) provided the spectrum is discrete. \\

The choice of supercharge breaks the full $\mathfrak{u}(1,1|2)$ to the subalgebra spanned by generators (anti)commuting with $q$ and $s$. This is the little group (algebra) and is denoted:
\begin{equation}
	\mathfrak{g}_0 = \langle \mathcal{H},q,s, J_3  +R^I, D^{I}, q^{1-}, s^{1-} \rangle
\end{equation}
Note that $I = \langle \mathcal{H},q,s \rangle$ is an ideal, such that:
\begin{equation}
	\frac{\mathfrak{g_0}}{I} \cong \mathfrak{u}(1|1)
\end{equation}
and we will henceforth refer to the above as the little group. The $\mathfrak{u}(1|1)$ is generated by $\{J_3  +R^I, D^{I}, q^{1-}, s^{1-}\}$ with the former two elements generators of its Cartan subalgebra.\footnote{Strictly speaking $\mathfrak{u}(1|1)$ is generated by the elements $(J_3 + R)+ I$ etc, but the abuse of notation is inconsequential since all elements of $I$ evaluate to 0 on the states which contribute to the index.} \\

We are now ready to formulate the superconformal index as: 
\begin{equation}\label{scindex}
	\mathcal{Z}_{X}(\tau,\tilde{y},Z) = \mbox{Tr} \left[(-1)^{F} e^{-\beta \mathcal{H}} \tau^{D^{I}} \tilde{y}^{J_3+R^I} \prod_{i}z_{i}^{\mathcal{J}_i}\right]
\end{equation}
where $F= 2J_3$, and $\{\mathcal{J}_i\}$ are a set of any additional mutually commuting global symmetry generators (which we restrict to be holomorphic isometries), graded with fugacities $Z \equiv \{z_i\}$. Henceforth we choose $\{\mathcal{J}_i\}$ to generate the remainder of the algebra of the torus $\mathfrak{t}_s$ defined previously. Note that relabelling $\tilde{y} = y/\tau$ we recover the superconformal index for hyperK\"ahler cones as in \cite{doreysingleton}, and indeed the little group of $\mathfrak{u}(1,1|2)$ indeed coincides with that of $\mathfrak{osp}(4^*|4)$ when $\left(X,g\right)$ is hyperK\"ahler. \\

By standard arguments as for the Witten index \cite{wittenconstraintsonsusybreaking}, assuming a discrete spectrum, the index is independent of $\beta$ and only receives contributions from $E=0$ states. States are matched in boson/fermion pairs with the same quantum numbers for $E>0$. Under continuous deformations of the theory preserving $\mathfrak{g}_0$, states with $E>0$ can lower to $E=0$, and states with $E=0$ to can lift to $E>0$, but can only do so in pairs with the same quantum numbers. $J_3$ and $R$ have quantised eigenvalues, and therefore the eigenvalues of states under these operators will not vary under continuous deformation. In general however the eigenvalues under $D^I$ may vary continuously. This is in line with expectation, we do not necessarily expect that superconformal quantum mechanics on different K\"ahler cones will yield the same superconformal index when graded by the Reeb vector, which partly specifies the K\"ahler cone structure.\footnote{In \cite{doreybarns-graham} only deformations to spaces on which the generator of the holomorphic isometry exponentiates to a full $U(1)$ action are considered.} If we do not grade by $D^I$ (i.e. setting $\tau =1$), then the index would be invariant under arbitrary continuous deformations preserving $\mathfrak{g}_0$. Grading by $D^I$, although the states which do not cancel and therefore contribute to the index (\ref{scindex}) track through the continuous deformation, their $D^I$ eigenvalue may differ and therefore the superconformal index will be invariant only up to the $\tau$ dependence of its expansion in terms of $\mathfrak{u}(1|1)$ characters. Later we show that the index receives contributions only from short representations, and that it is invariant under the representation splitting (\ref{repsplitting}).\\

On a given K\"ahler cone, it may be possible to specify multiple K\"ahler metrics with fixed complex structure $I$. If $D^{I} \in \mathfrak{t}_s$ where $t_{s} = \mathcal{L}(\mathbb{T}^s)$ and $\mathbb{T}^s$ is an $s$-torus of holomorphic isometries, we say the torus action is of \textit{Reeb-type}. Henceforth unless specified otherwise this is assumed for the cones considered in this work.\footnote{Those cones which are not of Reeb type are enumerated in \cite{lermanconctacttoricmanifolds}.} Superconformal indices corresponding to different K\"ahler metrics on a given cone are therefore given simply by a relabelling of fugacities in (\ref{scindex}). In the volume minimisation considered in \cite{sparksmartelliyauvolume}, a space of K\"ahler metrics is considered via varying the Reeb vector in order to obtain a Ricci-flat metric on the cone. Given any reference metric in this space, the superconformal indices of all other metrics in the space can be obtained by the fugacity relabelling mentioned above. \\
	
The $E=0$ states in the short representations transform in representations of $\mathfrak{u}(1|1)$ and contribute to the index via their character. These are:
\begin{equation}\label{littlegroupcharacter}
	\chi_R (\tau, \tilde{y}) =
	\begin{cases} 
		0 &\quad \phantom{\mbox{if}}\qquad  R = L(\Delta, j, d, r) \\
		\tau^{d}\,\tilde{y}^{-j+r}\left(1-\frac{1}{\tilde{y}}\right) &\quad \phantom{\mbox{if}}\qquad R =  S_{1/4}(j,d,r) \quad j\neq 0, \, d\neq 0\\
		\tilde{y}^{-j+r} &\quad \mbox{if}\qquad R = S_{1/2}(j,r) \quad j\neq 0\\
		\tau^{d}\,\tilde{y}^{r}\left(1-\frac{1}{\tilde{y}}\right) &\quad \phantom{\mbox{if}}\qquad R = S'_{1/2}(d,r) \quad d\neq 0 \\
		\tilde{y}^{r} &\quad \phantom{\mbox{if}}\qquad  R = S'_{1}(r)
	\end{cases}
\end{equation}

The $(d\geq0)$ short spectrum of a $\mathfrak{u}(1,1|2)$ invariant quantum mechanics  can be written:
\begin{equation}
	\begin{split}
		\mathcal{S} &= \left( \bigoplus_{d>0, \, j\in \mathbb{N}_{+}/2, \,r\in \mathbb{Z}/2} N(S_{1/4}(j,d,r))S_{1/4}(j,d,r) \right) \bigoplus \left( \bigoplus_{j\in \mathbb{N}_{+}/2,\, r\in \mathbb{Z}/2} N(S_{1/2}(j,r))S_{1/2}(j,r)\right)\\
		&\qquad \bigoplus \left( \bigoplus_{d>0,\, r\in \mathbb{Z}/2}N(S'_{1/2}(d,r)) S'_{1/2}(d,r) \right) \bigoplus \left( \bigoplus_{r\in \mathbb{Z}/2} N(S'_{1}(r))S'_{1}(r)\right)
	\end{split}
\end{equation}
where $N(R)$ are positive integers. Note the sum over $d$ is over $\mathbb{Z}_{+}$ multiples of a constant in the case where $\left(X,g\right)$ is quasi-regular/regular, and over arbitrary values of $\mathbb{R}_{+}$ when it is irregular. The superconformal index can be expressed (now allowing $N(R)(Z) \in \mathbb{Z}_{\geq 0 }\left[Z,Z^{-1}\right]$ to allow for grading by global symmetries, i.e. these are characters of the additional mutually commuting global symmetries):
\begin{equation}
	\begin{split}
		\mathcal{Z}_{X}(\tau,\tilde{y},Z)&=  \,\,\,\sum_{d>0,\, j\in \mathbb{N}_{+}/2, \,r\in \mathbb{Z}/2}(-1)^{2j}\tau^{d}\,\tilde{y}^{-j+r}\left(1-\frac{1}{\tilde{y}}\right) N(S_{1/4}(j,d,r))(Z)\\
		&\quad+ \sum_{j\in \mathbb{N}_{+}/2,\, r\in \mathbb{Z}/2}(-1)^{2j}\tilde{y}^{-j+r} N(S_{1/2}(j,r))(Z)\\
		&\quad+ \quad \sum_{d>0,\, r\in \mathbb{Z}/2}\tau^{d}\,\tilde{y}^{r}\left(1-\frac{1}{\tilde{y}}\right)N(S'_{1/2}(d,r))(Z)\\
		&\quad+\,\,\quad \sum_{r\in \mathbb{Z}/2}\tilde{y}^{r} N(S'_{1}(r))(Z)
	\end{split}
\end{equation}
Note that the index receives contributions from short representations where $d\geq0$ only, since from (\ref{unitarybounds}) for $d\leq0$ the appropriate unitary bound restriction on the lowest weight state is $\Delta-2j+d \geq 0$, thus the eigenvalue of $\mathcal{H}$ on the lowest weight state is $E=\Delta-2j-d >0$. Since all raising operators have $E$-grade $\geq 0$,  any unitary irreducible lowest weight representation with $d<0$ contains only $E>0$ states and therefore does not contribute to the index.  \\

The index can be expressed: 
\begin{equation}\label{scindexexpansion}
	\mathcal{Z}_{X}(\tau,\tilde{y} ,Z) = \sum_{r\in \mathbb{Z}/2}\tilde{y}^{r}I^{0,r} + \sum_{d > 0, \,r\in \mathbb{Z}/2}\tau^{d}\,\tilde{y}^{r}\left(1-\frac{1}{\tilde{y}}\right)I^{d,r}
\end{equation}
where:
\begin{equation}
	I^{d,r}= \sum^{\infty}_{j\in \mathbb{N}_{0}/2} (-1)^{2j} N(S(j,d,r+j))(Z)
\end{equation}
Setting $Z=1$ these coincide with (\ref{indexbasis}), so encouragingly the superconformal index (\ref{scindex}) may be written in a basis of indices invariant under representation splitting. Given $\mathcal{Z}_{X}(\tau,\tilde{y} ,Z)$, and using geometric constraints on the bidegree of forms in the Hilbert space,\footnote{Strictly speaking these constraints apply only to the resolution of $\left(X,g\right)$ which we will define later, but we assume these constraints also hold for states in the Hilbert space of the full $\mathfrak{u}(1,1|2)$ quantum mechanics i.e. that the procedure for removing the regulator is sufficiently 'smooth', so that since the $\mathfrak{u}(1)_{R^I}$ eigenvalues are quantised their bounds should not change under the deformation.} it is possible to read off $\{I^{d,r}\}$. This provides information on lower bounds of degeneracies of superconformal multiplets. Note that the $\tau = 0$ sector of the index yields information about the singlet representations/states $S_1'(r)$, interesting in any $AdS_{2}/CFT_{1}$ application.\\

 In special cases it is possible to uniquely determine the degeneracies of certain superconformal multiplets. In particular, this is the case when $r=d_{\mathbb{C}}/2$, where $d_{\mathbb{C}}$ is the complex dimension of $X$. This is because $N(j,d, r>d_{\mathbb{C}}/2) = 0 $ via geometric constraints on the bidegrees of forms and:
\begin{equation}
	I^{d,d_{\mathbb{C}}/2} = N\left(S\left(0,d, d_{\mathbb{C}}/2\right)\right)(Z) \qquad d \geq 0
\end{equation}
We call these multiplets \textit{protected}, and they are: 
\begin{equation}\label{protectedreps}
	\begin{split}
		&S'_{\frac{1}{2}}\left(d,r=d_{\mathbb{C}}/2\right) \qquad d\neq0\\
		&S'_1\left(r=d_\mathbb{C}/2\right)
	\end{split}
\end{equation}

\subsubsection{Geometric Interpretation of the Index}\label{sectiongeometry}
In \cite{singletonexterioralgebra} Singleton constructed a geometric action of $\mathfrak{u}(1,1|2)$ on the exterior algebra of a K\"ahler cone $X$. Strictly speaking this is only rigorously defined on flat space, but will suggest a regularised definition of the index for general K\"ahler cones. The supercharge $s$ acts on a form $\alpha = \beta e^{-K}$ as:
\begin{equation}
	s\alpha = \bar{\partial}\beta e^{-K}
\end{equation}
and so as the Dolbeault operator up to the exponential factor. Note that the $L^2$ cohomology of $s$ with respect to the usual inner product on $\Omega^*(X, \mathbb{C})$ is isomorphic to the usual $\bar{\partial}$ cohomology acting on forms which are $L^2$ with respect to the inner product:
\begin{equation}\label{innerproduct}
	(\alpha,\beta) = \int_{X} d^{2 d_{\mathbb{C}}}x \sqrt{g} \alpha \wedge \bar{\beta} e^{- K}
\end{equation}
where $K$ is the K\"ahler potential. Viewing the K\"ahler space $X$ as a holomorphic manifold, the R-symmetry generated by $D^{I}$  corresponds to a holomorphic $\mathbb{C}^{\times}$ action on $X$.\\

The resulting index is a trace over the space of states with finite norm under (\ref{innerproduct}) and vanishing $\mathcal{H}$ eigenvalue, graded by the two Cartan elements of the little group $\mathfrak{u}(1|1)$ and any holomorphic isometries of the manifold. In the case of affine space $X = \mathbb{C}^n$, the space of forms with finite norm under (\ref{innerproduct}) is isomorphic to $\mathbb{C}\left[z_i, \bar{z}_i, dz_i, d\bar{z}_i\right]$, and the space of states with $E=0$ is isomorphic to  $\mathbb{C}\left[z_i, dz_i\right]$ i.e. the polynomial-valued holomorphic forms. We compute and analyse the index fully in appendix \ref{appendixB}. It is then natural to work in the basis of homogeneous polynomials. In the flat space case then, the analytic $\bar{\partial}$-cohomology on polynomial valued forms coincides with the sheaf cohomology of $\mathbb{C}^n$ considered as an affine variety. We will assume this holds true for a general K\"ahler cone. Therefore the index (\ref{scindex}) can be expressed:

\begin{equation}\label{scindexcharacterform}
	\begin{split}
		\mathcal{Z}_{X}(\tau,\tilde{y},Z) &= \mbox{Tr}_{\Omega^{*}(X)} \left[(-1)^{F} e^{-\beta \mathcal{H}} \tau^{D^{I}} \tilde{y}^{J_3+R^I} \prod_{i}z_{i}^{\mathcal{J}_i}\right]\\
		&= \sum^{d_{\mathbb{C}}}_{p,q = 0} (-)^{p+q-d_{\mathbb{C}}}\, \tilde{y}^{p-d_{\mathbb{C}}/2} \mbox{Tr}_{H^{p,q}(X)}\left(\tau^{D^{I}}\prod_{i}z_i^{\mathcal{J}_i}\right)\\
		&= \sum^{d_{\mathbb{C}}}_{p,q = 0} (-)^{p+q-d_{\mathbb{C}}}\, \tilde{y}^{p-d_{\mathbb{C}}/2} \mbox{Tr}_{H^{q}(X; A^{p}(X))}\left(\tau^{D^{I}}\prod_{i}z_i^{\mathcal{J}_i}\right)
	\end{split}
\end{equation}
In the last line we have used Dolbeault's theorem, which states that for $\mathcal{M}$ a complex manifold: $H^{q}(\mathcal{M}; A^{p}(\mathcal{M}))= H^{p,q}(\mathcal{M})$ where the left hand side is the sheaf cohomology of $A^{p}(\mathcal{M})$ -  the sheaf of holomorphic $(p,0)$ forms on $\mathcal{M}$.\\

The lowest weight states of the protected representations (\ref{protectedreps}) are then in bijection with the holomorphic forms of bidegree $(p,q) =(d_{\mathbb{C}},0)$, i.e. holomorphic sections of the canonical bundle.\\

Note that strictly speaking all of the above is only rigorously true for affine space $\mathbb{C}^n$ since besides this case, K\"ahler cones are singular. At the singularity the space of forms is not defined, and only the $p=0$ summand of (\ref{scindexcharacterform}) is well defined. For a generic K\"ahler cone, it is necessary to regularise in order to define the index. In the following sections of this work, the primary aim will be to substantiate the supposition that the Dolbeault cohomology with respect to Zariski topology on the space obtained by an (equivariant) resolution of singularities is the appropriate regularisation. A \textit{resolution of singularities} here is a proper birational morphism $\pi : \tilde{X} \rightarrow X$ such that $\tilde{X}$ is non-singular. Henceforth we adopt this notation for the (un)resolved space and resolution. We also require $\pi$ to be equivariant with respect to the action of the holomorphic isometry generated by $D^I$ and the other global symmetries we grade by. We then define the regularised superconformal index as:
\begin{equation}\label{regularised index}
	\mathcal{Z}_{\tilde{X}}(\tau,\tilde{y},Z) = \sum^{d_{\mathbb{C}}}_{p,q = 0} (-)^{p+q-d_{\mathbb{C}}}\, \tilde{y}^{p-d_{\mathbb{C}}/2}\, \mbox{Tr}_{H^{q}(\tilde{X}; A^{p}(\tilde{X}))}\left(\tau^{D^{I}}\prod_{i}z_i^{\mathcal{J}_i}\right)
\end{equation}
Note that:
\begin{equation}
	\chi(A^{p}(\tilde{X})) \equiv \sum_{i=0}^{d_{\mathbb{C}}} (-)^i \mbox{ch}_{T}H^i\left(\tilde{X}, A^{p}(\tilde{X})\right)
\end{equation}
is the definition of the equivariant Euler character, where equivariant means with respect to the isometry algebra/group we grade by in the index. The superconformal index can then be expressed as the equivariant Hirzebruch $\chi_{\shortminus \tilde{y}}$ genus of $\tilde{X}$: 
\begin{equation}
\begin{split}
	\mathcal{Z}_{\tilde{X}}(\tau,\tilde{y},Z) &= \sum^{d_{\mathbb{C}}}_{p = 0} (-)^{p-d_{\mathbb{C}}}\, \tilde{y}^{p-d_{\mathbb{C}/2}} \chi(A^{p}(X))\\
	&\equiv (-1)^{d_{\mathbb{C}}}\tilde{y}^{-d_{\mathbb{C}}/2} \chi_{\shortminus\tilde{y}}\left(\tilde{X}\right)
\end{split}
\end{equation}
The questions that then need to be resolved are the following. Does an equivariant resolution of the K\"ahler cone $X$ exist, and further given two non-isomorphic equivariant resolutions do they have the same superconformal index? The latter point is equivalent to saying that there is an invariant $\mathcal{Z}(X)$ associated to the K\"ahler cone $X$ which is independent of resolution. Further we must show that the regularised index $\mathcal{Z}(\tilde{X})$ is consistent with the form of (\ref{scindexexpansion}) i.e. the representation theory of $\mathfrak{u}(1,1|2)$. In the remaining part of this work we give evidence for this hypothesis by showing consistency in different limits of the index, and for the particular case of toric Calabi-Yau 3-folds (see section \ref{sectiontoriccalabiyau3fold}), the full invariance of the superconformal index under the canonical crepant resolutions. 

\subsection{Consistency Checks}
In this section we provide some necessary conditions for the (regularised) superconformal index to satisfy, due to its definition pre-regularisation as a character of the $\mathfrak{u}(1,1|2)$ superalgebra. From (\ref{scindexexpansion}), we see that the following must be true:
\begin{itemize}
	\item Writing: 
	\begin{equation}
		\mathcal{Z}(\tau ,\tilde{y} ,Z) = \sum_{a,b} \alpha_{a,b}(Z) \tau^{a} \tilde{y}^{b}
	\end{equation}
	then it must be that the only monomials with non-zero coefficients are those with $a \geq 0 $ or rather:
	\begin{equation}\label{consistencycheck1}
		a < 0  \quad \Rightarrow \quad \alpha_{a,b} = 0
	\end{equation}
	
	\item Another consistency check arises from the positivity of the count of protected representations (\ref{protectedreps}). From (\ref{scindexexpansion}), we have that:
	
	\begin{equation} \label{consistencycheck2}
		\lim_{\tilde{y}\rightarrow \infty} \tilde{y}^{-\frac{d_{\mathbb{C}}}{2}} \mathcal{Z} = \sum_{d=0}^{\infty}\tau^d N\left(S\left(0,d,d_{\mathbb{C}}/2\right)\right)(Z) \quad\in\quad \mathbb{Z}^{+}\left[Z,\tau\right]
	\end{equation}
	i.e. the above limit must yield a polynomial in $Z$ and $\tau$ with positive integer coefficients. \\	
\end{itemize}
Note that in the hyperK\"ahler case \cite{doreysingleton} there is an isomorphism:
\begin{equation}
H^{p,q}(X,s) \cong H^{d_{\mathbb{C}}-p,q}(X,s)
\end{equation}
which follows from the fact that $H^{*,q}(X,s) $ forms a module for the $\mathfrak{su}(2)$ subalgebra generated by the raising operator corresponding to wedging with the holomorphic symplectic form defined on hyperK\"ahler manifolds: $\wedge \omega_{\mathbb{C}} = \wedge (\omega^1 + i\omega^2)$, and the symmetry of $\mathfrak{su}(2)$ representations. This gives a $y \rightarrow 1/y$ symmetry in the index, where $y=\tilde{y}\tau$. For further details see \cite{andrewthesis}. This, combined with the existence of more protected representations gives markedly more consistency checks in the hyperK\"ahler case.

\section{Regularisation and Computation}
\subsection{Regularisation and Localisation}
We now turn to the issue of resolving the K\"ahler cone and computing the regularised index. Since there are no theorems describing resolutions of a generic K\"ahler cone, we describe consistency in a variety of cases, in sections \ref{sectionregularcones}, \ref{sectioncalabiyaucones} and \ref{sectiontorickahlercones}. We first make the definition:
\begin{definition}
	A \textit{resolution} of a variety $X$ is a morphism of varieties $f : Y \mapsto X$ which is proper and birational, with $Y$ non-singular. We additionally impose the requirement that the resolution is an isomorphism over a non-singular locus $X_{reg}$.
\end{definition}
These conditions also imply that $f$ is proper, surjective and generically finite. In our case we also require that the resolution of singularities is equivariant with respect to the $\mathbb{C}^*$ action generated by $D^{I}$ and any of the mutually commuting global symmetries. Now we state some key theorems in showing the consistency check (\ref{consistencycheck2}):
\begin{theorem}\label{grauertriemenschneidervanishing}
	(Grauert-Riemenschneider \cite{grauertriemenschneider}): If $f : Y \mapsto X$ is any projective, birational morphism of varieties, with Y non-singular, then the higher derived direct images $R_{i}(f_{*}\omega_Y)) = 0 \quad \forall \,i \geq 1$ where $\omega_Y =  \left(\Omega^{1,0}(Y)\right)^{\text{dim}_{\mathbb{C}}(Y)}$ is the canonical sheaf of $Y$. 
\end{theorem}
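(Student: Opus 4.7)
The plan is to invoke Kodaira-style vanishing together with a local reduction. First, since the higher direct images $R^i f_* \omega_Y$ are coherent and the statement is local on $X$, I reduce to the case where $X$ is affine. In that setting, quasi-coherent cohomology on affines vanishes in positive degree, so via the Leray spectral sequence the problem becomes equivalent to proving $H^i(Y, \omega_Y) = 0$ for $i \geq 1$, where $Y$ is smooth and projective over the affine base $X$.

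Next, I use the projectivity of $f$ to choose an $f$-relatively ample line bundle $L$ on $Y$. Standard Serre vanishing for projective morphisms immediately gives $H^i(Y, \omega_Y \otimes L^{\otimes n}) = 0$ for all $i \geq 1$ and $n$ sufficiently large. The substantive part of the theorem is to descend this vanishing to $n = 0$, which is genuinely harder because the trivial line bundle on $Y$ is nef but not big, so an unadorned absolute Kodaira vanishing cannot apply directly. The cleanest strategy is to embed $Y$ as an open subvariety of a smooth projective variety $\bar Y$ via Hironaka's resolution of a suitable compactification, and to extend $L$ to a nef-and-big line bundle $\bar L$ on $\bar Y$, gaining bigness from an effective boundary divisor supported on $\bar Y \setminus Y$. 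Kawamata--Viehweg vanishing on $\bar Y$ then yields $H^i(\bar Y, \omega_{\bar Y} \otimes \bar L^{\otimes m}) = 0$ for $i \geq 1$ and any $m \geq 0$, and a long exact sequence argument tracking contributions of the boundary divisor, together with the birationality of $f$ (so that the exceptional locus is of strictly positive codimension), converts this into $H^i(Y, \omega_Y) = 0$.

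The main obstacle is precisely this final compactification-and-descent step: producing the nef-and-big extension $\bar L$ whose restriction to $Y$ is trivialisable modulo the exceptional locus, and carefully bookkeeping the boundary contributions, is where most of the genuine technical work sits. An alternative, analytically flavoured proof due to Demailly uses H\"ormander $L^{2}$-estimates with a plurisubharmonic weight concentrated near the exceptional locus; this may be more natural in the present K\"ahler cone setting, but it too relies on non-trivial input from pluripotential theory and Hodge theory, and so the essential difficulty — extracting vanishing at the untwisted level from positivity that is only manifest after a twist — persists in either framework.
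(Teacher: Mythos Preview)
The paper does not prove this theorem at all: it is stated with a citation to the original Grauert--Riemenschneider paper and then used as a black box. There is therefore no ``paper's own proof'' to compare against; your proposal is an attempt to sketch a proof of a classical result that the authors simply quote.

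As a standalone sketch, your reduction to the affine base via the Leray spectral sequence is correct. However, the compactification step as you describe it does not work. You claim that Kawamata--Viehweg on $\bar Y$ gives $H^i(\bar Y, \omega_{\bar Y} \otimes \bar L^{\otimes m}) = 0$ for all $m \geq 0$, but for $m = 0$ this is $H^i(\bar Y, \omega_{\bar Y})$, which is nonzero in top degree by Serre duality; and for $m \geq 1$ you would be proving vanishing of $H^i(Y, \omega_Y \otimes L^{\otimes m})$, not of $H^i(Y, \omega_Y)$. The gap is that you have not arranged for $\bar L$ to restrict trivially to $Y$, and your parenthetical ``trivialisable modulo the exceptional locus'' does not repair this. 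The standard algebraic argument instead compactifies the \emph{base} $X$ to a projective $\bar X$, pulls back an ample line bundle $A$ on $\bar X$ to obtain a nef and big class $f^*A$ on (a compactified) $Y$, applies Kawamata--Viehweg to $\omega_Y \otimes f^*A^{\otimes m}$, and then uses the projection formula together with Serre vanishing and global generation for $m \gg 0$ to conclude $R^i f_* \omega_Y \otimes A^{\otimes m}$ has no global sections and hence vanishes. Your Demailly/$L^2$ alternative is a genuine route, but as you note, the hard analysis is not avoided.
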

\begin{theorem}\label{takegoshi}
	(Takegoshi Vanishing \cite{takegoshi}): If $f : Y \mapsto$ X is a proper surjective morphism of complex spaces and Y non-singular and bimeromorphic to a K\"ahler manifold, then $R_{i}(f_{*}\omega_Y)) = 0 \quad \forall\, i \geq  \mbox{dim}_{\mathbb{C}}(Y)-\mbox{dim}_{\mathbb{C}}(X)$.
\end{theorem}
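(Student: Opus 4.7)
The plan is to generalise the $L^{2}$ Hodge-theoretic argument behind Theorem \ref{grauertriemenschneidervanishing}, exploiting the positivity properties of $\omega_{Y}$ together with a Bochner-Kodaira-Nakano identity to force vanishing of the higher direct images above the relative dimension $r := \dim_{\mathbb{C}} Y - \dim_{\mathbb{C}} X$. Throughout I would think of $\omega_{Y}$ as the ``positive'' twisting that makes Nakano-type curvature arguments go through, and treat $f$ as the morphism which enforces the degree bound via its fibre dimension.

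First, because $R^{i}f_{*}\omega_{Y}$ is a coherent analytic sheaf on $X$ whose stalks are computed locally by Dolbeault cohomology on preimages of small Stein neighbourhoods, I would reduce the statement to showing $H^{i}(f^{-1}(U),\omega_{Y}) = 0$ for $i\geq r$ and $U\subset X$ a sufficiently small Stein open. The bimeromorphic hypothesis then lets me replace $Y$ by a K\"{a}hler modification $\widetilde{Y}\to Y$, invoking Theorem \ref{grauertriemenschneidervanishing} on the modification map to transport the desired vanishing back to $Y$ without loss of information. After this step $f^{-1}(U)$ carries a genuine K\"{a}hler form amenable to analytic methods.

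The heart of the argument is then a relative $L^{2}$-Hodge theory for the $\bar{\partial}$-operator acting on $(n,*)$-forms, where $n = \dim_{\mathbb{C}}Y$: using a complete K\"{a}hler metric on $f^{-1}(U)$ together with an exhausting plurisubharmonic weight built from a defining function on $U$, one represents cohomology classes by $L^{2}$ harmonic forms. The Nakano identity delivers semi-positivity of the curvature term $[i\Theta(\omega_{Y}),\Lambda]$ on forms valued in the canonical bundle, and a Hard-Lefschetz-type statement --- which is Takegoshi's main technical input --- shows that multiplication by powers of the K\"{a}hler class induces injections between harmonic spaces in complementary degrees. Combined with the fact that the generic fibre of $f$ has complex dimension exactly $r$, this forces the harmonic representatives in degree $\geq r$ to vanish.

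The hard part, and the reason this goes genuinely beyond Grauert-Riemenschneider, is the loss of birationality: when $f$ is merely surjective the generic fibres are positive-dimensional, so one cannot reduce to an isomorphism outside a proper subvariety of $X$ and simply push forward. Controlling the contribution of these positive-dimensional fibres to the $L^{2}$-cohomology --- in particular establishing the requisite Hard-Lefschetz injectivity in the non-compact \emph{relative} setting, and the associated torsion-freeness of $R^{i}f_{*}\omega_{Y}$ as an $\mathcal{O}_{X}$-module --- is the delicate analytic step, and is where genuine work beyond the formal Hodge-theoretic framework is required.
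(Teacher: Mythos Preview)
The paper does not prove this theorem. Theorem~\ref{takegoshi} is quoted verbatim from Takegoshi's work \cite{takegoshi} and used as a black box; no argument for it appears anywhere in the text. There is therefore nothing in the paper against which to compare your proposal.

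For what it is worth, your sketch is a reasonable outline of the ideas actually in Takegoshi's original paper: reduction to Stein neighbourhoods via coherence, passage to a genuine K\"{a}hler model through the bimeromorphic hypothesis, and then an $L^{2}$ Bochner--Kodaira--Nakano argument on $(n,\ast)$-forms combined with a relative Hard-Lefschetz mechanism to obtain the degree bound from the generic fibre dimension. You also correctly identify the essential difficulty over the birational Grauert--Riemenschneider case, namely the positive-dimensional fibres. One small slip: you write that you want $H^{i}(f^{-1}(U),\omega_{Y})=0$ for $i\geq r$; the correct range is $i>r$ (equivalently $i\geq r+1$), since in the birational case $r=0$ one still has $f_{*}\omega_{Y}\neq 0$. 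The paper's own statement of the theorem carries the same off-by-one issue in the inequality, so this is inherited rather than introduced by you.
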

Note that the last statement is in the analytic category. Note that when $X$ is an affine variety and either of the above theorems apply, then:
\begin{equation}
	H^{i}(Y, \omega_Y) = H^{i}(X, f_{*}\omega_Y) = 0 \qquad \forall i >0
\end{equation}
where the first equality follows from the above vanishing theorems, and the second from Serre vanishing since $X$ is an affine variety. From (\ref{regularised index}), it follows then that the highest coefficient of $\tilde{y}$ corresponding to consistency check (\ref{consistencycheck2}) reduces to:
\begin{equation}
	\mbox{Tr}_{H^{0}(\tilde{X};\, \omega_{\tilde{X}})}\left(\tau^{D^I}\prod_i z_{i}^{\mathcal{J}_i}\right)
\end{equation}
and is thus manifestly positive.\\

The equivariant Grothendieck-Riemann-Roch-Hirzebruch-Atiyah-Singer index theorem can then be used to compute the regularised index as an integration over equivariant characteristic classes, if we grade by an element of a compact Lie group. Of course, we have chosen to grade by $T =\mathbb{T}^s$. Further Atiyah-Bott localisation reduces the computation to an integration over the fixed point locus over the group action. The authors used \cite{pestunlocalisationingeometry} as reference. Identifying $g \in T=\mathbb{T}^s$ with the group element corresponding to $\tau^{D^{I}}\prod_{i}z_i^{\mathcal{J}_i}$, in the case of isolated $T$-fixed points, we obtain the equivariant Lefschetz formula:
\begin{equation}
	\chi(A^{p}(\tilde{X}))  = \sum_{x \in \tilde{X}^T} \frac{\mbox{ch}_{A^{p}_{x}}(g)}{\mbox{det}_{T^{*}_{x}}(1-g)}  = \sum_{x \in \tilde{X}^T} \mbox{ch}_T\left(\Lambda^{p}T^{*}_{x}, \tau, Z\right) \mbox{PE}\left[\mbox{ch}_T\left(T^{*}_{x}, \tau,Z\right)\right]
\end{equation}
Where $\tilde{X}^T$ is the set of $T$ fixed points of $\tilde{X}$ and $T^{*}(\tilde{X})$ is the cotangent bundle. Here PE denotes the plethystic exponential.\footnote{The plethystic exponential is defined as: $\text{PE}[f(t_1,\dots,t_n)]:=\text{exp}\left(\sum_{r=1}^\infty\frac{f(t_1^r,\dots,t_n^r)}r\right)$. A useful identity is: $\text{PE}\left[\sum_i t_i-\sum_j
	s_j\right]=\frac{\prod_j(1-s_j)}{\prod_i(1-t_i)}$, where $t_i$ and $s_i$ are monomials.} The superconformal index can then be computed as:
\begin{equation}\label{scindexlocalisationformula}
	\begin{split}
		\mathcal{Z}_{\tilde{X}}(\tau,\tilde{y},Z) &= \sum^{d_{\mathbb{C}}}_{p = 0} (-)^{p-d_{\mathbb{C}}}\, \tilde{y}^{p-d_{\mathbb{C}}/2} \chi_T(A^{p}(X)) \\
		&=  \sum^{d_{\mathbb{C}}}_{p = 0} (-)^{p-d_{\mathbb{C}}}\, \tilde{y}^{p-d_{\mathbb{C}}/2}\sum_{x \in \tilde{X}^T} \mbox{ch}_T\left(\Lambda^{p}T^{*}_{x}, \tau, Z\right) \mbox{PE}\left[\mbox{ch}_T\left(T^{*}_{x}, \tau, Z\right)\right] \\
		&= (-)^{d_{\mathbb{C}}}\tilde{y}^{-\frac{d_{\mathbb{C}}}{2}}\sum_{x \in \tilde{X}^T} \mbox{PE}\left[\left(1-\tilde{y}\right)\mbox{ch}_T\left(T^{*}_{x}, \tau, Z\right)\right]\\
		&=  (-)^{d_{\mathbb{C}}}\tilde{y}^{-\frac{d_{\mathbb{C}}}{2}}\sum_{x \in \tilde{X}^T} \prod_{\alpha \in \mathcal{W}_{x}} \frac{1-\tilde{y}\,m_{\alpha}(\tau,Z)}{1-m_{\alpha}(\tau,Z)}
	\end{split}
\end{equation}
Here $\mathcal{W}_x$ is the collection of $T$ weights of the isotropy action on $T_{x}^*(\tilde{X}) $ where a given weight is given by $\alpha = (\alpha_0, \gamma)$, where $\alpha_0$ is the weight under $D^{I}$ and $\gamma$ the weight of the remaining part of the torus $T$. If $T$ is an $s$-dimensional torus, then $\gamma$ is an $(s-1)$-vector. Here $m_{\alpha}(\tau,Z)= \tau^{\alpha_0}z_i^{\gamma_i}$. If the Reeb vector is appropriately normalised, then $\alpha_0 \in \mathbb{Z}$. The localisation formula gives the index as a rational function. Note that holomorphic functions (sections of the structure sheaf) on $X$ are non-negatively graded under the Reeb vector \cite{collinsthesis}. This dictates that the above rational function should be expanded in small positive $\tau$, thus verifying consistency check (\ref{consistencycheck1}).

\subsection{Singlets and Poincar\'e Polynomials}\label{sectionreeblimit}
Before we consider consistency in a variety of classes of K\"ahler cones, we first make some comments about the $\tau \rightarrow 0$ limit of the superconformal index. This limit is interesting as it yields the sector of the index containing information about the singlet states. From (\ref{scindexexpansion}) this is: 	
\begin{equation}\label{singletsector}
\lim_{\tau \rightarrow 0} \mathcal{Z}_{X}(\tau,\tilde{y} ,Z) = \sum_{r = -d_{\mathbb{C}}/2}^{ d_{\mathbb{C}}/2}\tilde{y}^{r}I^{0r} 
\end{equation}
We will see that this sector has a nice geometric interpretation in the target space. We work in the case where an equivariant resolution of the K\"ahler cone exists, and the fixed points of the torus action $T$ are isolated.\\

The only fixed point of the Reeb vector on the unresolved K\"ahler cone $X$ is the origin, therefore by equivariance the $D^{I}$-fixed submanifold on the resolved space $\tilde{X}$ (defined by the equivariant lift of the action on $X$) must be contained in $\pi^{-1}(\{o\})$. Then since $\pi$ is proper the fixed submanifold is compact. It could generically be a disjoint union of connected components, which we will label by $\rho$:
\begin{equation}
\tilde{X}^{D^{I}}= \bigsqcup_{\rho}\tilde{X}(\rho)
\end{equation}
The fixed point set of $T$ on $\tilde{X}$ is partitioned amongst the $\tilde{X}(\rho)$. Now consider the $\tau \rightarrow 0$ limit of the superconformal index:
\begin{equation}\label{reeblimitofindex}
\begin{split}
\lim_{\tau \rightarrow 0} \mathcal{Z}_{\tilde{X}}(\tau,\tilde{y},Z) 
&=  (-)^{d_{\mathbb{C}}}\tilde{y}^{-\frac{d_{\mathbb{C}}}{2}}\sum_{x \in \tilde{X}^T} \tilde{y}^{\,|
	\{\alpha: \alpha_{0}<0\}|} \prod_{\alpha \in \mathcal{W}_{x}| \alpha_{0}=0} \frac{1-\tilde{y}\,m_{\alpha}(Z)}{1-m_{\alpha}(Z)}\\
&=  (-)^{d_{\mathbb{C}}}\tilde{y}^{-\frac{d_{\mathbb{C}}}{2}}\sum_{\rho} \tilde{y}^{N_{\rho}^{(\shortminus)}} \sum_{x \in \tilde{X}^T \cap \tilde{X}(\rho)}  \prod_{\alpha \in \mathcal{W}_{x}| \alpha_{0}=0} \frac{1-\tilde{y}\,m_{\alpha}(Z)}{1-m_{\alpha}(Z)}\\
&= (-)^{d_{\mathbb{C}}} \sum_{\rho} \tilde{y}^{N_{\rho}^{(\shortminus)}\shortminus d_{\mathbb{C}}/2} \,\, \chi_{\shortminus \tilde{y}}(\tilde{X}(\rho))(Z)
\end{split}
\end{equation}
where $N_{\rho}^{(\shortminus)}$ is the number of $(-)$-attracting directions on a given connected component of the $D^{I}$-fixed submanifold of $\tilde{X}$. This gives the $\tau\rightarrow 0$ limit as the weighted sum of equivariant $\chi_{\shortminus\tilde{y}}$ genera of the connected components of the fixed point submanifold of the lifted action of the Reeb vector on the resolved space, graded by the $(s-1)$-torus with fugacities $Z=\{z_i\}$. \\

Here we have used the fact that the number of tangent directions negatively charged under the Reeb vector is the same at each of the $T$-fixed points lying in a given connected component $\tilde{X}(\rho)$. For quasi-regular and regular cones, this is because the tangent space at each point in a given connected component $\tilde{X}(\rho)$ forms a representation for the isotropy action of the $U(1)$ (complexified to $\mathbb{C}^*$) generated by the Reeb vector, so the weights of the action are quantised. Thus the weights are invariant on the smooth connected component, and in particular are the same at fixed points of the whole $T$-action. For irregular Reeb vectors the closure of the orbits is at least a two-torus. The fixed points of the Reeb vector coincide with those of the two-torus, whose eigenvalues are quantised on a given connected component. Thus the same holds for irregular Reeb vectors. \\

In \cite{pinglirigiditydolbeault} it was shown that in fact the equivariant $\chi_{-\tilde{y}}$ genus is in fact independent of $Z$, essentially using analyticity arguments. This means it is equal to the usual (ungraded) $\chi_{-\tilde{y}}$ genus. Further, in \cite{FeldmanHirzebruchgenus} it was shown that for a symplectic manifold $\mathcal{M}$ admitting a Hamiltonian circle action with isolated fixed points:
\begin{equation}
\chi_{\shortminus q}(\mathcal{M}) = P_{\sqrt{q}}(\mathcal{M})
\end{equation}
where the right hand side is the Poincar\'{e} polynomial defined as:
\begin{equation}
P_{q}(\mathcal{M}) = \sum_{i}^{\text{dim}_{\mathbb{R}}(\mathcal{M})} b_i q^i
\end{equation}	
and $b_i$ are the Betti numbers of $\mathcal{M}$. Further $\mathcal{M}$ has no odd-dimensional homology. These prerequisites are met by $\tilde{X}(\rho)$ since they are K\"ahler, and all symplectic circle actions on K\"ahler manifolds are Hamiltonian if they have non-empty fixed points. \\

Therefore the $\tau \rightarrow 0$ limit of the superconformal index, i.e. the index of the singlet sector, is given by the weighted sum of the Poincar\'{e} polynomials of the connected components of the fixed point submanifold of the equivariant lift of the Reeb vector on the resolved space:
\begin{equation}
\lim_{\tau \rightarrow 0} \mathcal{Z}_{\tilde{X}}(\tau, \tilde{y},Z) 
= (-)^{d_{\mathbb{C}}} \sum_{\rho} \tilde{y}^{N_{\rho}^{(\shortminus)}\shortminus d_{\mathbb{C}}/2} \,\, P_{\sqrt{\tilde{y}}}\left(\tilde{X}(\rho)\right) = \sum_{r\in \mathbb{Z}/2}\tilde{y}^{r}I^{0r} 
\end{equation}
Thus $ \lim\limits_{\tau \rightarrow 0}\left((-)^{d_{\mathbb{C}}}\tilde{y}^{d_{\mathbb{C}}/2}  \mathcal{Z}_{\tilde{X}(\tau, \tilde{y}, Z)}\right) \in \mathbb{Z}_{n \geq 0}\left[\tilde{y}\right]$, a finite polynomial. In fact, we have the further result:
\begin{proposition}\label{poincareproposition}
	Let $X$ be a K\"ahler cone such that the resolution $\tilde{X}$ has isolated fixed points under an $s\shortminus$torus $T=\mathbb{T}^s$ of holomorphic isometries such that $\mathcal{L}(\mathbb{T}^s) \equiv \mathfrak{t}_s \ni D^I$. Then the $\tau \rightarrow 0$ limit of the superconformal index on the cone is the same for all choices of Reeb vector in the Reeb cone, where $\tau$ is the fugacity grading by the choice of Reeb vector. Further, it is equal up to an overall factor  of $\tilde{y}^{-d_{\mathbb{C}}/2}$ to the Poincar\'e polynomial of $\pi^{-1}(\{o\})$.
\end{proposition}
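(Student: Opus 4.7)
The plan is to combine the formula (\ref{reeblimitofindex}) already derived for $\lim_{\tau\to 0}\mathcal{Z}_{\tilde X}$ with the Bialynicki--Birula (BB) decomposition of $\tilde X$ generated by the $\mathbb{C}^{*}$-action along $D^{I}$. First I would observe that any Reeb vector in the Reeb cone induces on $X$ a $\mathbb{C}^{*}$-action whose unique fixed point is the tip $\{o\}$ and which contracts every orbit to it. By equivariance and properness of the resolution map $\pi$, the lifted action on $\tilde X$ satisfies $\lim_{\tau\to 0}\tau\cdot x\in\pi^{-1}(\{o\})$ for every $x\in\tilde X$: indeed $\pi(\tau\cdot x)=\tau\cdot\pi(x)\to o$ on $X$. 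Two consequences follow. The fixed locus $\bigsqcup_{\rho}\tilde X(\rho)$ sits inside the compact set $\pi^{-1}(\{o\})$, and the normalised flow of this $\mathbb{C}^{*}$-action realises a continuous deformation retraction of $\tilde X$ onto $\pi^{-1}(\{o\})$, giving an isomorphism $H^{*}(\tilde X)\cong H^{*}(\pi^{-1}(\{o\}))$ of ordinary cohomology.

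Next I would invoke BB applied to this contracting action on the smooth variety $\tilde X$. The theorem produces a disjoint-union decomposition $\tilde X=\bigsqcup_{\rho}W^{+}_{\rho}$ into attracting cells, each an affine bundle of complex rank $N^{(-)}_{\rho}$ over $\tilde X(\rho)$, with the rank constant on a connected component by the same weight-quantisation argument used above (\ref{reeblimitofindex}). Each $\tilde X(\rho)$ is a smooth, compact K\"ahler manifold that inherits a Hamiltonian $S^{1}$-action with isolated fixed points from the residual torus, so the cited results \cite{pinglirigiditydolbeault,FeldmanHirzebruchgenus} identify its equivariant Hirzebruch genus $\chi_{-\tilde y}(\tilde X(\rho))(Z)$ with the Poincar\'e polynomial $P_{\sqrt{\tilde y}}(\tilde X(\rho))$ and in particular make it independent of the remaining fugacities $Z$. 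Substituting this into (\ref{reeblimitofindex}) rewrites the $\tau\to 0$ limit as a BB-weighted sum of Poincar\'e polynomials of the fixed components. Combining the BB identity with Poincar\'e--Lefschetz duality on the smooth resolution and the homotopy equivalence $\tilde X\simeq\pi^{-1}(\{o\})$ from the first step then collapses the sum to $\tilde y^{-d_{\mathbb{C}}/2}P_{\sqrt{\tilde y}}(\pi^{-1}(\{o\}))$, yielding the quantitative claim.

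Reeb-cone independence is then immediate: the final expression references only the compact fibre $\pi^{-1}(\{o\})$ and the complex dimension $d_{\mathbb{C}}$, and neither depends on which Reeb vector within the Reeb cone is chosen to define $\tau$, since every such choice induces the same contraction target on $\tilde X$. The main obstacle I anticipate is a clean deployment of BB and Poincar\'e--Lefschetz duality in this non-compact, equivariant, and possibly irregular setting. In particular one must verify that the $\mathbb{C}^{*}$-action on $\tilde X$ is genuinely of BB type, with all $\tau\to 0$ limits existing algebraically, which should follow from properness of $\pi$ combined with the Reeb-cone property on $X$; and in the irregular case, where $D^{I}$ is an irrational direction inside a larger torus, the required algebraic structure must be extracted from the ambient compact torus action so that the attracting stratification and its cohomological consequences still apply. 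Book-keeping of the affine-bundle ranks $N^{(-)}_{\rho}$ against the overall factor $\tilde y^{-d_{\mathbb{C}}/2}$, via Poincar\'e duality on $\tilde X$ and the retraction to $\pi^{-1}(\{o\})$, is where the most delicate part of the argument sits.
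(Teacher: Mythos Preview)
Your approach is correct in spirit but takes a more elaborate route than the paper. The paper first proves Reeb-independence by a purely algebraic observation: since the $\tau\to 0$ limit has already been shown to be $Z$-independent (via rigidity of $\chi_{-\tilde y}$), writing $\mathcal{Z}=A_0(\tilde y)+\tau A_1(\tilde y,Z)+\dots$, any fugacity rescaling moving to another Reeb vector can only push monomials from $A_{i>0}$ down to $\tau^0$ while retaining $Z$-dependence, which is forbidden; hence $A_0$ is fixed. For the identification with the Poincar\'e polynomial, rather than working with the possibly positive-dimensional fixed loci $\tilde X(\rho)$ of $D^I$ itself, the paper \emph{perturbs to a generic} $\mathbb{C}^*\hookrightarrow\mathbb{T}^s$ lying in the Reeb cone (equation~(\ref{poincarepolyrescaling})), so that fixed points are isolated and coincide with $\tilde X^T$. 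Then the Bia{\l}ynicki--Birula minus-cells $U_x$ are affine spaces, their union is exactly $\pi^{-1}(\{o\})$ by properness and equivariance, and Nakajima's homology argument \cite{nakajimalectures} gives the Poincar\'e polynomial directly as $\sum_x \tilde y^{\dim U_x}$, which is what (\ref{scindexlocalisationformula}) produces in the $t\to 0$ limit.

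Your route---keeping the original $D^I$, using BB with affine bundles over $\tilde X(\rho)$, and collapsing the sum in (\ref{reeblimitofindex})---works, but the detour through Poincar\'e--Lefschetz duality on $\tilde X$ plus the retraction $\tilde X\simeq\pi^{-1}(\{o\})$ is unnecessary. The identity you need, $\sum_\rho \tilde y^{N^{(-)}_\rho}P_{\sqrt{\tilde y}}(\tilde X(\rho))=P_{\sqrt{\tilde y}}(\pi^{-1}(\{o\}))$, follows directly because the BB \emph{minus}-strata $W^-_\rho\subset\tilde X$ (affine bundles of rank $N^{(-)}_\rho$ over $\tilde X(\rho)$) already pave $\pi^{-1}(\{o\})$ itself. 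The paper's perturbation trick buys you: isolated fixed points (so classical BB with cells $\cong\mathbb{A}^{n}$), automatic handling of the irregular case you flagged (the perturbed action is rational by construction), and a one-line identification with no duality bookkeeping. Your argument is more intrinsic to the given Reeb vector but requires care with BB over non-isolated fixed loci and the algebraic-vs-irrational issue you correctly anticipated.
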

\begin{proof}
	The requirement that non-constant holomorphic functions are graded positively under the Reeb vector $D^I$ dictates the expansion of the index given as a rational function. Since we have shown the $\tau \rightarrow 0$ limit of the index is $Z$ independent, the index can be expanded:
	\begin{equation}
	A_0(\tilde{y}) + \tau A_1(\tilde{y},Z) + \tau^2 A_2(\tilde{y},Z)  +...
	\end{equation}
	Where $A_0(\tilde{y})$ is a finite Laurent polynomial by the previous result. Under a rescaling of fugacities to obtain the superconformal index of the K\"ahler cone defined with respect to a different Reeb vector in the Reeb cone, $A_0(\tilde{y})$ is invariant, and any potential rescalings of monomials in $A_{i>0}(\tilde{y},Z)$ to obtain a $\tau$-independent monomial must depend on the fugacities $Z$. However we have already shown such terms cannot contribute to the $\tau\rightarrow0$ limit of any superconformal index. Thus the  $\tau\rightarrow0$ limit of the superconformal index is invariant under choice of Reeb vector. \\
	
	For the second point we closely follow the argument of \cite{doreybarns-graham}. We choose a generic $\mathbb{C}^*$ action $\lambda : \mathbb{C}^* \xhookrightarrow{} \mathbb{T}^s$ such that the fixed points of the generic action coincide with the fixed points of $\mathbb{T}^s$, and such that the action is generated by a vector lying in the Reeb cone so that $\lim_{t \rightarrow 0}\lambda(t)  = 0$. For instance, one could choose a $\mathbb{C}^*$ by making the following relabelling of fugacities (see section \ref{sectiontorickahlercones}): 
	\begin{equation}\label{poincarepolyrescaling}
	\tau \mapsto t^{m}, \qquad z_i \mapsto t^{n_i}\tilde{z}_i \qquad m \gg n_1 > .... > n_{s-1} > 0
	\end{equation}
	which corresponds to an action lying in the Reeb cone. This is because it specifies up to constant rescaling a vector which is just a small deformation from the original choice of Reeb vector. We define the $(-)$-attracting set at each fixed point $x$ to be:
	\begin{equation}
	U_{x} \equiv \{ p \in \tilde{X} | \lim_{t\rightarrow \infty} \lambda(t) \cdot p =  x  \}
	\end{equation}
	This is also equal to the dimension of the tangent space which is negatively charged under $\lambda$. By properness and equivariance $\bigcup U_x = \pi^{-1}(\{o\})$. Theorems (3) and (4) of \cite{nakajimalectures} apply also for $\tilde{X}$ described above, implying the vanishing of odd homology of $\pi^{-1}(\{o\})$ and the fact that the even homology is freely generated. Further, each fixed point of the generic action generates a single generator with homology degree given by the dimension of the $(-)$-attracting set:\footnote{Actually, one could equally use the (+)-attracting set due to Poincar\'e duality $b_{k} = b_{2d_{\mathbb{C}}-k}$}
	\begin{equation}
	\sum_{i=0}^{\text{dim}_{\mathbb{C}}\left( \pi^{-1}(\{o\})\right)  } \dim H_{2i}(\pi^{-1}(\{o\}))q^i = \sum_{x \in \tilde{X}^{T}}q^{\text{dim}U_{x}}
	\end{equation}
	Making the rescaling of fugacities (\ref{poincarepolyrescaling}) in the index and taking the $t \rightarrow 0$ limit one has, similarly to (\ref{reeblimitofindex}):
	\begin{equation}\label{poincarepolynomialcore}
	\begin{split}
	\lim_{t \rightarrow 0} \mathcal{Z}_{\tilde{X}}(t,\tilde{y}, \tilde{Z}) 
	&=  (-)^{d_{\mathbb{C}}}\tilde{y}^{-\frac{d_{\mathbb{C}}}{2}}\sum_{x \in \tilde{X}^T} \tilde{y}^{\text{dim}U_x} = (-)^{d_{\mathbb{C}}}\tilde{y}^{-\frac{d_{\mathbb{C}}}{2}} P_{\sqrt{\tilde{y}}}(\pi^{-1}(\{o\}))
	\end{split}
	\end{equation}
	There are no plethystic contributions since the action is generic. 
\end{proof}
\begin{corollary}
	On a given K\"ahler cone, all possible choices of metric lead to the same result for the sector of the index containing information about the singlet states (\ref{singletsector}). 
\end{corollary}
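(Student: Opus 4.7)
The plan is to reduce the corollary directly to Proposition \ref{poincareproposition} by observing that different choices of K\"ahler cone metric on a fixed complex manifold $X$ correspond precisely to different choices of Reeb vector within the Reeb cone, which in turn translate to an explicit rescaling of the fugacities in the superconformal index.

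First I would recall the dictionary established earlier in the paper: once the complex structure $I$ and the holomorphic torus action $T = \mathbb{T}^{s}$ are fixed, any K\"ahler cone metric is specified by a choice of Reeb vector $D^{I}\in \mathfrak{t}_{s}$ lying in the Reeb cone. Two such choices $D^{I}$ and $D^{I\prime}$ are related by an element of $\mathfrak{t}_{s}$, so the corresponding superconformal indices, both computed by the fixed-point formula (\ref{scindexlocalisationformula}) on the same equivariant resolution $\tilde{X}$, differ only by a relabelling of the fugacities $(\tau,Z)$ of the form specified at the start of the proof of Proposition \ref{poincareproposition} (cf.\ the rescaling (\ref{poincarepolyrescaling})).

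Next I would identify the singlet sector with the $\tau\to 0$ limit of the index using (\ref{singletsector}). Since Proposition \ref{poincareproposition} shows that this limit is independent of the choice of Reeb vector in the Reeb cone, and is in fact equal (up to the universal $\tilde{y}^{-d_{\mathbb{C}}/2}$ prefactor) to the Poincar\'e polynomial of the core $\pi^{-1}(\{o\})$, a purely topological invariant of the resolved space, the singlet sector is the same for any metric on $X$ compatible with the given complex structure and torus action.

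The only step that requires any care is ensuring that the relabelling relating two metrics genuinely lies inside the regime to which Proposition \ref{poincareproposition} applies, i.e.\ that both Reeb vectors lie in the interior of the Reeb cone so that the index admits an expansion in positive powers of the corresponding $\tau$-variable. This is however automatic from the definition of the Reeb cone recalled in Section 2, since any admissible metric corresponds to a Reeb vector on which non-constant holomorphic functions have strictly positive weight. Once that observation is in place, the corollary follows immediately by combining Proposition \ref{poincareproposition} with the identification (\ref{singletsector}).
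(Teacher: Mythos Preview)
Your proposal is correct and takes essentially the same approach as the paper: the corollary is stated there as an immediate consequence of Proposition~\ref{poincareproposition} without a separate proof, and your argument simply makes explicit the two observations that justify this---that different K\"ahler cone metrics correspond to different Reeb vectors in the Reeb cone (hence to fugacity relabellings), and that the singlet sector is the $\tau\to 0$ limit shown in the proposition to be invariant under such relabellings.
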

The equation (\ref{poincarepolynomialcore}) implies something interesting about the singlet representations $S'_1(d_{\mathbb{C}}/2)$ which are protected. Note that the highest power of $\tilde{y}$ on the right hand side of (\ref{poincarepolynomialcore}) is $\text{dim}_{\mathbb{C}}\left( \pi^{-1}(\{o\})\right)-d_{\mathbb{C}}/2$ which is strictly less than $d_{\mathbb{C}}/2$. Comparing with (\ref{singletsector}) this rules out the existence of the singlet states which are protected for the cones considered in proposition \ref{poincareproposition} (i.e. when the equivariant resolution $\tilde{X}$ has isolated fixed points under a torus action). Of course there could still be singlet states, but they are not protected (as they have $r<d_{\mathbb{C}}/2$) and so we cannot determine their degeneracy uniquely using the index.\\

We will later see this played out explicitly in the case of $A$-type quiver varieties. Now we return to consistency of the index for various classes of K\"ahler cones.

\subsection{K\"ahler Cones Over Regular Sasaki Manifolds}\label{sectionregularcones}
For K\"ahler cones over regular smooth Sasaki manifolds, Sparks, Martelli and Yau showed \cite{sparksmartelliyauvolume} that there always exists a natural equivariant smooth resolution of $X$. In this case the Sasakian link $Y$ can be exhibited as the total space of a principal circle line bundle $\pi: Y \rightarrow V$ over a K\"ahler manifold $V$, which turns out to be a normal projective algebraic variety. A resolution $W$ is then the total space of the associated complex line bundle $\mathcal{L} \rightarrow V$ of the principal circle bundle $Y \rightarrow V$. $W \backslash V$ is biholomorphic to the complement of the tip of the cone, thus $W$ is birational to $X$. If we assume as in \cite{sparksmartelliyauvolume} that on $W$ there exists a 1-parameter family of $\mathbb{T}^s$- invariant K\"ahler metrics on W which approach the K\"ahler cone metric as the parameter is taken to 0, then theorem \ref{takegoshi} can be applied, and one can therefore show that indeed the highest coefficient of $\tilde{y}$ yields a positive coefficient Laurent polynomial in the fugacities as required to be consistent with the protection of representations. 

\subsection{Ricci-Flat Calabi-Yau Cones}\label{sectioncalabiyaucones}
In this section we consider the case when $X$ is Ricci-flat, so by definition the link $Y$ is \textit{Sasaki-Einstein}. We also assume that the singularity $\{o\}$ is Gorenstein, admitting a nowhere vanishing holomorphic $(d_{\mathbb{C}},0)$ form. This is the same as requiring that the canonical bundle on $X$ is holomorphically trivial and therefore $X$ is Calabi-Yau. In this case it is possible to say more about the consistency of the definition of the superconformal index. Supposing that the Sasaki base $Y$ is smooth so that the singularity at the tip of the cone is isolated, then $X$ is a rational singularity and is normal \cite{Coeveringcrepantresolutions}. In particular this means that if $\pi: \tilde{X} \mapsto X$ is a resolution then $R^i \pi_{*} \mathcal{O}_{\tilde{X}} = 0 \quad \forall i > 0 $ and $\pi_{*} \mathcal{O}_{\tilde{X}} \simeq \mathcal{O}_{X}$, where $\mathcal{O}$ denotes the structure sheaf, and this holds for all resolutions. If the resolution $\pi$ is in addition equivariant, then one has that:
\begin{equation}
	H^0\left(\tilde{X}, \mathcal{O}_{\tilde{X}}\right) \simeq H^0\left(X, \mathcal{O}_{X}\right) \qquad H^i\left(\tilde{X}, \mathcal{O}_{\tilde{X}}\right)  = H^i\left(X, \mathcal{O}_{X}\right) = 0 \quad \forall i > 0 
\end{equation}
Where the first ismorphism is not just as abelian groups but as $T$-modules. Note that then from (\ref{regularised index}):
\begin{equation} \label{hilbertseries}
	\begin{split}
		\lim\limits_{\tilde{y} \rightarrow 0 } \tilde{y}^{d_{\mathbb{C}}/2} \mathcal{Z}_{\tilde{X}} &= \sum_{q=0}^{d_{\mathbb{C}}} (-)^{q-d_{\mathbb{C}}} \mbox{Tr}_{H^q(\tilde{X}; \mathcal{O}_{\tilde{X}})}\left(\tau^{D^I}\prod_{i}z_i^{\mathcal{J}_i}\right)\\
		& = (-)^{d_{\mathbb{C}}} \mbox{Tr}_{H^{0,0}(X)} \left(\tau^{D^I}\prod_{i}z_i^{\mathcal{J}_i}\right)\\
		& =  (-)^{d_{\mathbb{C}}} \mbox{HS}(X)
	\end{split}
\end{equation}
where $\mbox{HS}(\tilde{X})$ is the Hilbert series of the unresolved space $X$, and hence this limit of the superconformal index is invariant under choice of equivariant resolution. \\

If in addition we assume the resolution is \textit{crepant}, i.e. that the resolution has trivial canonical bundle $\pi^* \omega_{X} \simeq \omega_{\tilde{X}} \simeq \mathcal{O}_{\tilde{X}}$, then $\tilde{X}$ admits a Ricci-flat K\"ahler metric (in fact it does so in every K\"ahler class), as was shown simultaneously in \cite{Coeveringcrepantresolutions3, goto}. These are the natural resolutions to consider as they preserve the Calabi-Yau property. Then Takegoshi vanishing (theorem \ref{takegoshi}) applies and the required representations are protected as required. In fact, one can see directly that the higher direct images vanish, as $\mathcal{O}_{\tilde{X}} \cong \omega_{\tilde{X}}$ due to the triviality of the canonical bundle, and then applying the result for rational singularities above. In addition since $X$ is a rational singularity \cite{CoeveringCrepantResolutions2} $\pi_{*}\omega_{\tilde{X}} = \omega_{X}$, so replacing $\mathcal{O}$ with $\omega$ everywhere above we see that if the resolution is equivariant the character of the group action on the canonical sheaf is invariant under resolution.\footnote{Whilst $\omega_{\tilde{X}}$ can be identified as the canonical sheaf on $\tilde{X}$, $\omega_{X}$ on the singular space strictly speaking refers to the dualizing sheaf of $X$, which is $\omega_{X} \cong i_{*}(\omega_{X\textbackslash\{o\}})$ where $i : X\textbackslash \{o\} \rightarrow X$ is the inclusion. See \cite{Coeveringcrepantresolutions} for details.} Thus we have shown that the top power of $\tilde{y}$ in the superconformal index corresponding to the count of protected representations (\ref{consistencycheck2}) is invariant under choice of equivariant crepant resolution for a Calabi-Yau cone. In fact we can say even more:
\begin{proposition}
	The isomorphism of the structure sheaf and canonical sheaf of the equivariant crepant resolution of a Ricci-flat Calabi-Yau cone is a $T$-graded isomorphism.
\end{proposition}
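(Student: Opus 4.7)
The strategy is to exhibit a distinguished nowhere-vanishing holomorphic top form $\Omega_{\tilde X}$ on $\tilde X$, show it is unique up to a non-zero scalar, and then deduce that the induced trivialization $\mathcal{O}_{\tilde X}\cong\omega_{\tilde X}$ intertwines the natural $T$-equivariant structures up to a twist by a single character of $T$.

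First, the Calabi--Yau (Gorenstein) hypothesis supplies a nowhere-vanishing holomorphic $(d_{\mathbb{C}},0)$-form $\Omega_X$ on the smooth locus $X\setminus\{o\}$, and the crepant condition $\pi^{*}\omega_X\cong\omega_{\tilde X}$ allows $\pi^{*}\Omega_X$ to extend to a global nowhere-vanishing section $\Omega_{\tilde X}\in H^{0}(\tilde X,\omega_{\tilde X})$. This defines the trivialization $\iota:\mathcal{O}_{\tilde X}\to\omega_{\tilde X}$, $1\mapsto\Omega_{\tilde X}$.

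The main step is to prove that any other nowhere-vanishing section of $\omega_{\tilde X}$ is a non-zero scalar multiple of $\Omega_{\tilde X}$. Any such section has the form $f\,\Omega_{\tilde X}$ with $f\in H^{0}(\tilde X,\mathcal{O}_{\tilde X}^{*})$. Using $\pi_{*}\mathcal{O}_{\tilde X}\cong\mathcal{O}_X$ (valid because $X$ has rational singularities, as already observed in the excerpt) we may write $f=\pi^{*}\tilde f$ for some $\tilde f\in R:=H^{0}(X,\mathcal{O}_X)$, and surjectivity of $\pi$ forces $\tilde f$ to be nowhere vanishing on $X$, i.e.\ a unit in $R$. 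But the Reeb-cone condition says $R$ is a positively graded $\mathbb{C}$-algebra under $D^{I}$ with $R_{0}=\mathbb{C}$, so its only units are non-zero scalars, and hence $f$ is constant.

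With uniqueness in hand, the holomorphic $T$-action on $H^{0}(\tilde X,\omega_{\tilde X})$ must preserve the line $\mathbb{C}\,\Omega_{\tilde X}$ and act on it through a single character $\chi:T\to\mathbb{C}^{*}$. Twisting the $T$-action on $\mathcal{O}_{\tilde X}$ by $\chi$, the map $\iota$ becomes a $T$-equivariant isomorphism $\mathcal{O}_{\tilde X}(\chi)\cong\omega_{\tilde X}$, which is the claimed $T$-graded statement. The principal obstacle is the uniqueness argument: it depends both on the identification $\pi_{*}\mathcal{O}_{\tilde X}\cong\mathcal{O}_X$ from rational singularities and, crucially, on the positivity of the Reeb grading, without which units in $R$ would not reduce to scalars.
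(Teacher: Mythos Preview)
Your proof is correct but takes a genuinely different route from the paper. The paper argues analytically: it uses the Ricci-flat condition to write the volume identity $\Omega\wedge\bar\Omega=\tfrac{1}{n!}\omega^{n}$ (with no conformal factor), and then for any real holomorphic Killing vector $V$ observes that $\mathcal{L}_V\Omega=\alpha(z)\Omega$ with $\alpha$ holomorphic; applying $\mathcal{L}_V$ to the volume identity forces $\alpha(z)+\overline{\alpha(\bar z)}=0$, hence $\alpha$ is a real constant. This not only shows $\Omega$ is a $T$-eigenvector but also computes the $D^{I}$-weight explicitly as $d_{\mathbb{C}}$, which is what feeds into the factorisation identity (\ref{factorisationproperty}) that follows.

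Your argument is algebraic: uniqueness of the trivialising section of $\omega_{\tilde X}$ up to scalar, via $H^{0}(\tilde X,\mathcal{O}_{\tilde X}^{*})\cong R^{*}=\mathbb{C}^{*}$ from the positive Reeb grading, immediately forces the $T$-action to scale $\Omega_{\tilde X}$ by a character. This is cleaner and, as you note, uses Ricci-flatness only indirectly (through the rational-singularities statement already invoked in the text); the same argument would go through for any Gorenstein cone with positively graded coordinate ring admitting an equivariant crepant resolution. The trade-off is that you do not identify the character, so the $\tau^{d_{\mathbb{C}}}$ appearing on the right-hand side of (\ref{factorisationproperty}) is not yet visible from your proof alone.
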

\begin{proof}
	Let $n={d_{\mathbb{C}}}$. The Gorenstein property implies the existence of a nowhere-vanishing holomorphic $(n,0)$-form $\Omega$ on $X\textbackslash \{o\}$, which must satisfy 
	\begin{equation}\label{calabiyauproperty}
	\frac{i^n}{2^n} (-)^{n(n-1)/2} \Omega \wedge \bar{\Omega}= e^{f} \frac{1}{n!} \omega^n
	\end{equation}
	where $\omega$ is the K\"ahler potential on $X$ and $f$ is a real function, by bidegree arguments. Since $\omega^n$ gives the determinant of the K\"ahler potential/metric, one can show easily that Ricci-flatness implies $f=0$. Using the fact that $\omega$ is homogeneous degree 2 under $r\frac{\partial}{\partial r}$, (\ref{calabiyauproperty}) implies that:
	\begin{equation}
	\mathcal{L}_{r\frac{\partial}{\partial r}} \Omega = d_{\mathbb{C}} \Omega \qquad  \Rightarrow \qquad \hat{D}^I \Omega = - i \mathcal{L}_{I \left(r\frac{\partial}{\partial r}\right)} \Omega =   d_{\mathbb{C}} \Omega
	\end{equation}
	It is also easy to see that $\Omega$ must have definite eigenvalue under any other holomorphic isometry. Let $V$ be the real holomorphic vector field  generating the holomorphic isometry, that is $\bar{V} = V$ and $\mathcal{L}_{V} I = 0$. This implies $\mathcal{L}_{V}$ preserves the bidegree of forms, and commutes with $\bar{\partial}, \,\, \partial$. Thus $\mathcal{L}_V \Omega = \alpha(z) \Omega$ for some holomorphic function $\alpha(z)$, and $\mathcal{L}_V \bar{\Omega} = \bar{\alpha}(\bar{z}) \bar{\Omega}$. Acting with $\mathcal{L}_{V}$ on both sides of (\ref{calabiyauproperty}) with $f=0$ implies that:
	\begin{equation}
	\alpha(z) = \bar{\alpha}(\bar{z}) =\alpha\quad  \text{(constant)}
	\end{equation}
	We also denote by $\Omega$ the extension of $\pi^* \Omega$ to a nowhere vanishing $n$-form on $\tilde{X}$ (see for example \cite{Coeveringcrepantresolutions}). By equivariance of $\pi$ it has the same eigenvalues under the holomorphic isometry group. $\Omega$ then defines an isomorphism $\wedge \Omega : \mathcal{O}_{\tilde{X}} \xrightarrow{\sim}  \omega_{\tilde{X}}$, e.g. if $f \in \Gamma(\tilde{X}, \mathcal{O}_{\tilde{X}})$ is homogeneous degree $d$, then $f\Omega \in \Gamma(\tilde{X}, \omega_{\tilde{X}})$ is homogeneous degree $d+d_{\mathbb{C}}$, and similarly for the other holomorphic isometries. 
\end{proof}

This isomorphism implies that for Ricci-flat Calabi-Yau cones admitting equivariant crepant resolutions, there is a factorisation between the top power $\tilde{y}^{d_{\mathbb{C}}/2}$ in the index and the bottom power $\tilde{y}^{-d_{\mathbb{C}}/2}$:
\begin{equation}\label{factorisationproperty}
	(-)^{d_{\mathbb{C}}}\left(\frac{\lim\limits_{\tilde{y}\rightarrow \infty} \tilde{y}^{-d_{\mathbb{C}}/2} \mathcal{Z}(\tau, \tilde{y}, Z) }{\lim\limits_{\tilde{y} \rightarrow 0 } \tilde{y}^{d_{\mathbb{C}}/2} \mathcal{Z}(\tau, \tilde{y},Z)} \right)= \tau^{d_{\mathbb{C}}}(...)
\end{equation}
where $(...)$ corresponds to a monomial in $Z = \{z^i\}$ fugacities corresponding to the global holomorphic isometries of $X$ and the whole right hand side corresponds to the character of $\Omega$. Note that therefore on Ricci-flat Calabi-Yau cones this precludes the existence of the special $S'_1(d_{\mathbb{C}})$ states which are protected, since these correspond to top holomorphic degree forms of degree $0$ under $D^I$. \\

It is also interesting to consider the space of K\"ahler cone metrics varied over in the volume minimisation procedure in \cite{sparksmartelliyauvolume}, the AdS/CFT dual counterpart of a–maximisation in 4d superconformal field theories. These are also rational \cite{Coeveringcrepantresolutions} and have trivial canonical sheaves. Therefore the invariance and protection of the highest power of $\tilde{y}$, and the invariance of the lowest power of $\tilde{y}$ under equivariant crepant resolutions hold for these families of K\"ahler metrics also.\\

\subsection{Toric K\"ahler Cones}\label{sectiontorickahlercones}
In this section we consider toric K\"ahler cones $(X,g)$, so that $X$ is an affine toric variety of complex dimension $n$ equipped with the holomorphic and isometric action of a torus $\mathbb{T}^{n}$. Since the manifold is complex this can be extended to a holomorphic $\left(\mathbb{C}^*\right)^n$ action. We also assume that the Sasaki base $Y$ is smooth so that the tip of the cone $\{o\}$ is an isolated singularity. Toric geometry is an extremely well-developed subject and we refer the reader to the comprehensive review \cite{Coxtoricvarietiesandtoricresolutions}. This paper will assume definitions, results and notations therein. \\

As an affine toric variety, the K\"ahler cone $X$ is specified by a strongly convex rational polyhedral cone $\sigma \subset \mathbb{R}^n$ specified by a set of vectors $\{v_{\gamma}\}\in \mathbb{Z}^n \subset \mathbb{R}^n$ comprising its edges - the toric data.\footnote{Note that our convention is such that the cone and the dual cone are switched when compared to \cite{sparksmartelliyauvolume}.} These vectors can be taken to be \textit{primitive} - that is $v_{\gamma}$ cannot be written as $kv'_{\gamma}$ for some integer $k>1$ and $v'_{\gamma} \in \mathbb{Z}^n$. Unless the cone is flat, the toric variety will have singularities. \\

An affine toric variety may be resolved as follows. A \textit{fan} is a collection $\Sigma$ of strongly convex rational polyhedral cones $\{\tau\}$  such that if $\tau \in \Sigma$ any face of $\tau$ is also in $\Sigma$, and the intersection of any two cones is a face of each. We have from theorem 5.1 of \cite{Coxtoricvarietiesandtoricresolutions} that there exists a fan $\Sigma$ consisting of a collection of strongly convex rational polyhedral cones \textit{refining} $\sigma$, meaning that $\sigma$ and $\Sigma$ have the same support in $\mathbb{R}^n$ and that $\sigma$ is one of the cones of $\Sigma$, so that the toric variety $X_{\Sigma}$ corresponding to the fan is a resolution of $X$. One can further prove the existence of a toric resolution $\phi: X_{\Sigma'} \rightarrow X$ such that $\phi$ is a projective resolution. This suggests $\Sigma'$ as the appropriate refinement, since then we can apply Grauert-Riemenschneider vanishing (theorem \ref{grauertriemenschneidervanishing}) to show that the appropriate representations are protected. Further, since the singularity is rational (and normal), much of the discussion in the previous section applies (despite $X$ not necessarily being Calabi-Yau). So the Hilbert series sector of the index is also is also invariant under resolution, indicating consistency. In section \ref{sectiontoriccalabiyau3fold} we consider the example of toric Calabi-Yau 3-folds for which an even higher level of consistency can be shown.\\

The toric data conveniently encodes the fixed point data, and the regularised index (\ref{scindexlocalisationformula}) can be easily computed. Note that the toric variety associated to the fan $\Sigma'$ is smooth if and only if each cone $\tau \in \Sigma$ is  generated by a subset of a $\mathbb{Z}$-basis of $\mathbb{Z}^n$. We then have that the varieties associated with each smooth $n$-dimensional cone satisfy $X_{\tau} \cong \mathbb{C}^n $, are Zariski open in $X_{\Sigma'}$ and are then glued together in an equivariant way to form the toric variety $X_{\Sigma'}$. The faces of $\tau$ have inward facing normals $\{u_{\alpha}\}$, $\alpha=1,...,n$  which can be taken to be primitive in the dual lattice to $\mathbb{Z}^n$. One can show the fixed points of the torus action correspond to the $n$-dimensional cones in the fan. The action of $t \in \left(\mathbb{C}^*\right)^n$ action acts in each neighbourhood (let $(z_1,...,z_n)$ be coordinates in the neighbourhood) as:
\begin{equation}
	(t_1,...t_n) \cdot (z_1,...,z_n) = (t^{u_1}z_1,...,t^{u_n}z_n)
\end{equation}
where $t^{u_{\alpha}} = t_1^{u_{\alpha}^{1}}...t_n^{u_{\alpha}^{n}}$. We will assume the cone is of Reeb type, i.e. that the Reeb vector lies in the Lie algebra $\mathfrak{t}_n$ of $\mathbb{T}^n$, and all cones we consider will be of this type. Grading by the whole $\mathbb{T}^n$, the superconformal index can be calculated using (\ref{scindexlocalisationformula}) as:
\begin{equation}\label{scindextoric}
	\mathcal{Z} = (-)^n\tilde{y}^{-n/2} \sum_{\tau  \in \Sigma'(n)}\prod_{\alpha=1,...n}\frac{1-\tilde{y} x^{u_{\alpha}}}{1-x^{u_{\alpha}}}
\end{equation}
where $\Sigma'(n)$ is the set of $n$-dimensional cones in $\Sigma'$, $u_{\alpha}$ the primitive normal vectors of the faces specifying $\tau$ and $x_j$, $j=1,...,n$ are the fugacities corresponding to the action of $\mathbb{T}^n$. To write the index in terms of grading by the Reeb vector, i.e. in terms of the fugacity $\tau$, a relabelling of fugacities must be performed as follows. Giving angular coordinates $\phi_i$ to the orbits of the $\mathbb{T}^n$ action, the Reeb vector may be expressed as:
\begin{equation}
	D^{I} = b_i \frac{\partial}{\partial \phi_i}
\end{equation}
with $b \in \mathbb{R}^n$. Note that if the Sasakian link is quasi-regular (or regular) then $b \in \mathbb{Q}^n$. Assuming $b_1 \neq 0 $ since at least one of the $b_i \neq 0$, then:
\begin{equation}
	\begin{split}
		x_1^{\frac{\partial}{\partial \phi_1}}...\,\,x_n^{\frac{\partial}{\partial \phi_n}} &= e^{\frac{\log{x_1}}{b_1}\left(b_1 \frac{\partial}{\partial \phi_1} + ... + b_n  \frac{\partial}{\partial \phi_n}\right)} \prod_{i=2,...,n} e^{\left(\log x_i - \frac{b_i}{b_1} \log x_1\right)\frac{\partial}{\partial \phi_i}} \\
		&= \left((x_1)^{\frac{1}{b_1}}\right)^{D^I} \prod_{i=2,...,n}\left(x_i x_1^{-\frac{b_i}{b_1}}\right)^{\frac{\partial}{\partial \phi_i}}
	\end{split}
\end{equation}
So in (\ref{scindextoric}) relabelling:
\begin{equation} \label{fugacityrelabelling}
	x_1 \mapsto \tau^{b_1} \qquad x_i \mapsto \tau^{b_i} z_i \quad \forall\, i=2,...n
\end{equation}
the index with grading by $\tilde{y}, \tau, \{z_i\}$ is obtained as in (\ref{regularised index}).

\section{Examples}
Here the superconformal index is computed in a variety of examples, and consistency with superconformal invariance is shown. 

\subsection{Toric Calabi-Yau 3-folds }\label{sectiontoriccalabiyau3fold}
We consider the subset of the cones considered in section \ref{sectiontorickahlercones} which are also Calabi-Yau 3-folds. This class of K\"ahler cones we consider are those for which the highest level of consistency of the regularisation can be shown. These are toric cones which are Calabi-Yau 3-folds and therefore have trivial canonical bundle. First we make some general statements about toric cones which are Calabi-Yau $n$-folds. The Calabi-Yau requirement is that the primitive edges $\{v_{\gamma}\}$ lie in a hyperplane of $\mathbb{R}^n$. One can then choose a basis of $\mathbb{Z}^n$ (corresponding to an $SL(n,\mathbb{Z})$ transformation) such that the primitive edge vectors can be placed in the form $v_{\gamma} = (1,w_{\gamma})$ where $w_{\gamma}\in \mathbb{Z}^{n-1}$. The convex hull of $\{w_{\gamma}\}$ defines a convex lattice polytope $P$. We call the projection of the cone onto this hyperplane the $\textit{toric diagram}$. The existence of Ricci-flat K\"ahler metrics on such affine toric varieties was proved in \cite{futaki}. Upon choosing this metric, this class of K\"ahler cones falls into the intersection of those considered in sections \ref{sectioncalabiyaucones} and \ref{sectiontorickahlercones}.\\

The natural resolutions to look at are those preserving the Calabi-Yau property, i.e. toric crepant resolutions. Following section \ref{sectiontorickahlercones} this is equivalent to a refinement $\Sigma$ of the cone $\sigma$ defining the toric variety $X$ such that primitive edge vectors of all cones in $\Sigma$ lie in the same hyperplane. This is equivalent to a basic lattice triangulation of $P$, meaning that the vertices of each simplex lie in $\mathbb{Z}^{n-1}$ and the vertices of each $(n-1)$-dimensional simplex in the triangulation generate a basis of $\mathbb{Z}^{n-1}$. For details see \cite{Coeveringcrepantresolutions}. For $n=3$ basic lattice triangulations of the 2-dimensional polytope $P$ coincide with maximal triangulations, which are triangulations such that the lattice points contained in each simplex are only its vertices, and these always exist. Thus for K\"ahler cones which are toric Calabi-Yau 3-folds, one can always find an equivariant toric crepant resolution. \\

The statements of sections \ref{sectioncalabiyaucones} and \ref{sectiontorickahlercones} apply then to this case. Actually, in the case of toric Calabi-Yau 3-folds we can say something stronger. The equivariant toric crepant resolution described is in general not unique, there exist multiple basic lattice triangulations of $P$. However, we claim that:

\begin{proposition}
	The regularised superconformal index (\ref{regularised index}) of toric Calabi-Yau cones is invariant under choice of toric crepant resolution, of which at least one exists.
\end{proposition}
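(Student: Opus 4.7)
The plan is to reduce invariance under the full set of toric crepant resolutions to invariance under a single elementary modification, then verify that modification explicitly using the localisation formula (\ref{scindexlocalisationformula}).

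First, I would invoke the classical fact from the combinatorics of lattice polytopes (see e.g.\ Gelfand--Kapranov--Zelevinsky, or De Loera--Rambau--Santos) that any two maximal lattice triangulations of a convex lattice polygon $P \subset \mathbb{R}^2$ are connected by a finite sequence of \emph{bistellar flips}. A single bistellar flip replaces the pair of triangles $\{w_1 w_2 w_3,\, w_1 w_3 w_4\}$ covering a lattice quadrilateral with vertices $w_1, w_2, w_3, w_4$ by the pair $\{w_1 w_2 w_4,\, w_2 w_3 w_4\}$ obtained by choosing the other diagonal. Under the correspondence between maximal (equivalently, basic in 2D) triangulations of $P$ and toric crepant resolutions of $X$ described at the start of section \ref{sectiontoriccalabiyau3fold}, it therefore suffices to show that the regularised index is invariant under a single flip.

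Second, I would identify the geometric content of such a flip. By the Calabi--Yau condition the four vertices of the quadrilateral lift to primitive rays $v_i = (1, w_i) \in \mathbb{Z}^3$ spanning a 4-edged subcone of $\sigma$; its two basic triangulations produce two local threefolds which are precisely the two small crepant resolutions of the conifold $\{xy - zw = 0\}$, related by the Atiyah flop. Outside this local region the fans $\Sigma$ and $\Sigma'$ coincide, so by the localisation formula their contributions from fixed points not lying over the flopped quadrilateral are identical. Invariance of $\mathcal{Z}_{\tilde{X}}$ then reduces to the explicit identity
\begin{equation*}
\sum_{\tau \in \{\tau_1^+, \tau_2^+\}} \prod_{\alpha \in \mathcal{W}_\tau} \frac{1 - \tilde{y}\,x^{u_\alpha}}{1 - x^{u_\alpha}} \;=\; \sum_{\tau \in \{\tau_1^-, \tau_2^-\}} \prod_{\alpha \in \mathcal{W}_\tau} \frac{1 - \tilde{y}\,x^{u_\alpha}}{1 - x^{u_\alpha}},
\end{equation*}
where $\tau_i^\pm$ are the two top-dimensional cones in each triangulation of the quadrilateral and $u_\alpha$ are their inward-pointing primitive face normals.

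Third, I would prove this identity directly. The Calabi--Yau condition forces the face normals on each cone to satisfy a linear relation that identifies one of the weights at each fixed point with the ``flopping" $\mathbb{P}^1$ weight; combining the two rational terms over a common denominator, the numerators reduce to a polynomial identity whose vanishing follows from $v_1 + v_3 = v_2 + v_4$ as lattice vectors (the defining condition of a bistellar flip in a CY toric 3-fold). As a sanity check, setting $\tilde{y} \to 0$ recovers the statement that the Hilbert series of the singular conifold is resolution-independent, already established by the rationality argument of section \ref{sectioncalabiyaucones}.

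The main obstacle is the last step: the rational-function identity at the flop. Although the algebra is elementary, writing it in a manifestly $T$-equivariant and CY-covariant form requires care. An alternative route, which I would flag but not pursue, is to invoke the Bondal--Orlov derived equivalence for CY 3-fold flops in its $T$-equivariant form, from which equality of the equivariant $\chi_{-\tilde{y}}$ genus follows abstractly; the direct toric verification, however, is more elementary and self-contained in our setting.
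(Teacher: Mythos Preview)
Your proposal is correct and follows essentially the same strategy as the paper: reduce to local flops connecting any two basic triangulations, then verify the two-term localisation identity for the flopped quadrilateral.

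The paper differs slightly in how it organises the local computation. Rather than working directly with the circuit relation $v_1+v_3=v_2+v_4$, it first uses Pick's theorem to show that the flip quadrilateral has area $1$ with exactly four lattice points (its vertices), and then invokes a classification result of Rabinowitz that every such convex lattice quadrilateral is $SL(2,\mathbb{Z})$-and-translation equivalent to the unit square (the conifold diagram). This gives an explicit six-parameter family
\[
\{(1,e,f),\ (1,a+e,c+f),\ (1,a+b+e,c+d+f),\ (1,b+e,d+f)\},\qquad ad-bc=1,
\]
for which the primitive face normals are computed via cross products (read off from the $pq$-web) and the two fixed-point sums are checked to agree by direct substitution. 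Your route via the parallelogram relation is equivalent---indeed the Rabinowitz parametrisation makes $w_1+w_3=w_2+w_4$ manifest---but the paper's explicit coordinates turn the ``polynomial identity whose vanishing follows from $v_1+v_3=v_2+v_4$'' (which you leave as a claim) into a concrete finite calculation. Your alternative suggestion via the equivariant Bondal--Orlov derived equivalence is not used in the paper.
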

\begin{proof}
	This is equivalent to proving that the equivariant $\chi_{-\tilde{y}}$ genus of the resolved spaces are the same. In 3 complex dimensions such crepant toric resolutions can be reached from another by a sequence of \textit{flop} transitions corresponding to re-triangulating convex quadrilaterals formed by neighbouring triangles \cite{kollarflops, roanflops}. Two examples can be seen in figure \ref{flopdiagram}. Since the flop changes the formula (\ref{scindextoric}) in a local way, one need only show the contribution of the convex quadrilateral is invariant under the flop. Pick's formula says that if a lattice polygon of area contains $b$ lattice points on its boundary and $g$ in its interior, then its area is given by $A= b/2 +g-1$. A convex quadrilateral formed from two triangles in the basic lattice triangulation has $b=4$ and $g=0$ therefore area $A=1$. In \cite{rabinowitzconvexlatticepolygons} it is proven that such convex lattice quadrilaterals with no interior points are always related via at most an $SL(2; \mathbb{Z})$ transformation and an integer translation in the lattice to the toric diagram of the conifold, that is the diagram with $\{w_{\alpha}\}=\{(0,0), (1,0), (1,1), (0,1)\}$. Such a transformation takes the form:
	\begin{equation}
	\begin{pmatrix}
	x\\ 
	y
	\end{pmatrix}
	\mapsto
	\begin{pmatrix}
	a & b \\
	c & d
	\end{pmatrix}
	\begin{pmatrix}
	x \\
	y
	\end{pmatrix}+
	\begin{pmatrix}
	e\\
	f\\
	\end{pmatrix}
	\end{equation}
	where $a,...,f \in \mathbb{Z}$ and $ad-bc=1$. The vertices of a generic convex lattice quadrilateral formed from two triangles in the triangulation therefore has vertices:
	\begin{equation}
	\{v_{\gamma}\}=\{(1,w_\gamma)\}=\left\{(1,e,f), (1,a+e,c+f),(1,a+b+e,c+d+f),(1,b+e,d+f)\right\}
	\end{equation}
	We can triangulate this by either joining $v_1$ and $v_3$ or $v_2$ and $v_4$. In 3 dimensions an easy way of determining the primitive normal vectors is by taking the cross product of the edge vectors defining the hyperplane. The cross product is guaranteed to be primitive from the condition $ad-bc=(a+b)d-b(c+d)=1$ ensuring that the e.g. $a$ and $b$ are coprime, as are $a+b$ and $c+d$. A convenient picture is the so-called \textit{pq-web}, which is the graph theory dual of the toric diagram obtained by mapping faces $\leftrightarrow$ vertices and edges to orthogonal edges. The external legs of the pq-web are then the projection of the primitive normal vectors onto the hyperplane with normal $(1,0,0)$, and the vertices of the pq-web correspond to the fixed points. For one of two possible triangulations this is illustrated in figure \ref{pqwebdiagram1}.
		\begin{figure}
		\begin{minipage}{0.5\textwidth}
			\centering
			\includegraphics[height=25mm,angle=0,  clip=true]{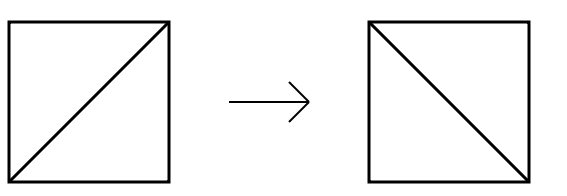} 
		\end{minipage}\hfill
		\begin{minipage}{0.5\textwidth}
			\centering
			\includegraphics[height=40mm,angle=0,  clip=true]{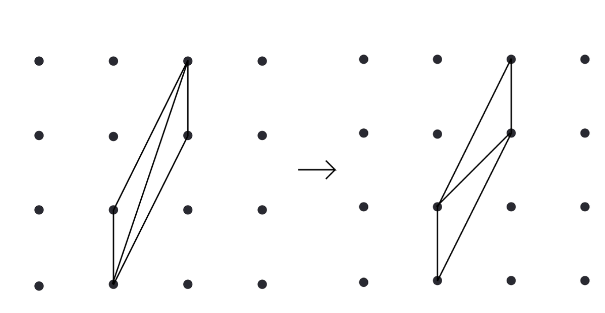} 
		\end{minipage}
		\caption{The conifold flop transition, and an example of a flop transition corresponding to a different convex quadrilateral}\label{flopdiagram}
	\end{figure}
	\begin{figure}
		\hspace*{-15mm}
		\begin{minipage}{0.5\textwidth}
			\centering
			\includegraphics[height=60mm,angle=0,  clip=true]{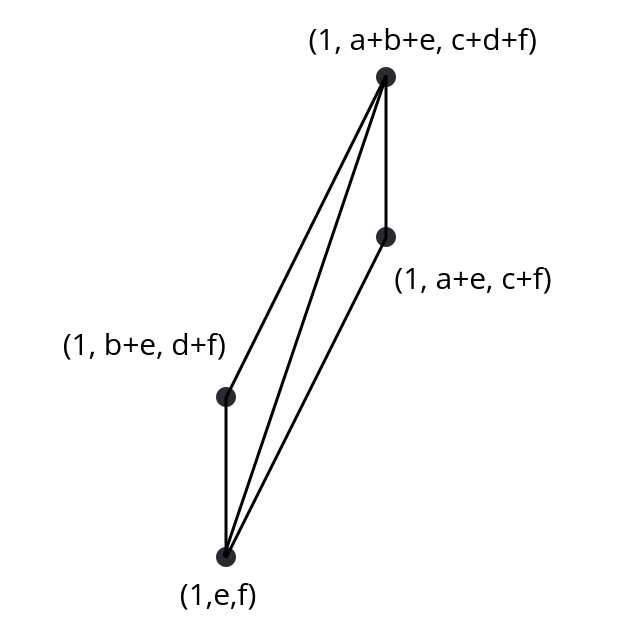} 
		\end{minipage}\hspace{-10mm}
		\begin{minipage}{0.5\textwidth}
			\centering
			\includegraphics[height=75mm,angle=0,  clip=true]{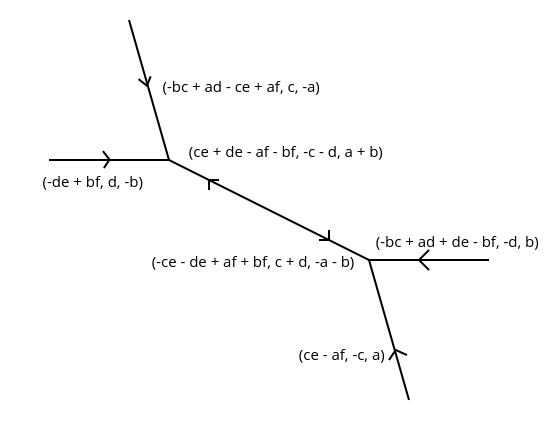} 
		\end{minipage}
		\caption{A triangulation of the generic convex quadrilateral with no interior lattice points, and its pq-web}\label{pqwebdiagram1}
	\end{figure}
	Given the pq-web the contribution to (\ref{scindextoric}) is easily calculated, and equally so is that of the alternate triangulation which has a different pq-web. Their contributions match and therefore the superconformal index is independent of resolution.
	\end{proof}
\subsubsection{The Conifold}
We consider now a simple example of a toric Calabi-Yau K\"ahler cone, the conifold, found commonly in the physics literature. Its diagram was specified above in figure \ref{flopdiagram}, and one of two triangulations corresponds to figure \ref{pqwebdiagram1} with $a=d=1$, $b=c=e=f=0$. Of course the index is invariant under choice of triangulation as proved above. The index is given by (\ref{scindextoric}):
\begin{equation}
	(-)^3 \tilde{y}^{-\frac{3}{2}}\left[\frac{\left(1-\frac{x_3 \tilde{y}}{x_2}\right) \left(1-x_2 \tilde{y}\right) \left(1-\frac{x_1 \tilde{y}}{x_3}\right)}{\left(1-\frac{x_3}{x_2}\right) \left(1-x_2\right) \left(1-\frac{x_1}{x_3}\right) }+\frac{\left(1-\frac{x_1 \tilde{y}}{x_2}\right) \left(1-\frac{x_2 \tilde{y}}{x_3}\right) \left(1-x_3 \tilde{y}\right)}{\left(1-\frac{x_1}{x_2}\right) \left(1-\frac{x_2}{x_3}\right) \left(1-x_3\right)}\right]
\end{equation}
In \cite{smyauamaximisation} the Reeb vector corresponding to the Ricci-flat metric was found via an extremisation procedure to be $b = (3,3/2,3/2)$ and so relabelling fugacities:
\begin{equation}
	x_1 \mapsto \tau^3 \qquad x_2 \mapsto \tau^{\frac{3}{2}}z_2 \qquad x_3 \mapsto \tau^{\frac{3}{2}}z_3
\end{equation}
the superconformal index $\mathcal{Z}\left(\tilde{y}, \tau, z_2, z_3 \right)$ is therefore:
\begin{equation}
	\begin{split}
		\mathcal{Z} &=(-)^3 \tilde{y}^{-\frac{3}{2}} \left[\frac{\left(1-\frac{\tilde{y} z_3}{z_2}\right) \left(1-\tau ^{3/2} \tilde{y} z_2\right) \left(1-\frac{\tau ^{3/2} \tilde{y}}{z_3}\right)}{\left(1-\frac{z_3}{z_2}\right) \left(1-\tau ^{3/2} z_2\right) \left(1-\frac{\tau ^{3/2}}{z_3}\right)}+\frac{\left(1-\frac{\tilde{y} z_2}{z_3}\right) \left(1-\frac{\tau ^{3/2} \tilde{y}}{z_2}\right) \left(1-\tau ^{3/2} \tilde{y} z_3\right)}{\left(1-\frac{z_2}{z_3}\right) \left(1-\frac{\tau ^{3/2}}{z_2}\right) \left(1-\tau ^{3/2} z_3\right)}\right]\\
		&= (-)^3 \tilde{y}^{-\frac{3}{2}}  \left(
		\begin{array}{c}
			\tau ^6 \tilde{y}^3 z_2 z_3-\tau ^3 \tilde{y}^3 z_2 z_3+\tau ^3 \tilde{y}^2-2 \tau ^{9/2} \tilde{y}^2 z_2 z_3^2-2 \tau ^{9/2} \tilde{y}^2 z_2-2 \tau ^{9/2} \tilde{y}^2 z_3 \\
			-2 \tau ^{9/2} \tilde{y}^2 z_2^2 z_3+\tau ^6 \tilde{y}^2 z_2 z_3+\tau ^3 \tilde{y}^2 z_2^2+\tau ^3 \tilde{y}^2 z_3^2+\tau ^3 \tilde{y}^2 z_2^2 z_3^2+3 \tau ^3 \tilde{y}^2 z_2 z_3\\
			+\tau ^3 \tilde{y}-2 \tau ^{3/2} \tilde{y} z_2 z_3^2-2 \tau ^{3/2} \tilde{y} z_2-2 \tau ^{3/2} \tilde{y} z_2^2 z_3-2 \tau ^{3/2} \tilde{y} z_3+\tau ^3 \tilde{y} z_2^2\\
			+\tau ^3 \tilde{y} z_2^2 z_3^2+\tau ^3 \tilde{y} z_3^2+3 \tau ^3 \tilde{y} z_2 z_3+\tilde{y} z_2 z_3-\tau ^3 z_2 z_3+z_2 z_3
		\end{array}
		\right) \\
		&\qquad\qquad \qquad \Bigg/ z_2z_3 \left(1-\frac{\tau^{\frac{3}{2}}}{z_2}\right)\left(1-\tau^{\frac{3}{2}}z_2\right)\left(1-\frac{\tau^{\frac{3}{2}}}{z_3}\right)\left(1-\tau^{\frac{3}{2}}z_3\right)
	\end{split}
\end{equation}
where we have chosen to grade by $D^I$, $\partial/\partial \phi_2$ and $\partial/\partial \phi_3$. We can check the lowest coefficient of $\tilde{y}$ yields:
\begin{equation}
	\begin{split}
		- \lim\limits_{\tilde{y} \rightarrow 0 } \tilde{y}^{3/2} \mathcal{Z} & = \frac{1-\tau ^3}{\left(1-\frac{\tau ^{3/2}}{z_2}\right) \left(1-\tau ^{3/2} z_2\right) \left(1-\frac{\tau ^{3/2}}{z_3}\right) \left(1-\tau ^{3/2} z_3\right)}\\
		& = 1+\tau ^{3/2} \left(z_2+z_3+\frac{1}{z_3}+\frac{1}{z_2}\right)+...\\
		&\quad+\tau ^3 \left(z_2^2+z_3 z_2+\frac{z_2}{z_3}+z_3^2+\frac{1}{z_3^2}+\frac{z_3}{z_2}+\frac{1}{z_3 z_2}+\frac{1}{z_2^2}+1\right)+...\\
		&\quad+\tau ^{9/2} \Bigg(z_2^3+z_3 z_2^2+\frac{z_2^2}{z_3}+z_3^2 z_2+\frac{z_2}{z_3^2}+z_2+z_3^3+z_3+\frac{1}{z_3}+\frac{1}{z_3^3}+\frac{z_3^2}{z_2}+ ...\\
		&\qquad \qquad \qquad +\frac{1}{z_3^2 z_2}+\frac{1}{z_2}+\frac{z_3}{z_2^2}+\frac{1}{z_3 z_2^2}+\frac{1}{z_2^3}\Bigg)+O\left(\tau ^{11/2}\right)
	\end{split}
\end{equation}
i.e. indeed a polynomial in positive powers of $\tau$ with coefficients positive (Laurent) polynomials in the other fugacities, as expected of a Hilbert series and so in agreement with (\ref{hilbertseries}). It is also easy to check that the highest coefficient of $\tilde{y}$ is:
\begin{equation}
	\lim_{\tilde{y}\rightarrow \infty} \tilde{y}^{-\frac{3}{2}} \mathcal{Z} = \tau^3  \left(- \lim\limits_{\tilde{y} \rightarrow 0 } \tilde{y}^{3/2} \mathcal{Z}\right)
\end{equation}
and therefore also has positive expansion in powers of $\tau$, verifying consistency check (\ref{consistencycheck2}). The monomial $\tau^{d_{\mathbb{C}}}$ corresponds to the character of the group action corresponding to fugacities $\{\tau, z_{2}, z_{3}\}$ on the nowhere-vanishing holomorphic $(3,0)$ form $\Omega$ described in section \ref{sectioncalabiyaucones}.\\

We note that the formula for the superconformal index for the conifold is reminiscent of the `single particle index' of the Klebanov-Witten 4d SCFT obtained in \cite{Gadde:2010en}.\footnote{We thank the anonymous referee for drawing our attention to this fact.} This is suggestive of a relation between the superconformal quantum mechanical index for Calabi-Yau 3-fold conical singularities and the four-dimensional superconformal index for the N = 1 SCFT arising from D3-branes probing such a singularity. We leave an exploration of this to future work.

\subsubsection{The $Y^{p,q}$ singularities} 
Here we consider the infinite family of toric Calabi-Yau 3-folds called the $Y^{p,q}$ singularities with $p$ and $q$ coprime, see e.g. \cite{sparksmartelliYpq, Ypqsingularities}. These have the following toric data:
\begin{equation}
	\{v_{\gamma}\} = \{(1,0,0), (1,1,0), (1,p,p), (1,p-q-1,p-q)\}
\end{equation}
The toric diagram is given in figure \ref{ypqdiagram}. The Reeb vector is given by \cite{sparksmartelliYpq, Ypqsingularities}: 
\begin{equation}\label{reebvectorypq}
	b = \left(3, \frac{1}{2}\left(3p-3q+l^{-1}\right),  \frac{1}{2}\left(3p-3q+l^{-1}\right)\right)
\end{equation}
where $l^{-1} = \frac{1}{q}\left(3q^2-2p^2+p\sqrt{4p^2-3q^2}\right)$.
\begin{figure}
	\centering
	\includegraphics[height=60mm,angle=0,  clip=true]{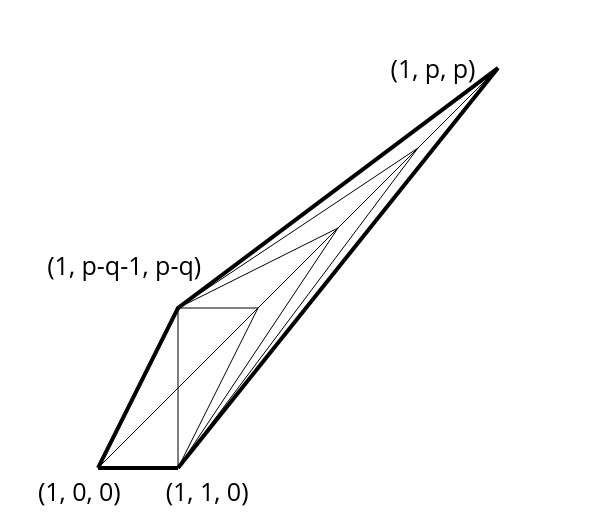}
	\caption{The toric diagram of the $Y^{p,q}$ singularities, together with a possible basic lattice triangulation by including interior points $\{(1,a,a)\}$ with $a=1,..,p-1$}\label{ypqdiagram}
\end{figure}
We denote by $ A_a$ the triangles in the triangulation with vertices $\{(1,a-1,a-1),(1,1,0),(1,a,a)\}$ and inward facing normals $\{(1-a,-1+a,2-a),(a,-a,-1+a),(0,1,-1)\}$. We denote by $B_{a}$ the triangles with vertices $\{(1,a-1,a-1), (1,a,a), (1,p-q-1,p-q)\}$, $a=1,...,p$ and inward facing normals $\{(0,-1,1), (a,a-p+q,-1-a+p-q), (1-a,1-a+p-q,a-p+q))\}$. It is then easy to write down the superconformal index as:
\begin{equation}
	\begin{split}
		\mathcal{Z} =& -\tilde{y}^{3/2} \sum_{a=1,..,p} \Bigg[\frac{\left(1-\tilde{y}\frac{x_2}{x_3}\right) \left(1-\tilde{y} x_1^{1-a} x_2^{a-1} x_3^{2-a}\right) \left(1-\tilde{y} x_1^a x_2^{-a} x_3^{a-1}\right)}{\left(1-\frac{x_2}{x_3}\right) \left(1-x_1^{1-a} x_2^{a-1} x_3^{2-a}\right) \left(1-x_1^a x_2^{-a} x_3^{a-1}\right)}\\
		& +\frac{\left(1- \tilde{y} \frac{x_3}{x_2}\right) \left(1-\tilde{y} x_1^a x_2^{a-p+q} x_3^{-a+p-q-1}\right) \left(1-\tilde{y} x_1^{1-a} x_2^{-a+p-q+1} x_3^{a-p+q}\right)}{\left(1-\frac{x_3}{x_2}\right) \left(1-x_1^a x_2^{a-p+q} x_3^{-a+p-q-1}\right) \left(1-x_1^{1-a} x_2^{-a+p-q+1} x_3^{a-p+q}\right)}\Bigg]
	\end{split}
\end{equation}
The first term in the summand is from fixed points corresponding to triangles $\{A_a\}$ and the second from $\{B_a\}$. To obtain the index in fugacity $\tau$, make the relabelling (\ref{fugacityrelabelling}) using the above Reeb vector. The resulting expression is rather grotesque so we omit it here. However we can easily see that: 
\begin{equation}
\lim_{\tilde{y}\rightarrow \infty} \tilde{y}^{-\frac{3}{2}} \mathcal{Z} = x_1  \left(- \lim\limits_{\tilde{y} \rightarrow 0 } \tilde{y}^{3/2} \mathcal{Z}\right)
\end{equation}
since the ratio of the highest power of $\tilde{y}$ and the lowest is $-x_1$ at each individual fixed point. Performing the rescaling of fugacities corresponding to (\ref{reebvectorypq}), we see we may factor out $\tau^3$, the character corresponding to the nowhere vanishing $(3,0)$-form.

\subsection{Quiver Varieties}\label{sectionquivervarieties}

\subsubsection{Nakajima Quiver Varieties}
A large class of examples to which our analysis applies are the Nakajima quiver varieties. In fact, it is possible to define on them a hyperK\"ahler metric specified by the hyperK\"ahler quotient construction, thus these varieties are hyperK\"ahler cones. As stated before, on such spaces the superconformal algebra is enlarged to $\mathcal{N} = (4,4)$ and the model was completely analysed in the papers \cite{doreybarns-graham, doreysingleton}. HyperK\"ahler spaces are $4k$-dimensional where $k$ is an integer. The hyperK\"ahler metric is Ricci-flat, and the cone is Calabi-Yau since there is a global holomorphic $(2k,0)$-form given by the $k$-th exterior power of the canonical holomorphic symplectic $(2,0)$-form $\omega_{\mathbb{C}}$. The superconformal index of K\"ahler cones coincides with the index defined for hyperK\"ahler cones in \cite{doreysingleton} when the cone is hyperK\"ahler. As stated before there is a manifest $y \rightarrow 1/y$ symmetry, where $y = \tilde{y}\tau$, induced via the action of the operator $\wedge \omega_{\mathbb{C}}$. This clearly implies the factorisation property (\ref{factorisationproperty}), but is obviously stronger. The reason we introduce them here is to consider instead a different Reeb vector in the Reeb cone, defining a K\"ahler metric which is not the hyperK\"ahler metric.\footnote{The hyperK\"ahler metric should coincide with the one found by the volume minimisation procedure of \cite{sparksmartelliyauvolume}, since it is Ricci-flat and Calabi-Yau and hence defines a Sasaki-Einstein manifold. The results of section \ref{sectioncalabiyaucones} therefore hold when the link is smooth and the singularity is isolated.} Although for a given Reeb vector we cannot prove the existence of a metric corresponding to it, assuming that it does exist an interesting analysis of the superconformal index can be done. \\

Note that for Nakajima quiver varieties, and also the handsaw quiver varieties defined in the following section, it is possible that there are in addition to the singularity at the tip of the cone, singular subspaces which intersect the tip. In this case the Sasakian base $Y$ is singular. We naturally extend the definition of the Reeb cone (as the subset of $\mathfrak{t}_s$ so that all non-constant holomorphic functions on the variety are graded positively under the action) to this case. We will see that the canonical Reeb vector induced from affine space in the construction of quiver varieties lies in this subset.\\

We briefly recap the definition of Nakajima quiver varieties. Here we give their construction as complex algebraic varieties, following for instance \cite{nakajimaquivervarietiesandfinitedimensionalreps}. The hyperK\"ahler quotient construction and their equivalence is also described therein. A Nakajima quiver variety is specified by a (double-arrow) quiver $\Gamma= (I,E)$, where $I$ is a set of vertices and $E$ a set of edges between them. Let $H$ be the set of pairs consisting of a pair and its orientation. For an oriented edge $h$, let $\bar{h}$ be the same edge with reversed orientation. Let $\text{in}(h)$ be the incoming vertex of $h$, and $\text{out}(h)$ be the outgoing one. Let $\Omega$ be an orientation of the graph, i.e. $\Omega \subset H$ such that $\Omega \cup \bar{\Omega} = H$,  $\Omega \cap \bar{\Omega} = \emptyset$. In addition we specify $k \in \mathbb{Z}_{>0}^{I}$ and $N \in \mathbb{Z}_{\geq0}^{I}$. Let $V_i \cong \mathbb{C}^{k_i}$, $V \cong \bigoplus_i V_i$ and $W_i \cong \mathbb{C}^{N_i}$, $W \cong \bigoplus_i W_i$.  Define the affine space:
\begin{equation}
\begin{split}
M\equiv M(k,N)&=\bigoplus_{(i,j)\in
	\Omega}\text{Hom}\left(V_i, V_j\right)\oplus\text{Hom}\left(V_j, V_i\right)\\
&\phantom{=}\oplus\bigoplus_{i\in
	I}\text{Hom}\left(W_i, V_i\right)\oplus
\text{Hom}\left(V_i, W_i\right)
\end{split}
\end{equation}
and a general element of this direct sum to be:
\begin{equation}
(B, \tilde{B},i,j) = \left(\{B_h\}_{h\in \Omega}, \{\tilde{B}_h\}_{h\in\Omega},\{i_k\}_{k \in I}, \{j_k\}_{k\in I}\right)
\end{equation} 
If we specify: 
\begin{equation}
\mu(B, \tilde{B},i,j) = [B, \tilde{B}]+ij \in \mbox{End}(V,V)
\end{equation}
$\mu^{-1}(0)$ defines a variety. Define $G = \prod\mbox{GL}(V_i)$ acting on $M$ by
\begin{equation}\label{quivergaugeaction}
g \in G : (B, \tilde{B},i,j) \mapsto (g^{-1} B g , g^{-1} \tilde{B} g, g^{-1}i, jg)
\end{equation}
which acts on $\mu^{-1}(0)$ since it commutes with $\mu$. A point $(B, \tilde{B},i,j) \in \mu^{-1}(0)$ is said to be \textit{stable} if whenever a graded subspace $S=\bigoplus_{i=I}^{I}S_i$ of V is invariant under $B$ and $\tilde{B}$, and contained in $\text{ker}\, j$, then $S=0$. Defining $\mu^{-1}(0)^s$ to be the set of stable points, we can define the Nakajima quiver varieties:
\begin{equation}
\mathscr{M}_0 = \mu^{-1}(0) /\!\!/ G \qquad \mathscr{M} = \mu^{-1}(0)^s/G
\end{equation}
where the former quotient is the affine GIT quotient, and the latter can be regarded as a symplectic quotient. $\mathscr{M}_0$ is a hyperK\"ahler cone, and $\mathscr{M}$ is its projective symplectic resolution, in that there is a natural projective morphism $\pi: \mathscr{M} \rightarrow \mathscr{M}_0$ such that the $\pi^* \omega$, the pullback of the symplectic form on the open set of smooth points of $\mathscr{M}_0$, can be extended to a symplectic form on all of $\mathscr{M}$. In the hyperK\"ahler quotient construction, $\mu$ is the complex moment map for the action of $G_{\mathbb{R}}=\prod\mbox{U}(V_i)$, the real form of $G$. There is an addition a real moment map $\mu_{\mathbb{R}}$. Then $\mathscr{M}_0 = \left(\mu^{-1}(0)\cap \mu_{\mathbb{R}}^{-1}(0)\right)/G_{\mathbb{R}}$, and $\mathscr{M} = \left(\mu^{-1}(0)\cap \mu_{\mathbb{R}}^{-1}(\zeta_{\mathbb{R}})\right)/G_{\mathbb{R}}$, where $\zeta_{\mathbb{R}} \in \mathbb{R}^I$ is generic. We call these $X$ and $\tilde{X}$ hereonafter for consistency.

\begin{theorem}\label{lusztigtheorem} (Lusztig \cite{lusztigonquivervarieties}): The coordinate ring of a general quiver variety is generated by the elements:
\begin{itemize}
	\item $\text{Tr}\left(C_{h_r}C_{h_{r-1}}...C_{h_1}\right)$ where $h_1, h_2...,h_r$ defines a \textit{cycle} in $I$, i.e. a sequence in $H$ such that $\text{out}(h_1) = \text{in}(h_2)$, $\text{out}(h_2) = \text{in}(h_3)$, ... , $\text{out}(h_r) = \text{in}(h_1)$, and 
	\begin{equation}
	C_{h} =
	\begin{cases}
	B_{h}	 & \text{if}\,\, h\in \Omega \\
	\tilde{B}_{h} & \text{if}\,\, h\in 
	\bar{\Omega}
	\end{cases}
	\end{equation}
	
	\item $\langle\eta, j_{\text{out}(h_r)}C_{h_r}C_{h_{r-1}}...C_{h_1}i_{\text{in}(h_1)}\rangle$ where $h_1, h_2...,h_r$ defines a \textit{path} in $I$, i.e. a sequence in $H$ such that $\text{out}(h_1) = \text{in}(h_2)$, $\text{out}(h_2) = \text{in}(h_3)$, ... , $\text{out}(h_{r\shortminus1}) = \text{in}(h_r)$, and $\eta$ is a linear form on $\text{Hom}\left(W_{\text{in}(h_1)}, W_{\text{out}(h_r)}\right)$.
\end{itemize}
\end{theorem}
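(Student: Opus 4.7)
The plan is to reduce the statement to the first fundamental theorem (FFT) of classical invariant theory for products of general linear groups. Since $X = \mathscr{M}_0$ is the affine GIT quotient $\mu^{-1}(0)/\!\!/G$, its coordinate ring equals $(\mathbb{C}[M]/(\mu))^G$. Because $G = \prod_i \mathrm{GL}(V_i)$ is reductive, taking $G$-invariants is exact and this ring equals $\mathbb{C}[M]^G/J$, where $J$ is generated by the $G$-invariant consequences of the matrix entries of $\mu$. It therefore suffices to exhibit generators of $\mathbb{C}[M]^G$ as a $\mathbb{C}$-algebra; their images in the quotient then generate $\mathbb{C}[X]$.

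The first step is to decompose $M$ as a $G$-module. Using $\mathrm{Hom}(V_a,V_b)\simeq V_b\otimes V_a^*$, each $B_h$ or $\tilde B_h$ contributes a copy of $V_{\mathrm{in}(h)}\otimes V_{\mathrm{out}(h)}^*$ (resp. with the roles of $\mathrm{in}$ and $\mathrm{out}$ swapped), while $i_k$ and $j_k$ contribute $V_k\otimes W_k^*$ and $W_k\otimes V_k^*$, the $W_k$ factors being trivial under $G$. The second step is to apply the FFT vertex-by-vertex: polynomial $\mathrm{GL}(V_i)$-invariants on a direct sum of copies of $V_i$ and $V_i^*$ (tensored with spectators) are spanned by complete contractions pairing upper $V_i$-indices with lower $V_i^*$-indices. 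Since $G$-invariance is simultaneous $\mathrm{GL}(V_i)$-invariance for all $i$, a general generator arises by choosing, at each vertex $i$ independently, a perfect matching of the upper and lower $V_i$-indices occurring across the tensor factors $B_h,\tilde B_h, i_k, j_k$.

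The third step translates these matchings into the quiver language. Form an auxiliary multigraph whose nodes are the tensor factors and whose edges are the chosen contractions. Since each $B_h$ or $\tilde B_h$ supplies exactly one $V$-upper and one $V^*$-lower index (at possibly distinct quiver vertices), while each $i_k$ (resp. $j_k$) supplies a single $V$-upper (resp. $V^*$-lower) index, an index count forces each connected component of the auxiliary graph to be either a closed alternating loop through the $C_h$'s or an open alternating chain whose two endpoints are of types $i_{\mathrm{in}(h_1)}$ and $j_{\mathrm{out}(h_r)}$. Reading off these components recovers precisely $\mathrm{Tr}(C_{h_r}\cdots C_{h_1})$ for cycles in $\Gamma$ and $\langle\eta,\, j_{\mathrm{out}(h_r)}C_{h_r}\cdots C_{h_1}i_{\mathrm{in}(h_1)}\rangle$ for paths in $\Gamma$, the form $\eta$ encoding the free $W$-endpoints. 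An arbitrary invariant is then a product of such connected-component monomials.

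The main technical obstacle is making the passage from the abstract FFT to this precise combinatorial form rigorous; the cleanest route is to invoke Procesi's theorem on simultaneous invariants of several matrices together with its Le Bruyn--Procesi extension to framed quiver data, which automatically outputs generators of the two types above. Two secondary points must be handled: ruling out branchings in the auxiliary graph (immediate from the index count) and upgrading multilinear statements to polynomial ones, which is standard via polarisation since each $B_h, \tilde B_h, i_k, j_k$ enters linearly. Reductivity of $G$ then guarantees the descent from $\mathbb{C}[M]^G$ to $\mathbb{C}[X]$, completing the proof.
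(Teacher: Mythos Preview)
The paper does not supply a proof of this theorem; it is quoted as a known result and attributed to Lusztig. So there is nothing to compare against on the paper's side.

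Your argument is essentially correct and is in fact the standard route to results of this type. The reduction to $\mathbb{C}[M]^G$ via reductivity of $G$ is fine, and the identification of generators by successive application of the FFT for each $\mathrm{GL}(V_i)$ is the right idea. The combinatorial reading of the resulting complete contractions as traces along cycles and matrix elements along open paths is precisely the content of the Le~Bruyn--Procesi theorem (for the unframed part) together with its extension to framed data, which is what Lusztig's result packages in the quiver-variety setting. One small remark: you do not actually need to pass through $\mu^{-1}(0)$ and then argue about the ideal $J$; since the claim is only about \emph{generators}, it suffices to exhibit generators of $\mathbb{C}[M]^G$ and observe that their images generate any quotient. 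You state this correctly but then spend a line on $J$ that is not needed. Otherwise the sketch is sound.
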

\vspace*{5mm}
We now specialise to the \textit{linear} or $A$-type quivers, whose vertices $I$ (also known as gauge nodes in the physics literature) form the Dynkin diagram of the Lie algebra $A_{n}$. The arguments extend easily to the general Nakajima quiver variety. The quiver diagram is given in figure \ref{linearquiverdiagram}. Notice a slight change in the notation for $B, \tilde{B}$.
\begin{figure}
	\centering
	\includegraphics[height = 35mm,angle=0,  clip=true]{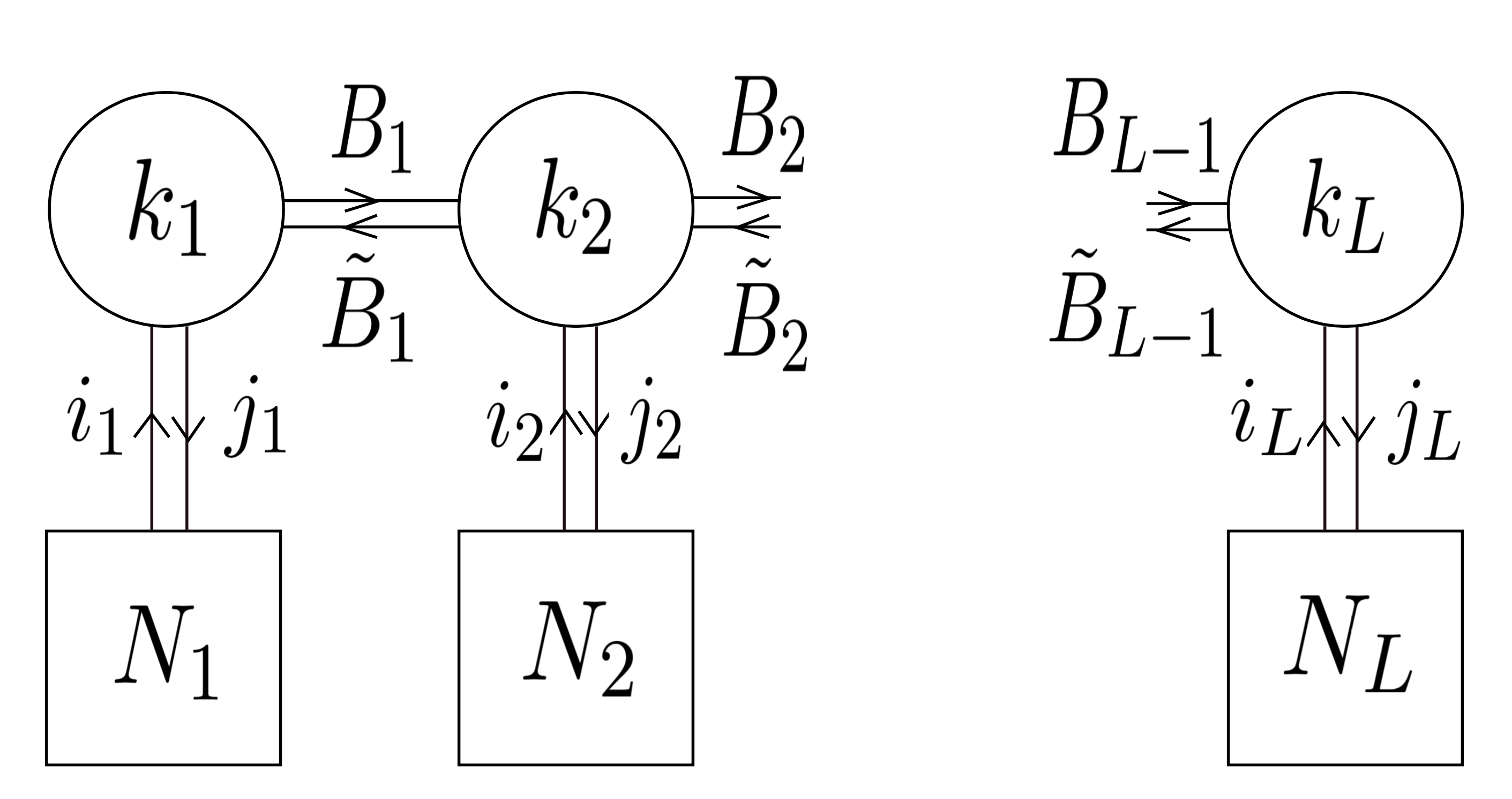}
	\caption{The quiver diagram for the general linear quiver. Double arrows have been used to indicate the components of the pre-quotient affine space. Note we are using the notation of Nakajima in \cite{nakajimahandsaw}. In the $4d$ $\mathcal{N}=2$ gauge theory notation, the double arrows would be replaced by single undirected edges. In the $4d$ $\mathcal{N}=1$ notation single arrows from each gauge node to itself corresponding to adjoint chirals should be included.}\label{linearquiverdiagram}
\end{figure}
The quiver variety has a natural action of $G_w \equiv \prod_{i=1}^{L}\text{GL}(W_i)$ in addition to the Reeb vector, this acts as:
\begin{equation}
(\tau, h) \in \mathbb{C}^* \times G_w\,\,,  \qquad (\tau, h)  \cdot (B, \tilde{B},i,j) \mapsto (\tau B, \tau \tilde{B}, \tau i h^{-1}, \tau h j)
\end{equation}
We take the torus of holomorphic isometries to be $T = \mathbb{C}^* \times \prod_{i=1}^{L}\mathbb{T}(W_i)$ where $\mathbb{T}(W_i)$ is the diagonal maximal torus of $\text{GL}(W_i)$. We introduce fugacities $\{z_{i}^{\alpha}\}$ where $i\in I$ and $\alpha = 1, ... ,N_i$. The linear quiver has isolated fixed points under $T$ and therefore its superconformal index can be computed using the localisation formula (\ref{scindexlocalisationformula}), see \cite{doreybarns-graham} for details.\\

We now proceed to characterise the Reeb cone for the general $A$-type quiver, specified by requiring that all non-constant functions have positive weights. The $\mathbb{C}^*$ action above corresponds to the action of the Reeb vector $D^I$ for the hyperK\"ahler metric obtained as a result of the hyperK\"ahler quotient construction of the quiver variety, and is induced from the canonical dilatation/Reeb vector on the pre-quotient affine space. The superconformal index corresponding to a different choice of Reeb vector will be given by a relabelling of fugacities. Let the new Reeb vector be given by:
\begin{equation}\label{newreebvector}
\tilde{D}^I =  \nu D^I + \lambda_{i}^a \frac{\partial}{\partial \phi_{i}^a}
\end{equation}
where $\{\frac{\partial}{\partial \phi_{i}^a}\}$, $a=1, ..., N_i$ generate $\mathbb{T}(W_i)$. Then the relabelling is:
\begin{equation}
\tau \mapsto \tilde{\tau}^{\nu} \qquad z_i^a \mapsto \tilde{\tau}^{\lambda_{i}^a}z_i^a
\end{equation}
where $\tilde{\tau}$ is the fugacity corresponding to the new Reeb vector. The Reeb cone will describe a cone in $\left( \nu, \{\lambda_{i}^a\}\right)$ which can be regarded as coordinates in $T$. Note that instead of the full $G_{w}$, the true action on the quiver variety is given by $G_{w}/\mathbb{C}^*$ where $\mathbb{C}^*$ is  the action on the linear data generated by the identity element in $G_{w}$, i.e. $h = \otimes_{i=1}^{L} \text{id}_{W_i}$, since this is actually a gauged out in the quotient (\ref{quivergaugeaction}). Thus to obtain the Reeb cone we should impose $\sum_{i}^{L} \sum_{a}^{N_i} \lambda_{i}^a = 0$, or equivalently choose a basis of $G_{w}/\mathbb{C}^*$.\\

Consider the generators of holomorphic functions in theorem \ref{lusztigtheorem}. $B$ and $\tilde{B}$ are charged only under the canonical Reeb $\mathbb{C}^*$, hence in order for the first type of holomorphic function to be positively graded under $\tilde{D}^I$ with fugacity $\tilde{\tau}$, we need $\nu>0$. The second type of function is specified by a choice of vertices $k = \text{in}(h_1)$ and $l = \text{out}(h_r)$ and a path between them $h_1, h_2, ..., h_r$.  The minimally charged generator is the one with the shortest path between vertices $k$ and $l$, since then there are fewer $B$ or $\tilde{B}$ to contribute positive powers of $\tilde{\tau}$. The minimally charged generator is, using the notation of figure \ref{linearquiverdiagram}:
\begin{equation}
F_{k,l}^{(\eta)} \equiv \begin{cases}
\langle\eta, j_{l}B_{l}B_{l-1}...B_k i_{k}\rangle	 & \text{if}\,\, l>k \\
\langle\eta, j_{l}\tilde{B}_{l}... \tilde{B}_{k-1} \tilde{B}_{k} i_{k}\rangle	 & \text{if}\,\, k>l  \\
\langle\eta, j_{k}i_{k}\rangle	 & \text{if}\,\, l=k
\end{cases}
\end{equation}
for $\eta$ a general linear form on $\text{Hom}\left(W_{k}, W_{l}\right)$.  The element of $\text{Hom}\left(W_{k}, W_{l}\right)$ appearing in the second argument of the above inner products is gauge invariant. By considering the transformation under $\tilde{D}^I$ of each of its entries, the following constraints are derived on $\left( \nu, \{\lambda_{i}^a\}\right)$. These are:
\begin{equation}
 \forall \,\,k,l \in I \,\, , a\in\{1,...,N_k\} \, ,b\in\{1,...,N_l\} : 
\begin{cases}
&\left(2 + |k-l|\right)\nu + \lambda_{k}^{a} - {\lambda_{l}}^{b} > 0 \\
&\left(2 + |k-l|\right)\nu + \lambda_{l}^{b} - \lambda_{k}^{a} > 0
\end{cases}
\end{equation}
These constraints define the Reeb cone, together with the constraint $\sum_{i}^{L} \sum_{a}^{N_i} \lambda_{i}^a = 0$. Notice that the constraints with $k=l$ already impose that $\nu>0$, the constraint from the first type of holomorphic function, so if there is only one (gauge) vertex we still have this condition.\\\\

\begin{example}
	We do the example of the quiver in figure \ref{TCPNDiagram} explicitly. 
	\begin{figure}
		\centering
		\includegraphics[height = 35mm,angle=0,  clip=true]{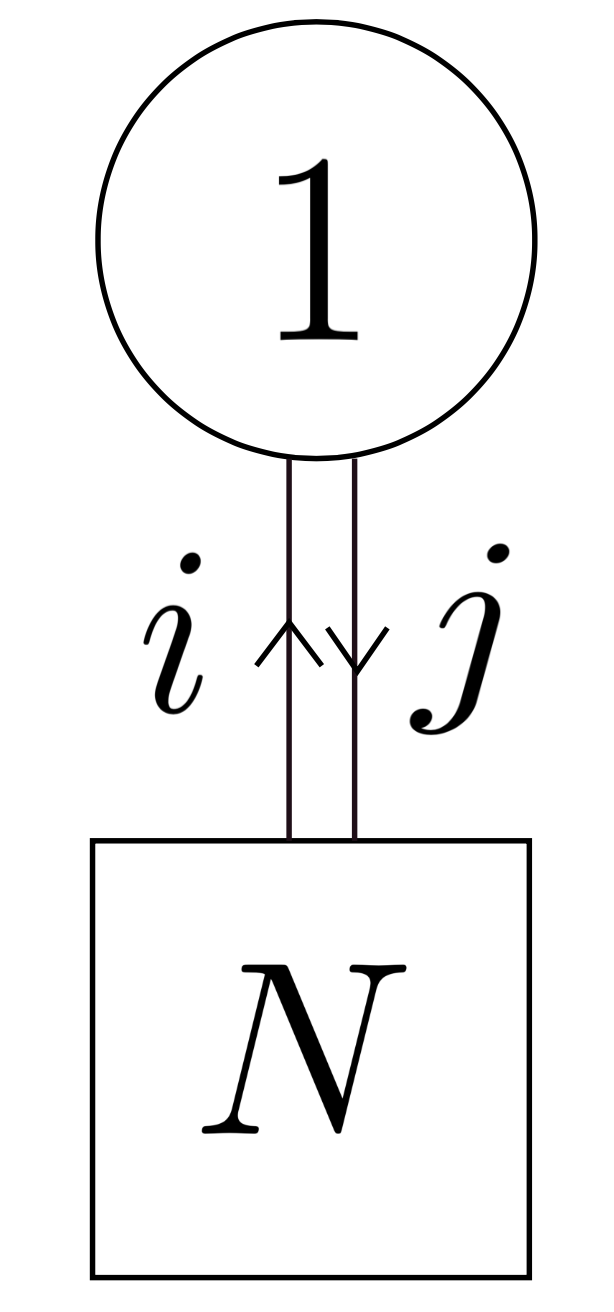}
		\caption{The quiver diagram for $T^*\mathbb{CP}^{N-1}$.}\label{TCPNDiagram}
	\end{figure}
	The resolved space $\tilde{X}$ is the cotangent bundle to $\mathbb{CP}^{N-1}$. The superconformal index was computed in \cite{doreybarns-graham}. Here we give an explicit description of the geometry, characterise the Reeb cone, and consider a different choice of Reeb vector. The linear data and moment map condition is specified by:
	\begin{equation}
	i = \left(i_1, i_2, ..., i_N\right)\,\,, \qquad j = 
	\begin{pmatrix}
	j_{1} \\
	j_{2} \\
	\vdots \\
	j_{N}
	\end{pmatrix}
	\,\,, \qquad \mu(i,j) =i \cdot j = 0
	\end{equation}
\end{example}
The stability condition implies that $j\neq 0$. Since the gauge action is $g \cdot (i,j) = (g^{-1}i, jg)$ where $g \in \mathbb{C}^*$, we can always choose $j$ to parametrise an element of $\mathbb{CP}^{N-1}$. $\mu=0$ is invariant under this and implies that $i$ defines a map:
\begin{equation}
i : \mathbb{C}^3\big/\text{Span}_{\mathbb{C}}(j) \rightarrow \mathbb{C}
\end{equation}
and therefore can be considered to lie in $T^{*}_{j}(\mathbb{CP}^{N-1})$. Thus the resolved space is $T^{*}(\mathbb{CP}^{N-1})$. To compute the index, consider the fixed points of $T = \mathbb{C}^* \times \mathbb{T}^N$ where the first factor corresponds to the action of the canonically induced Reeb vector, and the second factor is the maximal torus of the $\text{GL}(N,\mathbb{C})$. As stated before the actual action of the latter is $\mathbb{T}^{N-1}$ since the centre is gauged out. To be fixed by the $\mathbb{C}^*$, a generic point in $\tilde{X}$ whose representative can be chosen as $(i, j)$ where $j \in \mathbb{CP}^{N-1}$ must have $i=0$, since the canonical Reeb vector contracts the cotangent directions. $\mathbb{T}^{N}$ acts on an element of $\mathbb{CP}^{N-1}$ as: 
\begin{equation}
(z_1,z_2, ..., z_N) \cdot [j_1, j_2, ..., j_N] =   [z_1 j_1, z_2 j_2, ..., z_N j_N] 
\end{equation}
Here it is obvious from the definition of $\mathbb{CP}^{N-1}$ that the true action is only $\mathbb{T}^{N-1}$. The fixed points of $T^{*}(\mathbb{CP}^{N-1})$ under $T$ are the $N$ points:
\begin{equation}
\left(0, [1,0, ..., 0]\right) \quad,\quad \left(0, [0,1, ..., 0]\right), \quad ....\quad , \left(0, [0,0, ..., 1]\right) 
\end{equation}
i.e. they lie in $\mathbb{CP}^{N-1} \subset T^*\mathbb{CP}^{N-1}$. Given this, the superconformal index is easily computed:
\begin{equation}\label{scindexCPN-1}
\mathcal{Z}\left(T^* \mathbb{CP}^{N-1}\right) = \left(\frac{1}{\tilde{y}}\right)^{(N\shortminus 1)} \frac{\left(1 \shortminus \tilde{y}\frac{z_2}{z_1}\right)...\left(1\shortminus \tilde{y}\frac{z_N}{z_1}\right)}{\left(1\shortminus \frac{z_2}{z_1}\right)...\left(1\shortminus \frac{z_N}{z_1}\right)} \cdot \frac{\left(1\shortminus \tilde{y}\tau^2\frac{z_1}{z_2}\right)...\left(1\shortminus \tilde{y}\tau^2\frac{z_1}{z_N}\right)}{\left(1\shortminus \tau^2\frac{z_1}{z_2}\right)...\left(1\shortminus \tau^2\frac{z_1}{z_N}\right)} +...
\end{equation}
Where the "$+...$" consist of terms identical to the one above with the role of $z_1$ switched for $z_2, ...., z_N$. The contribution displayed comes from the fixed point $\left(0, [1,0, ..., 0]\right) $. Note the index exhibits the Weyl invariance discussed explicitly in \cite{doreybarns-graham}, and is only dependent on the ratios of $Z$-fugacities as expected. The first terms come from the directions in $T^*\mathbb{CP}^{N-1}$ along the base, and the second terms from the cotangent fibres.\\

To be even more explicit, consider the example $N=3$. The Reeb cone is given by the constraints (dropping the lower  index on $\lambda$): $2\nu + \lambda^a- \lambda^b>0$, $\forall \,a,b=1,2,3$. In fact there are only 4 independent constraints, reflecting the fact there is really only an action of $\mathbb{C} \times \mathbb{T}^2$. Relabelling $m= \lambda_1 - \lambda_2$ and $n  = \lambda_2 - \lambda_3$ to reflect the true independent rescalings of the fugacities/choice of Reeb vector, the Reeb cone is specified in $(\nu,m,n)$ space by:
\begin{equation}
2\nu+m>0,\quad 2\nu-m>0,\quad 2\nu+n>0,\quad 2\nu-n>0
\end{equation}
This is convex rational polyhedral cone with edge vectors $(1,2,2)$, $(1,2,-2)$, $(1,-2,2)$, $(1,-2,-2)$. \\

Consider now the fixed point submanifolds of $T^*\mathbb{CP}^2$ corresponding to 2 different choices of Reeb vector in the Reeb cone, as a verification of the results of section \ref{sectionreeblimit}. Let $D^I$ correspond to the canonical Reeb vector induced by the hyperK\"ahler quotient. The fixed point subvariety of this is simply $\mathbb{CP}^2$. Taking the $\tau \rightarrow 0$ limit of (\ref{scindexCPN-1}), we obtain:
\begin{equation}
\lim_{\tau \rightarrow 0 } \mathcal{Z}\left(T^* \mathbb{CP}^{2}\right)  = \frac{1}{\tilde{y}^{2}} \left(1+\tilde{y}+\tilde{y}^2\right)
\end{equation}
which is exactly $1/\tilde{y}^2$ times the Poincar\'e polynomial of $\mathbb{CP}^{2}$ (explicitly it is $P_{\sqrt{\tilde{y}}}(\mathbb{CP}^2)$) and is independent of $Z$. Now suppose we choose a different Reeb vector $\tilde{D}^I$, specified by $\nu=1$ and $\lambda^1=1$ in (\ref{newreebvector}). Now the fixed submanifold of $\tilde{D}^I$ in $T^*(\mathbb{CP}^2)$ consists of the union of an isolated point $[1,0,0]$, and a $\mathbb{CP}^1$ (given by points of the form $[0,j_2, j_3]$) lying in  $\mathbb{CP}^{2}$. Rescaling fugacities and taking the $\tilde{\tau}\rightarrow 0$ limit of the index:
\begin{equation}
\lim_{\tilde{\tau} \rightarrow 0 } \mathcal{Z}\left(T^* \mathbb{CP}^{2}\right)  = \frac{1}{\tilde{y}^{2}} \left( \tilde{y}^2+ \frac{1-\tilde{y}\frac{z_3}{z_2}}{1-\frac{z_3}{z_2}} + \frac{1-\tilde{y}\frac{z_2}{z_3}}{1-\frac{z_2}{z_3}}\right) = \frac{1}{\tilde{y}^{2}} \left(\tilde{y}^2 + (1+\tilde{y})\right)
\end{equation}
Here we have been explicit in the first equality with the contributions of the various fixed points. The $\tilde{y}^2$ contribution comes from the Poincar\'e polynomial of the single point $[1,0,0]$ on $\mathbb{CP}^2$. The power of $\tilde{y}$ comes from the fact that at the isolated fixed point, there are two (co)tangent directions in $T^*\mathbb{CP}^2$ negatively charged under the action of $\tilde{D}^I$, i.e. those along the $\mathbb{CP}^2$. The remaining two summands come from the north and south pole of $\mathbb{CP}^1 \cong S^2$. These have no negatively-charged direction in the normal bundle. They combine to give $1+\tilde{y} = P_{\sqrt{\tilde{y}}}(\mathbb{CP}^1)$. Both contributions sum to give the same result as for the original Reeb vector as claimed.\\

Note that in general, taking the fixed point subvariety of a $\mathbb{C}^* \subset T$ action on the unresolved hyperK\"ahler cone $X$ lying outside the Reeb cone, we obtain a non-compact disjoint union of K\"ahler cones. To see this, note that any  $\mathbb{C}^* \subset T$ commutes with the canonical Reeb vector. Therefore the action of the canonical Reeb vector, hence the associated dilatation, is defined on any fixed subvariety. The subvarieties are K\"ahler since the action of $T$ is holomorphic with respect to at least one of three complex structures defined on the hyperK\"ahler cone. These subvarieties have a resolution induced from the resolution of the original quiver variety. We can see this explicitly in the example of the handsaw quiver varieties considered in the following section, which can be obtained as fixed point subvarieties of the ADHM quiver variety \cite{nakajimahandsaw}. 

\subsubsection{The Handsaw Quiver Variety}\label{sectionhandsaw}

Another large class of K\"ahler cones is provided by the handsaw quiver varieties of Nakajima \cite{nakajimahandsaw}. They are single-arrow quivers, and the construction proceeds similarly to the double-arrow quivers above. They are K\"ahler but not necessarily hyperK\"ahler. They describe the moduli space of vortices of the  $T_{\rho}(SU(N))$ 3d $\mathcal{N}=4$ triangular quiver gauge theories. The latter is specified in figure \ref{handsawtrhodiagram}, where $L\in \mathbb{Z}_{>0}$, $N_L =N$ and we assume $N_a > N_{a-1}$ for $a =  1,...,L$. Let $\rho_a = N_a-N_{a-1}$ so that $\rho = [\rho_1,...,\rho_L]$ forms a partition of $N$. The moduli space of vortices with vortex number $\mathfrak{n} = \{\mathfrak{n}_1,...,\mathfrak{n}_{L-1}\}$ with respect to each of the gauge groups in the 3d theory, is specified by the handsaw quiver variety also given in figure \ref{handsawtrhodiagram}. Denote by $V_b \equiv \mathbb{C}^{\mathfrak{n}_b}$, $b=1,..,L-1$, the vector space corresponding to the $\mathfrak{n}_b$ gauge node. Denote by $W_a \equiv \mathbb{C}^{\rho_a}$ the vector space corresponding to the $\rho_a$ flavour node. Also let $V = \bigoplus_{i=1}^{L-1}V_i$, $W = \bigoplus_{i=1}^{L}W_i$. Thus $\mathfrak{n}$ and $\rho$ specify the handsaw, and are dimension vectors. For details, we refer the reader to \cite{vorticesandvermas}. \\
\begin{figure}
	\begin{minipage}{0.5\textwidth}
		\centering
		\includegraphics[height=40mm,angle=0,  clip=true]{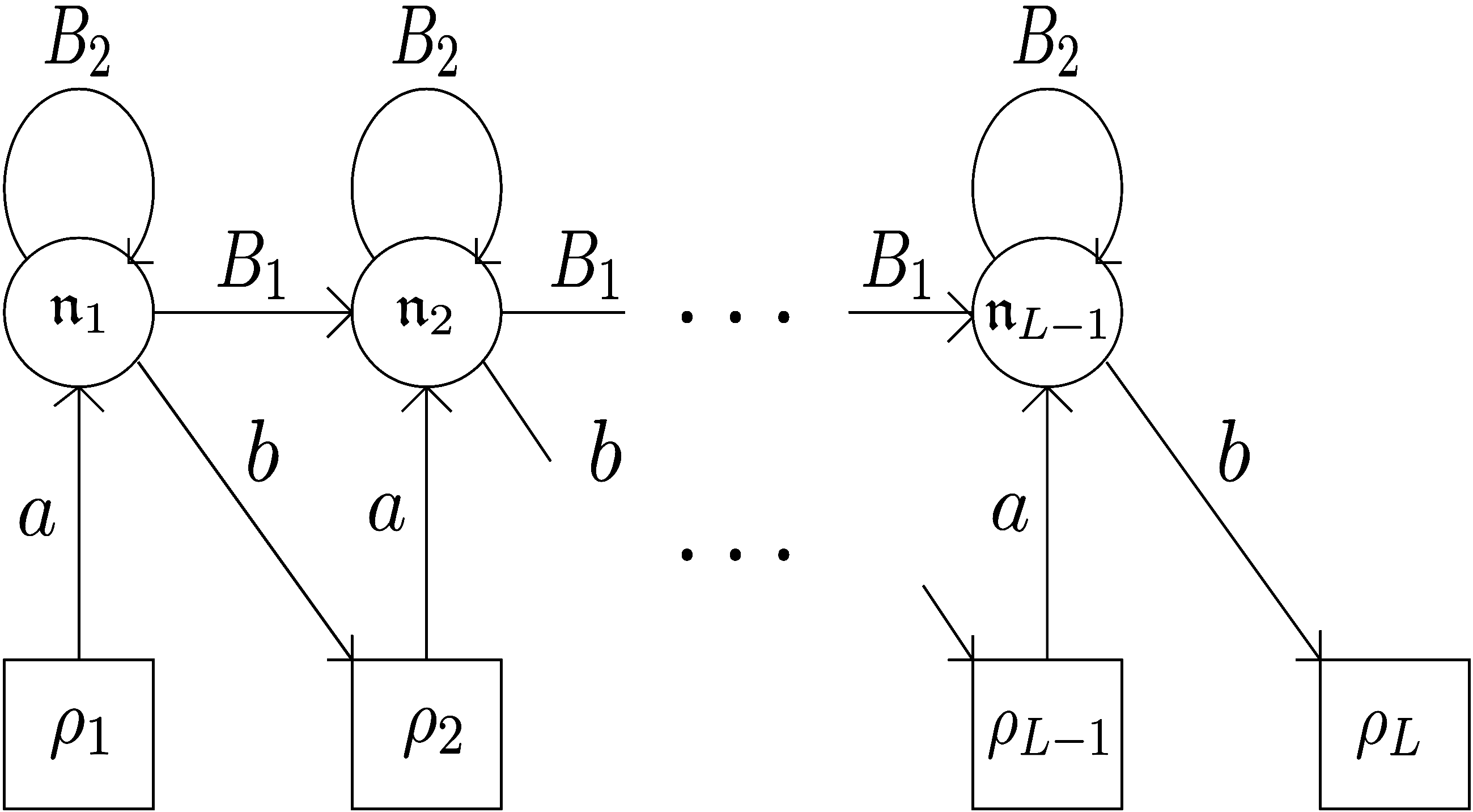} 
	\end{minipage}\hfill
	\begin{minipage}{0.5\textwidth}
		\centering
		\includegraphics[height=33mm,angle=0,  clip=true]{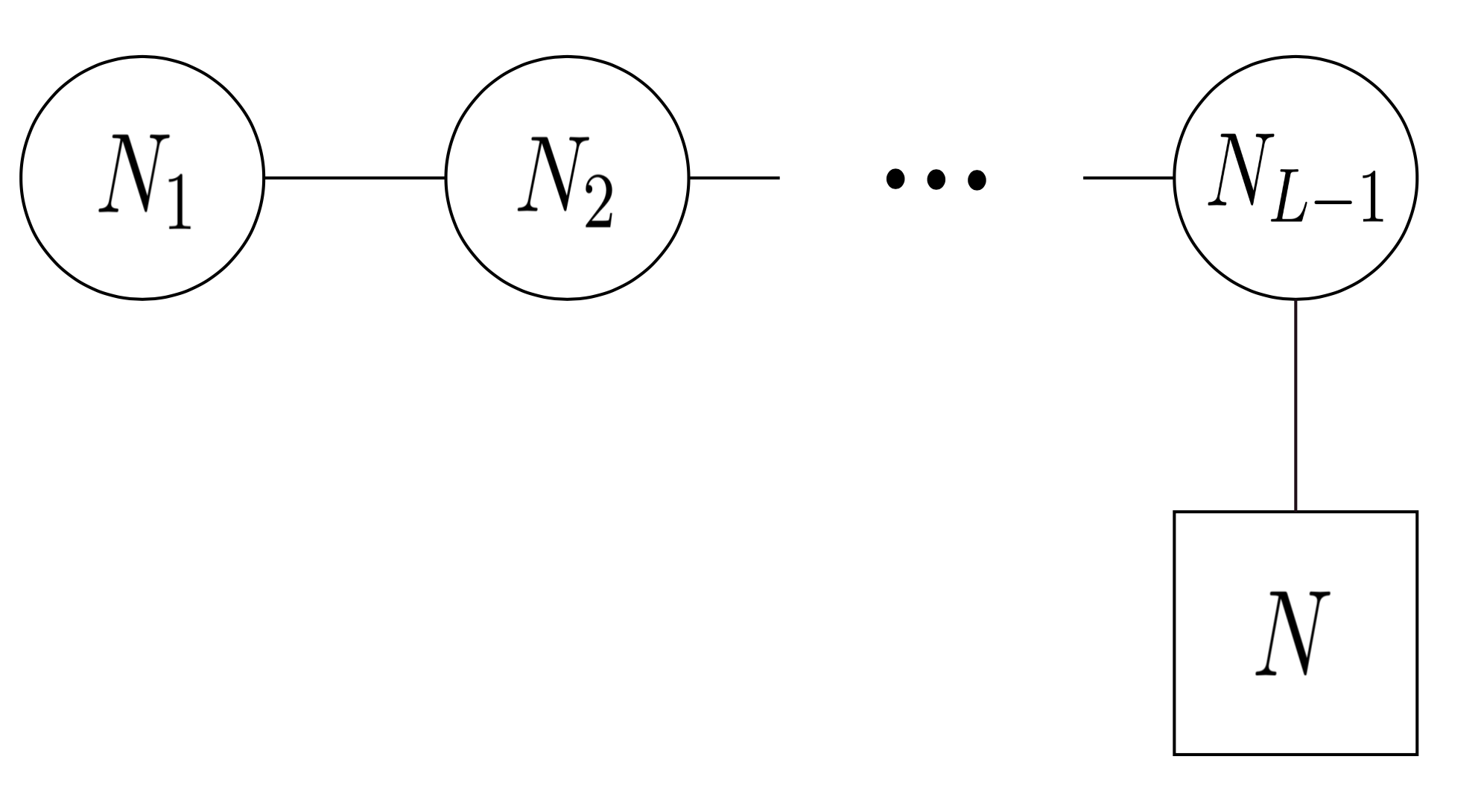} 
	\end{minipage}
	\caption{The handsaw quiver diagram, and the $T_{\rho}(SU(N))$ quiver diagram}\label{handsawtrhodiagram}
\end{figure}

Define:
\begin{equation}
\begin{split}
B_1 &\in \bigoplus_{i=1}^{L-2} \mbox{Hom}\left(V_{i}, V_{i+1}\right) \\ 
a &\in \bigoplus_{i=1}^{L-1} \mbox{Hom}\left(W_{i}, V_{i}\right)
\end{split}
\qquad
\begin{split}
B_2 &\in \bigoplus_{i=1}^{L-1} \mbox{End}\left(V_{i} \right) \\
b &\in \bigoplus_{i=1}^{L-1} \mbox{Hom}\left(V_{i}, W_{i+1}\right)
\end{split}
\end{equation}
and consider the affine space of all quadruples $(B_1,B_2,a,b)$ of linear data. Defining:
\begin{equation}
\mu(B_1, B_2, a, b) = [B_1, B_2]+ab \in \mbox{End}(V,V)
\end{equation}
then $\mu^{-1}(0)$ specifies an affine variety with a natural group action of $G = \prod\mbox{GL}(V_i)$ given by its action on the linear data as: 
\begin{equation}
g \in G : (B_1, B_2, a, b) \mapsto (g^{-1} B_1 g , g^{-1} B_2 g, g^{-1}a, bg)
\end{equation}
A point in the space of linear data $(B_1, B_2, a, b)$ is called \textit{stable} if there is no proper graded subspace $S=\bigoplus_{i=1}^{L-1}S_i$ of $V$ stable under $B_1$, $B_2$ and containing $a(W)$, and \textit{costable} if there is no non-zero graded subspace stable under $B_1$, $B_2$  and contained in $\mbox{ker}(b)$. Nakajima defines the handsaw quiver varieties as:
\begin{equation}
\begin{split}
\mathscr{L} &= \{(B_1, B_2, a, b) \in \mu^{-1}(0) \,|\,\text{stable}\}/ G \\
\mathscr{L}_0 &= \{(B_1, B_2, a, b) \in \mu^{-1}(0)\} /\!\!/ G \\
\mathscr{L}_0^{\text{reg}} &= \{(B_1, B_2, a, b) \in \mu^{-1}(0) \,|\, \text{stable and costable}\}/ G \\
\end{split}
\end{equation}
Here $/\!\!/$ denotes the affine GIT quotient. There is a projective morphism $\pi : \mathscr{L} \mapsto \mathscr{L}_0$, such that $\pi$ is an isomorphism between $\mathscr{L}_0^{\text{reg}}$ considered as an open subscheme in both $\mathscr{L}$ and  $\mathscr{L}_0$. There is an alternative description of $\mathscr{L}$, and $\mathscr{L}_0$. They can be described as K\"ahler quotients of the affine variety $\mu^{-1}(0)$ by the group $\prod\mbox{U}(V_i)$ whose complexification is $G$. $\mathscr{L}$ is obtained by taking the K\"ahler quotient at a non-zero level set, corresponding to FI parameters turned on in the 1D GLSM corresponding to the handsaw quiver, see  \cite{vorticesandvermas}. $\mathscr{L}_0$  is obtained by taking the K\"ahler quotient at the zero level set.\\

We will henceforth call $\mathscr{L}_0$ \textit{the} handsaw quiver variety. It is singular, and a K\"ahler cone by the argument at the end of the last subsection. To elaborate, Nakajima describes the handsaw quiver variety as the fixed point submanifold of a $\mathbb{C}^{\times}$-action on the ADHM quiver variety, where the $\mathbb{C}^{\times} \subset (\mathbb{C}^{\times})^2 \times (\mathbb{C}^{\times})^N  $ where the latter is the complexified maximal torus of an isometric group action on the ADHM quiver. This $\mathbb{C}^*$ action is given by (on the linear data for the ADHM variety):
\begin{equation}\label{handsawasfixedpointofadhm}
t_1 \in \mathbb{C}^* : (B_1, B_2, a, b) \mapsto (t_1B_1, B_2, a\rho_w(t_1)^{-1} ,t_1\rho_w(t_1) b) 
\end{equation}
where $\rho_w(t_1)$ acts by $t_1^{i}$ on $W_i$. The unresolved ADHM quiver variety is a hyperK\"ahler cone, whose superconformal index was calculated in \cite{doreybarns-graham}. The $\mathbb{C}^{\times}$-action is the complexification of a $\mathfrak{u}(1)$ isometry which is holomorphic with respect to one of the three complex structures on the ADHM quiver, and therefore its fixed point subvariety is K\"ahler (possibly singular). Note that a group action and its complexification with respect to one of the complex structures have the same fixed points. The $\mathbb{C}^{\times}$-action also commutes with the homothety, and all such vectors are of the form:
\begin{equation}
	V = c(x_j) r \frac{\partial}{\partial r} + V^i(x_j) \frac{\partial}{\partial x_i}
\end{equation}
where $x_i$ are the remaining coordinates. The zero set of this vector field away from the singularity is therefore given by $c(x_i) = X^i(x_i)=0$ and is thus a cone with a homothetic action. This is consistent from the point of view of the algebra as the commuting subalgebra of the $\mathbb{C}^{\times}$-action whose fixed submanifold we take in $\mathfrak{osp}(4^*|4)$ is precisely our $\mathfrak{u}(1,1|2)$. The regular/quasi-regular action of the Reeb vector on the ADHM quiver variety descends to a regular/quasi-regular action on the handsaw. We later see this manifested in the characters of the group action at each fixed point. \\

Thus, providing $\mathscr{L}_0^{\text{reg}}$ is non-empty, since it is a Zariski open subset of $\mathscr{L}$ and $\mathscr{L}_0$, $\pi$ is a birational morphism and provides a canonical resolution of singularities. It also implies that Grauert-Riemenschneider vanishing \cite{grauertriemenschneider} applies, so that we have a consistent regularisation of the index. Additionally, the requirement that $\mathscr{L}_0^{\text{reg}}$ is non-empty has a nice physical interpretation. $\mathscr{L}_0^{\text{reg}}$ corresponds to the moduli space of genuine vortices, i.e. those which are not point-like. One can see this from the stratification in \cite{nakajimahandsaw}:
\begin{equation}
\mathscr{L}_0(\mathfrak{n}) = \bigsqcup \mathscr{L}_0^{\text{reg}}(\mathfrak{n}-\mathfrak{n}') \times \text{Sym}^{\mathfrak{n}'_1} \mathbb{C} ... \times \text{Sym}^{\mathfrak{n}'_{L-1}} \mathbb{C}
\end{equation}
where $\mathfrak{n}' = \{\mathfrak{n}'_1,...,\mathfrak{n}'_{L-1}\}$ is such that $\mathfrak{n}'_i \leq \mathfrak{n}_i$. This is analogous to the story for instantons, see \cite{nakajimainstantoncountignonblowup1}. Cases where $\mathscr{L}_0^{\text{reg}}$ is non-empty include when $\rho_1 \leq \rho_2 \leq ... \leq \rho_L$ \cite{nakajimahandsaw}. This includes the case when $\rho_1 = ... = \rho_L =1$, the vortex moduli space for the $T(SU(N))$ gauge theory. \\

We now proceed to computing the superconformal index of the handsaw quiver variety $\mathscr{L}_0$ via resolving to $\mathscr{L}$. There is an action of $G_w \equiv \prod_{i=1}^{L}\text{GL}(W_i)$, which acts on the linear data by conjugation, and there is an additional $\mathbb{C}^*$ action given by:
\begin{equation}
(B_1, B_2, a, b) \mapsto (B_1, t B_2, a,  t b)
\end{equation}
which can be seen as inherited from an action on the ADHM variety. These commute with equation $\mu=0$ and the $G$-action and therefore descend to actions on the quotients. Following Nakajima, we fix a decomposition  into 1-dimensional subspaces: $W_i = \bigoplus_\alpha W_i^{\alpha}$ such that $\alpha = 1,..., \text{dim}W_i$. We consider the restriction $G_w$ to the torus $\prod_i \mathbb{T}_i$, $\mathbb{T}_i \subset \text{GL}(W_i)$ preserving these subspaces. We also include the $\mathbb{C}^*$ action to form a larger decomposition-preserving torus $\mathbb{T}_w \equiv \mathbb{C}^* \times \prod_i \mathbb{T}_i$. The $\mathbb{T}_w$ fixed points on $\mathscr{L}$ correspond to tuples of Young diagrams $\Vec{Y} = \{Y_i^{\alpha}\}$ (French notation) corresponding to each $W_i^{\alpha}$ where the bottom-left corner of $Y_i^{\alpha}$ is shifted such that its $x$-coordinate is $i$, with the restriction that the total number of boxes in the tuple with $x=j$ is $\text{dim}V_j$. \\

Nakajima computed the character of the tangent space (equivalently cotangent space depending on convention) to be:
\begin{equation}\label{handsawfixedpointcharacter}
\mbox{ch}T_{\Vec{Y}}^{*}\mathscr{L} = \sum_{(i,\alpha),(j,\beta)} e_j^{\beta}(e_i^{\alpha})^{-1} \Bigg(\quad\,\,\, \sum_{\mathclap{\substack{s\in Y_i^{\alpha}\\
			l_{Y_j^{\beta}}(s)=0}}} t^{a_{Y_i^{\alpha}+1}} + \sum_{\mathclap{\substack{s\in Y_j^{\beta}\\
			l_{Y_i^{\alpha}}(s)=-1}}} t^{-a_{Y_j^{\beta}}}\Bigg)
\end{equation}
where $e_i^{\alpha}$ the fugacity corresponding to the action of the sub-torus of $\mathbb{T}_i$ acting on $W_i^{\alpha}$. The leg-length $l_{Y_{j}^{\beta}}(s)$ of a box $s \in Y_{i}^{\alpha}$ relative to $Y_{j}^\beta \in \Vec{Y}$ is the difference in $x$-coordinate of the right-most box in $Y_{j}^\beta$ in the same row as $s$, minus the $x$-coordinate of $s$. If there are no boxes in the same row, take the difference between $j-1$ and the $x$-coordinate of $s$. The arm-length $a_{Y_{i}^{\alpha}}(s)$ is the difference in $y$-coordinate of the top box in $Y_{i}^{\alpha}$ in the same column as $s$, and the $y$-coordinate of $s$.\\

A rescaling of fugacities must be performed to obtain the character in terms of the action of the Reeb vector, inherited from the ADHM variety (or equivalently from the flat space metric on the linear data). The Reeb vector acts as:
\begin{equation}\label{holoisomhandsaw}
(B_1, B_2, a, b) \rightarrow (\tau B_1, \tau B_2, \tau a, \tau b) \sim (B_1, \tau B_2,  a\rho_w(\tau), \tau \rho_w(\tau)^{-1} b)
\end{equation}
where the last gauge equivalence, which is an equality up identifying gauge orbits in the quotient, is given by (\ref{handsawasfixedpointofadhm}). Therefore we need to rescale:
\begin{equation}
\tau=t \qquad z_i^{\alpha} = e_i^{\alpha}t^{-i}
\end{equation}
so that now $\tau$ is the fugacity for the action of the Reeb vector, and $z_i^{\alpha}$ for the $\mathbb{C}^*$ acting on $W_i^{\alpha}$. Thus:
\begin{equation}
\mbox{ch}T_{\Vec{Y}}^*\mathscr{L} = \sum_{(i,\alpha),(j,\beta)} z_j^{\beta}(z_i^{\alpha})^{-1} \Bigg(\quad\,\,\, \sum_{\mathclap{\substack{s\in Y_i^{\alpha}\\
			l_{Y_j^{\beta}}(s)=0}}} \tau^{a_{Y_i^{\alpha}}+j-i+1} + \sum_{\mathclap{\substack{s\in Y_j^{\beta}\\
			l_{Y_i^{\alpha}}(s)=-1}}} \tau^{-a_{Y_j^{\beta}}-i+j}\Bigg)
\end{equation}
The formula for the index of a general handsaw, using (\ref{scindexlocalisationformula}) is then given by:
\begin{equation}\label{scindexhandsawquiver}
\begin{split}
\mathcal{Z} = (-)^{d_{\mathbb{C}}}\left(\frac{1}{\tilde{y}}\right)^{\frac{d_{\mathbb{C}}}{2}}\sum_{\mathclap{\substack{\Vec{Y}}}}\prod_{(i,\alpha),(j,\beta)}\quad
&\prod_{\mathclap{\substack{s\in Y_i^{\alpha}\\ l_{Y_j^{\beta}}(s)=0}}}PE\left(\left(1-\tilde{y}\right)\frac{z_j^{\beta}}{z_i^{\alpha}} \tau^{a_{Y_i^{\alpha}}+j-i+1}\right)\\
\times\quad&\prod_{\mathclap{\substack{s\in Y_i^{\alpha}\\ l_{Y_j^{\beta}}(s)=-1}}}PE\left(\left(1-\tilde{y}\right)\frac{z_i^{\alpha}}{z_j^{\beta}} \tau^{-a_{Y_i^{\alpha}}+i-j}\right)
\end{split}
\end{equation}
Where the sum over tuples of Young diagrams is over those such that the total number of boxes in the $i^{th}$ column is $\mbox{dim}V_i$. Note that it is possible to obtain this as a limit of the superconformal index on the ADHM quiver variety, via the same technique in \cite{doreybarns-graham} to obtain the index of the $A$-type quiver variety.\\

Note that not all boxes in the Young diagram contribute to the index, and that for each contribution corresponding to an individual fixed point in (\ref{scindexlocalisationformula}), the highest power of $\tilde{y}$ which appears should equal $d_{\mathbb{C}} = \sum_{i=1}^{L-1} \text{dim}(V_i)\left(\text{dim}(W_i)+ \text{dim}(W_{i+1})\right) = \sum_{i=1}^{L-1} \mathfrak{n}_i\left(\rho_i+ 
\rho_{i+1}\right)$ . This provides a non-trivial combinatorial identity condition on the Young diagrams.\\

We note that the vortex partition function of the $T_{\rho}(SU(N))$ quiver gauge theory is generated by superconformal indices of handsaw quiver varieties, its vortex moduli spaces. This will appear in a future work with Samuel Crew \cite{crewdoreyzhang}.

\section*{Acknowledgements}
The authors would like to thank Samuel Crew and Alec Barns-Graham for many helpful discussions. This work has been partially supported by STFC consolidated grant ST/P000681/1. 

\appendix
\section{The $\mathfrak{u}(1,1|2)$ Algebra}\label{appendixA}
In this appendix we list the generators and commutation relations for the superconformal $\mathfrak{u}(1,1|2)$ algebra. For the generators expressed in terms of conjugate momenta, see \cite{andrewthesis}. In the following, the bidegree of a form is $(p,q)$ and bosonic generators are self-adjoint.\\
The Hilbert space is:
\begin{equation}
\mathcal{H} = \Omega^*(X;\mathbb{C})
\end{equation}
the exterior algebra on $X$ under the usual $L^2$ norm. The bosonic subalgebra is:
\begin{equation}
	\mathfrak{g}_{B} = \mathfrak{su}(1,1) \oplus \mathfrak{su}(2) \oplus \mathfrak{u}(1)_{R^I} \oplus \mathfrak{u}(1)_{D^I}
\end{equation}
and the generators are (here $\omega$ and $K$ are the K\"ahler form and potential respectively):
\begin{equation*}
\begin{split}
\mathbb{H} &= \frac{1}{2}\Delta\\
J_3 &= \frac{1}{2}\left(p+q-d_{\mathbb{C}}\right)
\end{split}
\qquad
\begin{split}
\mathbb{D} &= - i \mathcal{L}_{D} + i \left(p+q-d_{\mathbb{C}}\right)\\
J_+ &= \omega \wedge
\end{split}
\qquad
\begin{split}
\mathbb{K} &= \frac{1}{2}\norm{D}^2\\
J_- &= (\omega \wedge)^{\dagger}
\end{split}
\end{equation*}
\begin{equation}
R^I = \frac{1}{2}(p-q) \qquad D^I = -i \mathcal{L}_{D^I}
\end{equation}
\begin{equation*}
\begin{split}
Q &= d\\
Q^I&= i\left(\bar{\partial}-\partial\right)
\end{split}
\qquad
\begin{split}
Q^{\dagger} &= d^{\dagger}\\
{Q^I}^{\dagger}&= i\left(\partial^{\dagger}-\bar{\partial}^{\dagger}\right)
\end{split}
\qquad
\begin{split}
S &=i dK\wedge\\
S^I &= \left(\partial - \bar{\partial}\right) K
\end{split}
\qquad
\begin{split}
S^\dagger &= -i\,\iota_{D}\\
{S^I}^{\dagger} &= -i \, \iota_{D^I}
\end{split}
\end{equation*}
and commutation relations (excluding those obtained by conjugation):
\begin{equation*}
\begin{split}
\left[\mathbb{D}, \mathbb{H}\right] &= 2i \mathbb{H}
\end{split}
\qquad
\begin{split}
\left[\mathbb{D}, \mathbb{K}\right] = -2i \mathbb{K}
\end{split}
\qquad
\begin{split}
\left[\mathbb{H}, \mathbb{K}\right] = -i \mathbb{D}
\end{split}
\end{equation*}
\begin{equation*}
\left[J_{3}, J_{\pm}\right]= \pm J_{\pm} \qquad \left[J_+, J_{-}\right]= 2J_3
\end{equation*}
\begin{equation}
\begin{split}
\left[\mathbb{D}, Q\right] &= i Q\\
\left[\mathbb{D}, Q^I\right] &= i Q^I\\
\left[J_3, Q \right] &= \frac{1}{2}Q\\
\left[J_+, Q^\dagger \right]&= -Q^I\\
\left[R^I, Q \right] &= \frac{i}{2} Q^I
\end{split}
\qquad
\begin{split}
\left[\mathbb{D}, S\right] &= -i S\\
\left[\mathbb{D}, S^I\right] &= -i S^I\\
\left[J_3, S \right] &= \frac{1}{2}S\\
\left[J_+, {Q^{I}}^{\dagger} \right]&= Q\\
\left[R^I, Q^I \right] &= -\frac{i}{2} Q
\end{split}
\qquad
\begin{split}
\left[\mathbb{H}, S\right] &= -i Q\\
\left[\mathbb{H}, S^I\right] &= -i Q^I\\
\left[J_3, Q^I \right] &= \frac{1}{2}Q^I\\
\left[J_+, S^\dagger \right]&= -S^I\\
\left[R^I, S \right] &= \frac{i}{2} S^I
\end{split}
\qquad
\begin{split}
\left[\mathbb{K}, Q\right] &= i S\\
\left[\mathbb{K}, Q^I\right] &= i S^I\\
\left[J_3, S^I \right] &= \frac{1}{2}S^I\\
\left[J_+, {S^{I}}^{\dagger} \right]&= S\\
\left[R^I, S^I \right] &= -\frac{i}{2} S
\end{split}
\end{equation}
\begin{equation*}
\begin{split}
\left\{Q, Q^{\dagger} \right\} &= 2\mathbb{H}\\
\left\{Q^I, {Q^I}^{\dagger} \right\} &= 2\mathbb{H}
\end{split}
\qquad
\begin{split}
\left\{S, S^{\dagger} \right\} &= 2\mathbb{K}\\
\left\{S^I, {S^I}^{\dagger} \right\} &= 2\mathbb{K}
\end{split}
\qquad
\begin{split}
\left\{Q, S^{\dagger} \right\} &= \mathbb{D}-2iJ_3\\
\left\{Q^I, {S^I}^{\dagger} \right\} &= \mathbb{D}-2iJ_3
\end{split}
\end{equation*}
\begin{equation*}
\left\{Q, S^I\right\} = -2iJ_+ \qquad \left\{Q^I, S \right\} = 2iJ_+ \qquad \left\{ Q, {S^I}^{\dagger }\right\} =  D^I \qquad \left\{ Q^I, S^{\dagger }\right\} =  - D^I 
\end{equation*}
\section{The Superconformal Index on Affine $\mathbb{C}^n$}\label{appendixB}

In this appendix we construct the superconformal index for flat space. We begin with the result for $\mathbb{C}$, where (using indices $m=1,2$ with no distinction between upper and lower), the generators of the conformal subalgebra are:
\begin{equation}
	\mathbb{H} = \frac{1}{2}P^2 \qquad \mathbb{K} = \frac{1}{2}X^2 \qquad \mathbb{D} = X_{m}P_{m}-i
\end{equation}
where $P_m = \dot{X}^m$, the homothety is given by $D^{m} = X^{m}$, and we use the standard complex structure on $\mathbb{C}$.
Define: 
\begin{equation}
	a_m = \sqrt{\frac{1}{2}}\left(X_{m} + iP_{m}\right)
\end{equation}
The generators of $\mathfrak{su}(1,1)$ obtained via the basis change can be expressed as:
\begin{equation}
	\mathbb{L}_{0} = a^{\dagger}_{m}a_{m} + 1 \qquad \mathbb{L}_+ = -\frac{1}{2}a^{\dagger}_{m}a^{\dagger}_{m} \qquad  \mathbb{L}_- = -\frac{1}{2}a_{m}a_{m}
\end{equation}
A basis of supercharges which are eigenvalues of $\mathbb{L}_0$ can be expressed in terms of oscillators as:
\begin{equation}
	\begin{split}
		\mathcal{Q} \,\,&= \sqrt{2}\psi^{\dagger}_{m}a^{\dagger}_{m}  \\
		\mathcal{Q}^{I} &= \sqrt{2}\psi^{\dagger}_{m}I_{mn}a^{\dagger}_{n}
	\end{split}
	\qquad
	\begin{split}
		\tilde{\mathcal{Q}} \,\,&= \sqrt{2}\psi_{m}a^{\dagger}_{m}  \\
		\tilde{\mathcal{Q}}^{I} &= \sqrt{2}\psi_{m}I_{mn}a^{\dagger}_{n}
	\end{split}
	\quad
	\begin{split}
		\mathcal{S} \,\,&= \sqrt{2}\psi^{\dagger}_{m}a_{m}  \\
		\mathcal{S}^{I} &= \sqrt{2}\psi^{\dagger}_{m}I_{mn}a_{n}
	\end{split}
	\qquad
	\begin{split}
		\tilde{\mathcal{S}} \,\,&= \sqrt{2}\psi_{m}a_{m}  \\
		\tilde{\mathcal{S}}^{I} &= \sqrt{2}\psi_{m}I_{mn}a_{n}
	\end{split}
\end{equation} 
The $\mathfrak{su}(2)$ generators can be expressed:
\begin{equation}
	J_3 = \frac{1}{2}\left(\psi^{\dagger}_{m}\psi_{m}-1\right) \qquad J_+ = \frac{1}{2}I_{mn}\left(\psi^{\dagger}_{m}\psi^{\dagger}_{n}\right) = \psi^{\dagger}_2\psi^{\dagger}_1 \qquad J_- = J_{+}^{\dagger} = \psi_{1}\psi_{2}
\end{equation}
and the non-central R-symmetry:
\begin{equation}
	R^I = \frac{i}{2}I_{mn}\psi^{\dagger}_m\psi_n = \frac{i}{2}\left(-\psi^{\dagger}_1\psi_2 + \psi^{\dagger}_2 \psi_1\right)
\end{equation}
Switching to holomorphic coordinates  $z = X_1 + iX_2$ with respect to the canonical complex structure induces a basis change in the fermions and oscillator operators:
\begin{equation}
	\begin{split}
		\chi^{\dagger} &= \psi^{\dagger}_{1} + i\psi^{\dagger}_{2}  \\
		\tilde{\chi}^{\dagger} &= \psi^{\dagger}_{1} - i\psi^{\dagger}_{2}
	\end{split}
	\qquad\qquad
	\begin{split}
		\beta^{\dagger} &= a^{\dagger}_{1} + ia^{\dagger}_{2}  \\
		\tilde{\beta}^{\dagger} &= a^{\dagger}_{1} - ia^{\dagger}_{2}
	\end{split}
\end{equation} 
obeying:
\begin{equation}
	[\beta, \beta^{\dagger}] = [\tilde{\beta}, \tilde{\beta}^{\dagger}] = \{\chi, \chi^{\dagger}] = \{\tilde{\chi}, \tilde{\chi}^{\dagger}\} = 2  
\end{equation} 
To make the $\mathfrak{su}(2)$ R-symmetry rotating fermions manifest, define:
\begin{equation}
	\Psi^{A} = \frac{1}{\sqrt{2}}\left(\tilde{\chi}, \chi^{\dagger}\right) = \frac{1}{\sqrt{2}}\left(\psi_{1} + i\psi_{2}, \psi^{\dagger}_{1} + i\psi^{\dagger}_{2} \right) 
	= \left( \sqrt{2}\iota_{\frac{\partial}{\partial \bar{z}}}, \,\, \frac{1}{\sqrt{2}}dz\wedge \right)
\end{equation}
\begin{equation}
	\bar{\Psi}^{A} = \frac{1}{\sqrt{2}}\left(\psi^{\dagger}_{1} - i\psi^{\dagger}_{2}, \psi_{1} - i\psi_{2}\right)
	= \left(\frac{1}{\sqrt{2}}d\bar{z}\wedge , \,\,  \sqrt{2}\iota_{\frac{\partial}{\partial z}}\right)
\end{equation}
Here $\bar{\Psi}^{\bar{A}} = \left(\Psi^A\right)^{\dagger}$, and note that:
\begin{equation}
	\{ \Psi^{A}, \bar{\Psi}_{B} \}= \delta^A_B
\end{equation}
Notice that $\Psi^A$ forms an $SU(2)$ doublet:
\begin{equation}
	[J_3, \Psi^{1,2}]  = \mp \Psi^{1,2} \qquad [J_+, \Psi^1] = -i\Psi^2 \qquad [J_-, \Psi^2] = i\Psi^1
\end{equation} 
and  both $\Psi^A$ have R-charge $\frac{1}{2}$.\\

To construct the Hilbert Space, take a vacuum state $\ket{0}$ such that:
\begin{equation}
	\bar{\Psi}^A\ket{0} = \beta\ket{0} = \bar{\beta}\ket{0} = 0 \quad \Rightarrow \quad \ket{0} = d\bar{z} e^{- K}
\end{equation}
$\ket{0}$ has quantum numbers: $\{\Delta = 1, j = 0, d = -1, r = -1/2\}$. The vacuum state is a singlet under the $\mathfrak{su}(2)$. The action of $\beta^{\dagger}$ or $\bar{\beta}^{\dagger}$ raises the scaling dimension by 1, and the fermions commute with $\mathbb{L}_0$ and hence all states have $\Delta\geq1$.\\

We construct a basis of $E=0$ states. On flat space the homogeneous forms $z^k\bar{z}^l dz^{p}d\bar{z}^qe^{-K}$ form a basis of simultaneous eigenstates of the Cartan generators. The generator $\hat{D}^I = -i\mathcal{L}_{D^I}$ (we will henceforth abuse notation and call the operator on the Hilbert space and the Reeb vector the same thing) where on $\mathbb{C}$:
\begin{equation}
	D^{I} = i\left(z\frac{\partial}{\partial z} - \bar{z}\frac{\partial}{\partial\bar{z}}\right)
\end{equation}
and hence on the basis of homogeneous forms $\hat{D}^I$ has eigenvectors $p-q+k-l$. Similarly $\mathcal{H}$ has eigenvalues:
\begin{equation}
	E = \frac{1}{2}\left(\Delta + 2q -d_{\mathbb{C}}-(k-l)\right)
\end{equation}
Note that under $\Psi^1$ has $E$ charge -1 ($[E, \Psi^1] = -\Psi^1$), $\Psi^2$ is neutral, $\beta^{\dagger}$ is neutral, and $\bar{\beta}^{\dagger}$ has charge $1$. The vacuum state has $E$ charge 1 and $\Psi^1$ can be applied at most once, therefore the most general $E=0$ state is of the form:
\begin{equation}
	\ket{\phi} = \Psi^1 ({\Psi^2})^{p(2)} {\beta^{\dagger}}^n\ket{0} \qquad p(2)\in\{0,1\},\, n\in \mathbb{N}_0
\end{equation}
Geometrically this corresponds to a form:
\begin{equation}
	\ket{\phi} \propto \left(dz\right)^{p(2)} \left(z-2\frac{\partial}{\partial\bar{z}}\right)^{n}e^{-K}
\end{equation}
i.e. a holomorphic differential form on $\mathbb{C}$. So:
\begin{equation}
	\left(\mbox{States with E = 0}\right) = \mathbb{C}\left[z, dz\right]e^{-K}
\end{equation}

To compute the superconformal index note that the  fundamental variables are bosons $\beta, \beta^{\dagger}$ and fermions $\Psi^{1,2}$, which are all simultaneous eigenvalues of $\mathcal{H}, D^I, J_3+R^I$ and $F$. The trace separates into a product of contributions from each variable. The vacuum contribution is:
\begin{equation}\label{vacuumcontribution}
	\bra{0}(-1)^Fe^{-\beta \mathcal{H}}\tau^{D^{I}}\tilde{y}^{J_3+R^I}\ket{0} = e^{-\beta}\tau^{-1}\tilde{y}^{-\frac{1}{2}}
\end{equation}
The fermionic contributions are:
\begin{equation}
	\prod_A \mbox{tr}'_A\left[(-1)^Fe^{-\beta \mathcal{H}}\tau^{D^{I}}\tilde{y}^{J_3+R^I}\right] = \left(1-e^{\beta}\tau\right)(1-\tilde{y}\tau)
\end{equation}
where tr$'_{A}$ corresponds to the trace over $\langle \, \ket{0}, \Psi^A\ket{0}\, \rangle$ divided by the vacuum contribution (\ref{vacuumcontribution}).\\

To calculate the bosonic contribution note that geometrically:
\begin{equation}
	\beta^{\dagger} = \sqrt{\frac{1}{2}}\left(z-2\frac{\partial}{\partial \bar{z}}\right)\qquad
	\bar{\beta}^{\dagger} = \sqrt{\frac{1}{2}}\left(\bar{z}-2\frac{\partial}{\partial z}\right)
\end{equation}
So $\beta^{\dagger}$ raises $\Delta$ by 1, raises $d$ by 1 and leaves bidegree invariant, and $\bar{\beta}^{\dagger}$ raises $\Delta$ by 1, lowers $d$ by 1 and leaves bidegree invariant. Hence the bosonic contribution is:
\begin{equation}
	\left(\sum^{\infty}_{r=0} \tau^r\right)\left(\sum^{\infty}_{r=0} \tau^{-r}e^{-\beta r}\right) = \left(\frac{1}{1-\tau}\right)\left(\frac{1}{1-\tau^{-1}e^{-\beta}} \right)
\end{equation}
Multiplying all contributions together, we obtain the superconformal index for $\mathbb{C}$, which is indeed independent of $\beta$ as claimed:
\begin{equation}
	\mathcal{Z}_{\mathbb{C}}(\tau,\tilde{y}) = -\frac{(1-\tilde{y}\tau)}{\tilde{y}^{\frac{1}{2}}(1-\tau)}
\end{equation}
Notice this coincides with the form expected (\ref{scindexlocalisationformula}) from localisation with respect to $D^I$. To interpret this in terms of the representation content of the theory, compare with (\ref{scindexexpansion}):
\begin{equation}
	\mathcal{Z}_{\mathbb{C}}(\tau,\tilde{y}) =\sum_{r\in \mathbb{Z}/2}\tilde{y}^{r}I^{0r} + \sum_{d > 0, \,r\in \mathbb{Z}/2}\tau^{d}\,\tilde{y}^{r}\left(1-\frac{1}{\tilde{y}}\right)I^{dr} = \left({\tilde{y}^{\frac{1}{2}}\tau-\tilde{y}^{-\frac{1}{2}}}\right)\sum_{n=0}^{\infty}\tau^{n}
\end{equation}
and thus:
\begin{equation}
	I^{0,-\frac{1}{2}} = -1, \qquad I^{d,\frac{1}{2}} = 1 \quad(d>0), \qquad I^{d,r} = 0 \quad \forall \,\, \mbox{other}\,\,  \{d,r\}
\end{equation}
It is possible to show consistency by constructing the representations explicitly for $\mathbb{C}$. Using the geometric constraints that representations only exist with lowest weight $-\frac{1}{2}\leq j,r \leq \frac{1}{2}$, we have that:
\begin{equation}
	I^{0,-\frac{1}{2}} = \sum_{j=0,\frac{1}{2}} (-1)^{2j} N(S(j,d,-\frac{1}{2}+j)) = N(S(0,0,-1/2)) - N(S(1/2,0,0)) = -1
\end{equation}
\begin{equation}
	I^{d,\frac{1}{2}} = N\left(S\left(0,d,1/2\right)\right) = 1 \qquad d>0
\end{equation}
Note that  shown that $\Delta \geq 1$ for all states, thus we cannot have BPS representations with $j=d=0$, so:
\begin{equation}
	N(S(0,0,r)) = 0 \quad \forall r \quad \Rightarrow \quad N(S(1/2,0,0)) = 1
\end{equation}
We analyse both in turn:
\begin{itemize}
	\item $N\left(S\left(1/2,0,0\right)\right) =  N\left(S_{\frac{1}{2}}\left(1/2,0\right)\right)=1$: the lowest weight state is: $\Psi^1\ket{0} = \sqrt{2}e^{-K}$, which sits in the fundamental ($j = \frac{1}{2})$ representation of the $\mathfrak{su}(2)$: $\Psi^A\ket{0}$ and has $E=0$ as expected. It is also easy to check it is annihilated by all lowering operators in $\mathfrak{u}(1,1|2)$ as well as the geometric operators corresponding to $q^{1\pm}$, which are:
	\begin{equation}\label{q1pm}
		q^{1+} = \bar{\partial}^{\dagger}+ \iota_{\bar{z}\frac{\partial}{\partial\bar{z}}} \qquad\quad
		q^{1-} = \partial^{\dagger}+ \iota_{z\frac{\partial}{\partial z}}
	\end{equation}
	
	\item $N\left(S\left(0,d,1/2\right)\right) = N\left(S'_{\frac{1}{2}}\left(d,1/2\right)\right)=1$, for $d>0$: these multiplets have lowest weight states:
	\begin{equation}\label{lowestweightstate}
		\ket{\phi} = {\beta^{\dagger}}^{d-1}\Psi^1\Psi^2\ket{0} \propto {\beta^{\dagger}}^{d-1}\,dz\, e^{- K}
	\end{equation}
	And are $\mathfrak{su}(2)$ singlets. It is easy to check this is a lowest weight state using $L_- = -\frac{1}{2}\beta\tilde{\beta}$ and the geometric form of $J_-$ and $S, \tilde{S}, S^{I}$ and $\tilde{S}^{I}$. We can check this is annihilated by $q^{1+}$ and:
	\begin{equation}
		q^{2+} = - i \left(\partial-\partial K \wedge \right)
	\end{equation}
\end{itemize}
Therefore the index is in agreement with explicit constructions. It is easy now to compute the index for $\mathbb{C}^n$ by repeating the above with an extra index $I=1,...,n$ indicating each copy of $\mathbb{C}$. The index for $\mathbb{C}^n$ is then just the $n^{th}$ power of the index on $\mathbb{C}$ since the operators decompose into $\mathbb{C}$ blocks and the Hilbert space is the $n^{th}$ tensor power of that of $\mathbb{C}$. The vacuum, bosonic, fermionic contributions are all easily seen to individually be the $n^{th}$ power of those of $\mathbb{C}$, and the index is:
\begin{equation}
	\mathcal{Z}_{\mathbb{C}^n}(\tau,\tilde{y}) = \left(-\frac{(1-\tilde{y}\tau)}{\tilde{y}^{\frac{1}{2}}(1-\tau)}\right)^{n}
\end{equation}
and we can easily check agreement by constructing representations explicitly, which we neglect to do here in the interest of brevity.

\bibliographystyle{JHEP}
\bibliography{doreyzhang2019}

\providecommand{\href}[2]{#2}\begingroup\raggedright\begin{thebibliography}{10}

\bibitem{deAlfaro}
V.~de~Alfaro, S.~Fubini and G.~Furlan, \emph{{Conformal Invariance in Quantum
  Mechanics}}, \href{https://doi.org/10.1007/BF02785666}{\emph{Nuovo Cim.}
  {\bfseries A34} (1976) 569}.

\bibitem{fubini}
S.~Fubini and E.~Rabinovici, \emph{Superconformal quantum mechanics},
  \href{https://doi.org/https://doi.org/10.1016/0550-3213(84)90422-X}{\emph{Nuclear
  Physics B} {\bfseries 245} (1984) 17 }.

\bibitem{GKgeometry}
J.~P. Gauntlett, D.~Martelli and J.~Sparks, \emph{{Fibred GK geometry and
  supersymmetric $AdS$ solutions}},
  \href{https://doi.org/10.1007/JHEP11(2019)176}{\emph{JHEP} {\bfseries 11}
  (2019) 176} [\href{https://arxiv.org/abs/1910.08078}{{\ttfamily
  1910.08078}}].

\bibitem{Michelson}
J.~Michelson and A.~Strominger, \emph{{The Geometry of (super)conformal quantum
  mechanics}}, \href{https://doi.org/10.1007/PL00005528}{\emph{Commun. Math.
  Phys.} {\bfseries 213} (2000) 1}
  [\href{https://arxiv.org/abs/hep-th/9907191}{{\ttfamily hep-th/9907191}}].

\bibitem{doreybarns-graham}
A.~E. {Barns-Graham} and N.~{Dorey}, \emph{{A Superconformal Index for
  HyperK\"ahler Cones}}, {\emph{arXiv e-prints} (2018) arXiv:1812.04565}
  [\href{https://arxiv.org/abs/1812.04565}{{\ttfamily 1812.04565}}].

\bibitem{ABS}
O.~Aharony, M.~Berkooz and N.~Seiberg, \emph{{Light cone description of (2,0)
  superconformal theories in six-dimensions}},
  \href{https://doi.org/10.4310/ATMP.1998.v2.n1.a5}{\emph{Adv. Theor. Math.
  Phys.} {\bfseries 2} (1998) 119}
  [\href{https://arxiv.org/abs/hep-th/9712117}{{\ttfamily hep-th/9712117}}].

\bibitem{sparksmartelliyauvolume}
D.~{Martelli}, J.~{Sparks} and S.-T. {Yau}, \emph{{Sasaki-Einstein Manifolds
  and Volume Minimisation}},
  \href{https://doi.org/10.1007/s00220-008-0479-4}{\emph{Communications in
  Mathematical Physics} {\bfseries 280} (2008) 611}
  [\href{https://arxiv.org/abs/hep-th/0603021}{{\ttfamily hep-th/0603021}}].

\bibitem{Romelsberger:2005eg}
C.~Romelsberger, \emph{{Counting chiral primaries in N = 1, d=4 superconformal
  field theories}},
  \href{https://doi.org/10.1016/j.nuclphysb.2006.03.037}{\emph{Nucl. Phys.}
  {\bfseries B747} (2006) 329}
  [\href{https://arxiv.org/abs/hep-th/0510060}{{\ttfamily hep-th/0510060}}].

\bibitem{4dscindex}
J.~Kinney, J.~M. Maldacena, S.~Minwalla and S.~Raju, \emph{{An Index for 4
  dimensional super conformal theories}},
  \href{https://doi.org/10.1007/s00220-007-0258-7}{\emph{Commun. Math. Phys.}
  {\bfseries 275} (2007) 209}
  [\href{https://arxiv.org/abs/hep-th/0510251}{{\ttfamily hep-th/0510251}}].

\bibitem{crewdoreyzhang}
S.~Crew, N.~Dorey and D.~Zhang, \emph{{Factorisation of 3d $\mathcal{N}=4$
  Twisted Indices and the Geometry of Vortex Moduli Space}},
  \href{https://arxiv.org/abs/2002.04573}{{\ttfamily 2002.04573}}.

\bibitem{singletonexterioralgebra}
A.~{Singleton}, \emph{{Superconformal quantum mechanics and the exterior
  algebra}}, \href{https://doi.org/10.1007/JHEP06(2014)131}{\emph{Journal of
  High Energy Physics} {\bfseries 2014} (2014) 131}
  [\href{https://arxiv.org/abs/1403.4933}{{\ttfamily 1403.4933}}].

\bibitem{doreysingleton}
N.~{Dorey} and A.~{Singleton}, \emph{{An Index for Superconformal Quantum
  Mechanics}}, {\emph{arXiv e-prints} (2018) arXiv:1812.11816}
  [\href{https://arxiv.org/abs/1812.11816}{{\ttfamily 1812.11816}}].

\bibitem{andrewthesis}
A.~Singleton, \emph{{The Geometry and Representation Theory of Superconformal
  Quantum Mechanics. PhD Thesis}}, {\emph{University of Cambridge} (2016) }
  [\href{https://arxiv.org/abs/https://www.repository.cam.ac.uk/handle/1810/260821}{{\ttfamily
  https://www.repository.cam.ac.uk/handle/1810/260821}}].

\bibitem{gibbonsrychenkova}
G.~W. {Gibbons} and P.~{Rychenkova}, \emph{{Cones, tri-Sasakian structures and
  superconformal invariance}},
  \href{https://doi.org/10.1016/S0370-2693(98)01287-8}{\emph{Physics Letters B}
  {\bfseries 443} (1998) 138}
  [\href{https://arxiv.org/abs/hep-th/9809158}{{\ttfamily hep-th/9809158}}].

\bibitem{orneaverbitsky}
L.~{Ornea} and M.~{Verbitsky}, \emph{{Embeddings of compact Sasakian
  manifolds}}, {\emph{arXiv Mathematics e-prints} (2006) math/0609617}
  [\href{https://arxiv.org/abs/math/0609617}{{\ttfamily math/0609617}}].

\bibitem{collinsthesis}
T.~Collins, \emph{{Canonical Metrics in Sasakian Geometry. PhD Thesis}},
  {\emph{Columbia University} (2014) }.

\bibitem{dobrevpetkova}
V.~K. Dobrev and V.~B. Petkova, \emph{{All Positive Energy Unitary Irreducible
  Representations of Extended Conformal Supersymmetry}},
  \href{https://doi.org/10.1016/0370-2693(85)91073-1}{\emph{Phys. Lett.}
  {\bfseries 162B} (1985) 127}.

\bibitem{wittenconstraintsonsusybreaking}
E.~Witten, \emph{{Constraints on Supersymmetry Breaking}},
  \href{https://doi.org/10.1016/0550-3213(82)90071-2}{\emph{Nucl. Phys.}
  {\bfseries B202} (1982) 253}.

\bibitem{lermanconctacttoricmanifolds}
E.~Lerman, \emph{Contact toric manifolds},
  \href{https://doi.org/10.4310/JSG.2001.v1.n4.a6}{\emph{J. Symplectic Geom.}
  {\bfseries 1} (2001) }.

\bibitem{grauertriemenschneider}
H.~Grauert and O.~Riemenschneider, \emph{{Verschwindungss{\"a}tze f{\"u}r
  analytische Kohomologiegruppen auf komplexen R{\"a}umen}},
  \href{https://doi.org/10.1007/BF01403182}{\emph{Inventiones mathematicae}
  {\bfseries 11} (1970) 263}.

\bibitem{takegoshi}
K.~Takegoshi, \emph{Relative vanishing theorems in analytic spaces},
  \href{https://doi.org/10.1215/S0012-7094-85-05215-9}{\emph{Duke Math. J.}
  {\bfseries 52} (1985) 273}.

\bibitem{pestunlocalisationingeometry}
V.~Pestun, \emph{{Review of localization in geometry}},
  \href{https://doi.org/10.1088/1751-8121/aa6161}{\emph{J. Phys.} {\bfseries
  A50} (2017) 443002} [\href{https://arxiv.org/abs/1608.02954}{{\ttfamily
  1608.02954}}].

\bibitem{pinglirigiditydolbeault}
P.~Li, \emph{{The rigidity of Dolbeault-type operators and symplectic circle
  actions}}, {\emph{Proceedings of the American Mathematical Society}
  {\bfseries 140} (2012) 1987}.

\bibitem{FeldmanHirzebruchgenus}
K.~E. Fel'dman, \emph{{Hirzebruch genus of a manifold supporting a Hamiltonian
  circle action}},
  \href{https://doi.org/10.1070/rm2001v056n05abeh000446}{\emph{Russian
  Mathematical Surveys} {\bfseries 56} (2001) 978}.

\bibitem{nakajimalectures}
H.~Nakajima and K.~Yoshioka, \emph{Lectures on instanton counting},  in
  \emph{CRM Proceedings and Lecture Notes 38 Amer, Math. Soc}, 2004.

\bibitem{Coeveringcrepantresolutions}
C.~van Coevering, \emph{{Ricci-flat K{\"a}hler metrics on crepant resolutions
  of K{\"a}hler cones}},
  \href{https://doi.org/10.1007/s00208-009-0446-1}{\emph{Mathematische Annalen}
  {\bfseries 347} (2010) 581}.

\bibitem{Coeveringcrepantresolutions3}
C.~van Coevering, \emph{{Regularity of asymptotically conical Ricci-flat
  Kähler metrics}},  \href{https://arxiv.org/abs/0912.3946}{{\ttfamily
  0912.3946}}.

\bibitem{goto}
R.~Goto, \emph{{Calabi-Yau structures and Einstein-Sasakian structures on
  crepant resolutions of isolated singularities}},
  \href{https://arxiv.org/abs/0906.5191}{{\ttfamily 0906.5191}}.

\bibitem{CoeveringCrepantResolutions2}
C.~van Coevering, \emph{{Examples of asymptotically conical Ricci-flat Kähler
  manifolds}},
  \href{https://doi.org/10.1007/s00209-009-0631-7}{\emph{Mathematische
  Zeitschrift} {\bfseries 267} (2009) 465–496}
  [\href{https://arxiv.org/abs/0812.4745}{{\ttfamily 0812.4745}}].

\bibitem{Coxtoricvarietiesandtoricresolutions}
D.~A. Cox, \emph{Toric Varieties and Toric Resolutions}, pp.~259--284.
\newblock Birkh{\"a}user Basel, Basel, 2000.

\bibitem{futaki}
A.~Futaki, H.~Ono and G.~Wang, \emph{{Transverse Kahler geometry of Sasaki
  manifolds and toric Sasaki-Einstein manifolds}}, {\emph{J. Diff. Geom.}
  {\bfseries 83} (2009) 585}
  [\href{https://arxiv.org/abs/math/0607586}{{\ttfamily math/0607586}}].

\bibitem{kollarflops}
J.~Kollár, \emph{Flops}, {\emph{Nagoya Math. J.} {\bfseries 113} (1989) 15}.

\bibitem{roanflops}
S.-S. Roan, \emph{{On Calabi-Yau Orbifolds in Weighted Projective Space}},
  \href{https://doi.org/10.1142/S0129167X90000137}{\emph{International Journal
  of Mathematics} {\bfseries 01} (2012) }.

\bibitem{rabinowitzconvexlatticepolygons}
S.~Rabinowitz, \emph{{A Census of Convex Lattice Polygons with at most one
  Interior Lattice Point }}, {\emph{Ars Combinatoria} {\bfseries 28} (1989) }.

\bibitem{smyauamaximisation}
D.~Martelli, J.~Sparks and S.-T. Yau, \emph{{The Geometric dual of
  a-maximisation for Toric Sasaki-Einstein manifolds}},
  \href{https://doi.org/10.1007/s00220-006-0087-0}{\emph{Commun. Math. Phys.}
  {\bfseries 268} (2006) 39}
  [\href{https://arxiv.org/abs/hep-th/0503183}{{\ttfamily hep-th/0503183}}].

\bibitem{Gadde:2010en}
A.~Gadde, L.~Rastelli, S.~S. Razamat and W.~Yan, \emph{{On the Superconformal
  Index of N=1 IR Fixed Points: A Holographic Check}},
  \href{https://doi.org/10.1007/JHEP03(2011)041}{\emph{JHEP} {\bfseries 03}
  (2011) 041} [\href{https://arxiv.org/abs/1011.5278}{{\ttfamily 1011.5278}}].

\bibitem{sparksmartelliYpq}
D.~Martelli and J.~Sparks, \emph{{Toric geometry, Sasaki-Einstein manifolds and
  a new infinite class of AdS/CFT duals}},
  \href{https://doi.org/10.1007/s00220-005-1425-3}{\emph{Commun. Math. Phys.}
  {\bfseries 262} (2006) 51}
  [\href{https://arxiv.org/abs/hep-th/0411238}{{\ttfamily hep-th/0411238}}].

\bibitem{Ypqsingularities}
J.~P. Gauntlett, D.~Martelli, J.~Sparks and D.~Waldram, \emph{{Sasaki-Einstein
  metrics on $S^2$ $\times$ $S^3$}},
  \href{https://doi.org/10.4310/ATMP.2004.v8.n4.a3}{\emph{Adv. Theor. Math.
  Phys.} {\bfseries 8} (2004) 711}
  [\href{https://arxiv.org/abs/hep-th/0403002}{{\ttfamily hep-th/0403002}}].

\bibitem{nakajimaquivervarietiesandfinitedimensionalreps}
H.~Nakajima, \emph{{Quiver varieties and finite dimensional representations of
  quantum affine algebras}},  1999.

\bibitem{lusztigonquivervarieties}
G.~Lusztig, \emph{{On Quiver Varieties}},
  \href{https://doi.org/https://doi.org/10.1006/aima.1998.1729}{\emph{Advances
  in Mathematics} {\bfseries 136} (1998) 141 }.

\bibitem{nakajimahandsaw}
H.~Nakajima, \emph{{Handsaw quiver varieties and finite W-algebras}},
  {\emph{Moscow Math. J.} {\bfseries 12} (2012) 633}
  [\href{https://arxiv.org/abs/1107.5073}{{\ttfamily 1107.5073}}].

\bibitem{vorticesandvermas}
M.~Bullimore, T.~Dimofte, D.~Gaiotto, J.~Hilburn and H.-C. Kim, \emph{{Vortices
  and Vermas}}, \href{https://doi.org/10.4310/ATMP.2018.v22.n4.a1}{\emph{Adv.
  Theor. Math. Phys.} {\bfseries 22} (2018) 803}
  [\href{https://arxiv.org/abs/1609.04406}{{\ttfamily 1609.04406}}].

\bibitem{nakajimainstantoncountignonblowup1}
H.~{Nakajima} and K.~{Yoshioka}, \emph{{Instanton counting on blowup. I.
  4-dimensional pure gauge theory}},
  \href{https://doi.org/10.1007/s00222-005-0444-1}{\emph{Inventiones
  Mathematicae} {\bfseries 162} (2005) 313}
  [\href{https://arxiv.org/abs/math/0306198}{{\ttfamily math/0306198}}].

\end{thebibliography}\endgroup

\end{document}